\numberwithin{equation}{section}
\newcommand{\C}{\mathbb C}
\newcommand{\Q}{\mathbb Q}
\newcommand{\R}{\mathbb R}
\newcommand{\Z}{\mathbb Z}
\renewcommand{\H}{\mathbb H}
\newcommand{\RP}{\mathbb{RP}}
\renewcommand{\O}{\mathrm O}
\newcommand{\bbS}{\mathbb{S}}
\newtheorem{thm}[equation]{Theorem}
\newtheorem{lem}[equation]{Lemma}
\newtheorem{cor}[equation]{Corollary}
\newtheorem{prop}[equation]{Proposition}
\crefname{thm}{Theorem}{Theorems}
\crefname{lem}{Lemma}{Lemmas}
\crefname{cor}{Corollary}{Corollaries}
\theoremstyle{definition}
\newtheorem{exm}[equation]{Example}
\newtheorem{defn}[equation]{Definition}
\newtheorem{ansatz}[equation]{Ansatz}
\theoremstyle{remark}
\newtheorem{rem}[equation]{Remark}
\newcommand{\vp}{\varphi}
\newcommand{\e}{\varepsilon}
\newcommand{\inj}{\hookrightarrow}
\newcommand{\surj}{\twoheadrightarrow}
\newcommand{\id}{\mathrm{id}}
\newcommand{\pt}{\mathrm{pt}}
\newcommand{\@dpfr}[3][]{\frac{\partial^{#1} #2}{\partial #3{}^{#1}}}
\newcommand{\@spfr}[3][]{\partial^{#1} #2 / \partial #3{}^{#1}}
\newcommand{\pfr}[3][]{\mathchoice{\@dpfr[#1]{#2}{#3}}{\@dpfr[#1]{#2}{#3}}
{\@spfr[#1]{#2}{#3}}{\@spfr[#1]{#2}{#3}}}
\DeclarePairedDelimiter\paren{(}{)}
\DeclarePairedDelimiter\ang{\langle}{\rangle}
\DeclarePairedDelimiter\abs{\lvert}{\rvert}
\DeclarePairedDelimiter\bkt{[}{]}
\DeclarePairedDelimiter\set{\{}{\}}
\let\oldparen\paren
\def\paren{\@ifstar{\oldparen}{\oldparen*}}
\let\oldbkt\bkt
\def\bkt{\@ifstar{\oldbkt}{\oldbkt*}}
\newcommand{\Bord}{\mathsf{Bord}}
\newcommand{\Vect}{\mathsf{Vect}}
\newcommand{\sAlg}{\mathsf{sAlg}}
\newcommand{\term}{\textit}
\DeclareMathOperator{\Aut}{Aut}
\DeclareMathOperator{\Hom}{Hom}
\newcommand{\GL}{\mathrm{GL}}
\newcommand{\cH}{\mathcal H}
\DeclareMathOperator{\Spec}{Spec}
\DeclareMathOperator{\End}{End}
\newcommand{\MO}{\mathit{MO}}
\newcommand{\SO}{\mathrm{SO}}
\DeclareMathOperator{\Map}{Map}
\newcommand{\fo}{\mathfrak o}
\newcommand{\MTH}{\mathit{MTH}}
\newcommand{\MTO}{\mathit{MTO}}
\newcommand{\cC}{\mathcal{C}}
\newcommand{\Sb}{S_b}
\newcommand{\Snb}{S_{\mathit{nb}}}
\newcommand{\Pin}{\mathrm{Pin}}
\newcommand{\ko}{\mathit{ko}}
\newcommand{\KO}{\mathit{KO}}
\newcommand{\MSpin}{\mathit{MSpin}}
\newcommand{\MPin}{\mathit{MPin}}
\newcommand{\Cl}{C\ell}
\newcommand{\Spin}{\mathrm{Spin}}
\newcommand{\AB}{\mathit{AB}}
\newcommand{\loc}{\mathrm{loc}}
\newcommand{\MTPin}{\mathit{MTPin}}
\newcommand{\MTSpin}{\mathit{MTSpin}}
\newcommand{\hS}{h\mathcal{S}}
\newcommand{\hSp}{h\mathcal{S}p}
\newcommand{\Top}{\mathcal{T}op}
\newcommand{\Alg}{\mathsf{Alg}}
\newcommand{\ls}[2]{\leftidx{^{#1}}{#2}{}}
\newcommand{\pinm}{pin\textsuperscript{$-$}\xspace}
\newcommand{\pinp}{pin\textsuperscript{$+$}\xspace}
\newcommand{\pinpm}{pin\textsuperscript{$\pm$}\xspace}
\newcommand{\spin}{spin\xspace}
\newcommand{\sVect}{\mathsf{sVect}}
\newcommand{\MH}{\mathit{MH}}
\newcommand{\ClHam}{\Cl(M_0', \fo)}
\begin{document}

\title{The Arf-Brown TQFT of pin\textsuperscript{$-$} surfaces}
\author{Arun Debray}
\author{Sam Gunningham}
\date{\today}

\begin{abstract}
The Arf-Brown invariant $\AB(\Sigma)$ is an 8th root of unity associated to a surface $\Sigma$ equipped with a $\Pin^-$ structure. In this note we investigate a certain fully extended, invertible, topological quantum field theory (TQFT) whose partition function is the Arf-Brown invariant. Our motivation comes from the recent work of Freed-Hopkins on the classification of topological phases, of which the Arf-Brown TQFT provides a nice example of the general theory; physically, it can be thought of as the low energy effective theory of the Majorana chain, or as the anomaly theory of a free fermion in 1 dimension. 
\end{abstract}
\maketitle

\tableofcontents 

\section{Introduction}
As part of a general program to classify and understand topological phases of matter within condensed-matter
physics, there a large body of recent work focusing on the special case of symmetry-protected topological (SPT)
phases. This classification question has been studied by many authors in different settings and with many different
approaches: for lists of references, see~\cite[\S1]{GJF17} and~\cite[\S9.3]{FH16}. It is believed that the
low-energy physics of SPT phases is often described by invertible topological quantum field theories (TQFTs), which
admit a purely mathematical classification, and that the classification of a given class of SPTs often agrees with
the classification of the analogous class of invertible TQFTs. At the same time, work on the mathematical theory of
invertible TQFTs has understood their classification as a problem in stable homotopy theory~\cite{GMTW, FH16,
SP17}. Freed-Hopkins~\cite{FH16} use this to answer the classification problem across a wide range of dimensions
and symmetry types.

In this paper, we explain this perspective on classifying invertible TQFTs and SPT phases in a specific setting,
focusing on 2-dimensional theories formulated on manifolds with a \pinm structure. Freed-Hopkins show that the
group of deformation classes of 2D invertible \pinm{} TQFTs is isomorphic to $\Z/8$, and is generated by a TQFT
$Z_\AB$ whose partition function is the Arf-Brown invariant of a \pinm surface, a generalization of the Arf
invariant of a spin surface.

In \S\ref{arf_brown_invariant}, we provide three definitions for the Arf-Brown invariant, and compare each to an
analogous definition of the Arf invariant: in \S\ref{intersection_theoretic}, the original intersection-theoretic
description due to Brown~\cite{Bro71}; in \S\ref{index_theoretic}, an index-theoretic description due to
Zhang~\cite{Zhang1, Zhang2}; and in \S\ref{KO_theoretic}, a new description using a twisted Atiyah-Bott-Shapiro
map.

Then, in \S\ref{invertible_TQFTs}, we discuss how the classification of 2D invertible TQFTs reduces to a homotopy-theoretic
problem. The fact that we're in dimension 2 allows for an explicit description of the 2-categories and homotopy
2-types that enter this argument, which are more complicated in higher dimensions. Moreover, some aspects of the
story, such as the choice of a target category and the stable homotopy hypothesis, are understood in dimension 2
but not in higher dimensions. In \S\ref{2D_generalities}, we discuss some generalities of invertible 2D TQFTs, and
in \S\ref{hom_hyp} discuss the stable homotopy hypothesis in dimension 2. We use this in \S\ref{classifying_2D} to
classify 2D invertible TQFTs of a given symmetry type.

In \S\ref{arf_brown_TQFT}, we apply this to \pinm theories: the homotopy-theoretic approach to invertible TQFTs and
the $\KO$-theoretic description of the Arf-Brown invariant combine to define the Arf-Brown TQFT $Z_\AB$. We discuss
what this theory assigns to closed \pinm $0$-, $1$-, and $2$-manifolds and how it relates to the twisted
Atiyah-Bott-Shapiro orientation.

Finally, in \S\ref{TR_Majorana}, we discuss a conjectural appearance of the Arf-Brown TQFT in physics, as the
low-energy theory of the Majorana chain. In \S\ref{physics_overview}, we provide some background on SPTs and the low-energy
approach to their classification. We then formulate the Majorana chain on an arbitrary compact \pinm 1-manifold in
\S\ref{Majdefn}, and discuss its low-energy TQFT and how it relates to the Arf-Brown TQFT in \S\ref{Majlow}. We
find that the space of ground states of the Majorana chain depends on a \pinm structure, which is expected, but
doesn't appear to have been determined before.

We also provide some preliminaries on Clifford algebras and pin manifolds in \S\ref{pin_structures}, and on the
stable homotopy theory that we use in \S\ref{homotopy_theory}.

\subsection*{Acknowledgments} 
We thank Dan Freed for many helpful conversations, as well as the organizers of the NSF-CBMS conference, David
Ayala and Ryan Grady.

\section{Preliminaries}

\subsection{Clifford algebras, pin groups, and pin structures}
\label{pin_structures}
Pin structures are generalizations of spin structures to unoriented vector bundles and manifolds. In this section,
we define the pin groups and state a few useful results about them. For proofs and a more detailed exposition,
see~\cite{ABS}.

\begin{defn}
\label{Cliffalg}
Let $k$ be a field of characteristic not equal to 2, $S$ be a finite set, and $\fo\colon S\to\set{\pm 1}$ be a
function. The \term{Clifford algebra} $\Cl(k, S,\fo)$ is defined to be the $k$-algebra
\begin{equation}
\label{cliffdefn}
	\Cl(k, S, \fo) \coloneqq T(k[S])/(s^2 = \fo(s), st = -ts\mid s,t\in S, s\ne t),
\end{equation}
where $T(k[S])$ denotes the tensor algebra of the space of functions $S\to k$, and we identify $s$ with the
function equal to $1$ at $s$ and $0$ elsewhere.

For $S \coloneqq \set{1,\dotsc,m}\cup\set{-1,\dotsc,-n}$ and $\fo(x) \coloneqq \operatorname{sign}(x)$, we'll write
$\Cl_{m,n}(k)\coloneqq\Cl(k,S,\fo)$, as well as $\Cl_n(k)\coloneqq\Cl_{n,0}(k)$ and
$\Cl_{-n}(k)\coloneqq\Cl_{0,n}(k)$. If $k = \C$, we'll suppress $\C$ from the notation, e.g.\ writing $\Cl_{m,n}$,
$\Cl_n$, and $\Cl_{-n}$.
\end{defn}
The ideal in the quotient in~\eqref{cliffdefn} contains only even-degree elements of the tensor algebra, so the
Clifford algebras are $\Z/2$-graded algebras, or \term{superalgebras}. If $a$ is a homogeneous element in a
$\Z/2$-graded algebra or module, we will let $\abs a\in\Z/2$ denote its degree.
\begin{lem}[{\cite[Proposition 1.6]{ABS}}]
Let $S_1$ and $S_2$ be finite sets and $\fo_i\colon S_i\to\set{\pm 1}$ be functions. If $\fo\colon S_1\amalg
S_2\to\set{\pm 1}$ is $\fo_i$ on $S_i$, then there is a canonical isomorphism
\[\Cl(k, S_1, \fo_1)\otimes_k\Cl(k, S_2, \fo_2)\cong \Cl(k, S_1\amalg S_2, \fo).\]
\end{lem}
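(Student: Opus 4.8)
The plan is to use the universal property implicit in the presentation~\eqref{cliffdefn}: a $\Z/2$-graded $k$-algebra homomorphism $\Cl(k,S,\fo)\to A$ into a superalgebra $A$ is the same data as a $k$-linear map $f\colon k[S]\to A$ landing in degree $1$ with $f(s)^2=\fo(s)\cdot 1$ for all $s\in S$ and $f(s)f(t)=-f(t)f(s)$ for $s\ne t$. I would pair this with the universal property of the \emph{super} tensor product $A\otimes_k B$ of superalgebras (which is what $\otimes_k$ must mean here --- with the ordinary tensor product the claim already fails for $\abs{S_1}=\abs{S_2}=1$, since then the right-hand side is noncommutative while the left is not): graded algebra maps $A\otimes_k B\to C$ correspond to pairs of graded algebra maps $g\colon A\to C$, $h\colon B\to C$ with graded-commuting images, $g(a)h(b)=(-1)^{\abs a\abs b}h(b)g(a)$ on homogeneous elements, the multiplication on $A\otimes_k B$ being $(a\otimes b)(c\otimes d)=(-1)^{\abs b\abs c}(ac)\otimes(bd)$.

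Then I would write down two maps. For $\Phi\colon\Cl(k,S_1\amalg S_2,\fo)\to\Cl(k,S_1,\fo_1)\otimes_k\Cl(k,S_2,\fo_2)$, send $s\in S_1$ to $s\otimes 1$ and $s\in S_2$ to $1\otimes s$, and check the Clifford relations in the target. The squares are $(s\otimes1)^2=\fo_1(s)(1\otimes1)$ and $(1\otimes s)^2=\fo_2(s)(1\otimes1)$, with no sign since the complementary factor is the even element $1$; distinct $s,t\in S_1$ anticommute because $st=-ts$ in $\Cl(k,S_1,\fo_1)$, and similarly within $S_2$; and for $s\in S_1$, $t\in S_2$ the multiplication rule gives $(1\otimes t)(s\otimes1)=(-1)^{\abs t\abs s}\,(s\otimes t)=-(s\otimes1)(1\otimes t)$, as required. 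Conversely, the presentation furnishes graded algebra maps $\Cl(k,S_i,\fo_i)\to\Cl(k,S_1\amalg S_2,\fo)$ on generators, and their images graded-commute since $st=-ts=(-1)^{\abs s\abs t}ts$ for $s\in S_1$, $t\in S_2$ is a defining relation on the right; the universal property of $\otimes_k$ then produces a map $\Psi$ in the opposite direction.

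To finish I would observe that $\Phi\circ\Psi$ and $\Psi\circ\Phi$ are graded algebra endomorphisms fixing all generators, hence are identities, so $\Phi$ is an isomorphism; it is canonical because it is defined on generators with no choices made.

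I expect the only real subtlety to be the Koszul-sign bookkeeping: one has to use the super tensor product throughout and confirm that its defining rule --- which forces elements of the two tensor factors to graded-commute --- supplies exactly the anticommutation between $S_1$ and $S_2$ demanded by the Clifford presentation of $\Cl(k,S_1\amalg S_2,\fo)$. Once the grading conventions are pinned down, the rest is a routine check on generators. (One could instead finish via $\dim_k\Cl(k,S,\fo)=2^{\abs S}$, making the evident surjection $\Phi$ an isomorphism of equidimensional vector spaces, but constructing $\Psi$ is cleaner and sidesteps the dimension count.)
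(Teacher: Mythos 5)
Your proof is correct and is essentially the argument of the cited reference (the paper itself gives no proof, deferring to \cite[Proposition 1.6]{ABS}): define mutually inverse graded algebra maps on generators via the universal properties, with the Koszul sign in the graded tensor product supplying exactly the anticommutation between generators from $S_1$ and from $S_2$. One small point in your favor: the sign $(-1)^{\abs{b}\abs{c}}$ you use in $(a\otimes b)(c\otimes d)=(-1)^{\abs{b}\abs{c}}(ac)\otimes(bd)$ is the correct convention, whereas the formula displayed in the paper just after the lemma, with sign $(-1)^{\abs{b'}\abs{a'}}$, is a typo for $(-1)^{\abs{b}\abs{a'}}$.
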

For this to be true, we must use the graded tensor product, whose multiplication contains a sign: if $a,b,a',b'$
are homogeneous elements, then
\begin{equation}
	(a\otimes b)\cdot (a'\otimes b') = (-1)^{\abs{b'}\abs{a'}}aa'\otimes bb'.
\end{equation}
Let $\alpha\in\End(\Cl(k, S, \fo))$ be the \term{grading operator}, whose action on a homogeneous element $a$ is
multiplication by $(-1)^{\abs a}$.
\begin{defn}
The \term{Clifford group} is
\[\Gamma(k, S, \fo)\coloneqq\set{x\in\Cl(k, S, \fo)^\times\mid \alpha(x) y x^{-1}\in
k[S]\subset\Cl(k, S, \fo)\text{ for all } y\in k[S]}.\]
\end{defn}
Here we use the canonical map $k[S]\inj T(k[S])\surj\Cl(k,S,\fo)$, which is injective.
\begin{defn}
There is an involution $\beta\colon \Cl(k, S, \fo)\to\Cl(k, S, \fo)$ induced from the map $\widetilde \beta\colon
T(k[S])\to T(k[S])$ sending a homogeneous element
\[f_1\otimes\dotsb\otimes f_n\mapsto f_n\otimes\dotsb\otimes f_1.\]
The \term{Clifford norm} $N\colon\Gamma(k, S, \fo)\to k^\times$ is defined by $N(x)\coloneqq\beta(x)\cdot x$.
\end{defn}
\begin{defn}
The \term{pin group} $\Pin(k, S, \fo)$ associated to the Clifford algebra in \cref{Cliffalg} is the kernel of the
Clifford norm. The \term{spin group} $\Spin(k, S, \fo)$ is the subgroup of $\Pin(k, S, \fo)$ which is even in the
grading on the Clifford algebra.
\end{defn}
We are interested in the case where $k = \R$, so that the pin and spin groups are Lie groups. If we specialize to
$\Cl_{\pm n}(\R)$, they're compact Lie groups;
\begin{defn}
Let $\Pin_n^+$ denote the pin group associated to $\Cl_n(\R)$ and $\Pin_n^-$ denote the pin group associated to
$\Cl_{-n}(\R)$. The corresponding spin groups are canonically isomorphic, so we denote either one by $\Spin_n$.
\end{defn}
\begin{prop}
Let $\Pin_n^\pm$ denote either of $\Pin_n^+$ or $\Pin_n^-$. Then, there are group extensions
\begin{subequations}
\begin{align}
	\xymatrix@1{
		1\ar[r] & \Spin_n\ar[r] & \Pin_n^\pm\ar[r]^-{\pi_0} & \Z/2\ar[r] & 1
	}\\
	\label{to_On}
	\xymatrix@1{
		1\ar[r] & \Z/2\ar[r] & \Pin_n^\pm\ar[r]^\rho & \O_n\ar[r] & 1.
	}
\end{align}
\end{subequations}
\end{prop}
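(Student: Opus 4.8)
The plan is to obtain both extensions from the \emph{twisted adjoint representation} of the Clifford group. Write $n=\abs S$ and define $\rho\colon\Gamma(\R,S,\fo)\to\GL(\R[S])$ by $\rho(x)(v)\coloneqq\alpha(x)\,v\,x^{-1}$. By the condition defining $\Gamma$ this is valued in $\GL(\R[S])$, and it is a group homomorphism since $\alpha$ is an algebra automorphism. The first step is to check that $\rho$ lands in $\O_n\coloneqq\O(\R[S],\fo)$: the Clifford norm $N$ is multiplicative on $\Gamma$, commutes with $\alpha$, and restricts to the identity on scalars, so $N\paren{\rho(x)(v)}=N(\alpha(x))\,N(x)^{-1}\,N(v)=N(v)$; since $N$ restricts on $\R[S]$ to the quadratic form $v\mapsto v^2$, the map $\rho(x)$ preserves it and hence lies in $\O_n$.

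Second, I would pin down the image and kernel of $\rho$. For a unit vector $v$, a short computation with the Clifford relations gives $\rho(v)(w)=-v\,w\,v^{-1}=w-\frac{2\ang{v,w}}{v^2}\,v$, the orthogonal reflection of $w$ across $v^\perp$; since $\O_n$ is generated by reflections (Cartan--Dieudonn\'e), $\rho$ is surjective. For the kernel, the relation $\alpha(x)v=vx$ for all $v\in\R[S]$, after splitting $x=x_0+x_1$ into even and odd parts and matching parities, says that $x_0$ commutes and $x_1$ anticommutes with all of $\R[S]$; inspecting the monomial basis of $\Cl(\R,S,\fo)$ shows that the only elements commuting with all of $\R[S]$ are even and central (hence scalar), and that there is no nonzero odd element anticommuting with all of $\R[S]$, so $x_1=0$ and $x_0\in\R^\times$. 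Hence $\ker\rho=\R^\times$, and combining with surjectivity, every $x\in\Gamma$ equals $\lambda\,v_1\dotsm v_k$ for some $\lambda\in\R^\times$ and unit vectors $v_i$; in particular $x$ is homogeneous of degree $k\bmod 2$.

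Third, restrict to $\Pin_n^\pm\subset\Gamma$. Every unit vector lies in $\Pin_n^\pm$ (with the sign convention for the Clifford norm appropriate to $\Pin^-$; cf.\ \cite{ABS}), so $\rho\colon\Pin_n^\pm\to\O_n$ remains surjective, and its kernel is $\ker\rho\cap\Pin_n^\pm=\set{\lambda\in\R^\times:\lambda^2=1}=\set{\pm1}\cong\Z/2$; this is the extension~\eqref{to_On}. The other extension then comes from post-composing with $\det$: since $\det\paren{\rho(\lambda\,v_1\dotsm v_k)}=(-1)^k=(-1)^{\abs x}$, the homomorphism $\det\circ\rho\colon\Pin_n^\pm\to\Z/2$ is exactly the parity map $\pi_0$; it is surjective because $\rho$ is onto $\O_n$ and $\det$ is surjective, and its kernel is the even part of $\Pin_n^\pm$, which is $\Spin_n$ by definition.

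The reflection formula and the norm bookkeeping are routine. The main obstacle is the structure of the Clifford group used in the second step: the surjectivity of $\rho$ (via Cartan--Dieudonn\'e) and, especially, the identification $\ker\rho=\R^\times$ — that only scalars act trivially under twisted conjugation — which one either works out from the monomial basis or cites from \cite{ABS}. A subsidiary point needing care is verifying, separately in the $\Pin^+$ and $\Pin^-$ cases, that enough unit vectors actually lie in $\Pin_n^\pm$ to make $\rho|_{\Pin_n^\pm}$ surjective, since this is precisely where the sign of $\fo$ enters.
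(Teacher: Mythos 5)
Your argument is correct, and it is essentially the standard twisted-adjoint proof from \cite{ABS} --- which is exactly the reference the paper defers to, since the proposition is stated here without proof. Two caveats are worth recording. First, the wrinkle you flag at the end is genuine: with the paper's literal definitions, $N(x)=\beta(x)\cdot x$ and $\Pin=\ker N$, a unit vector $v$ in $\Cl_{-n}(\R)$ satisfies $N(v)=v^2=-1$, so it does \emph{not} lie in $\ker N$, and the surjectivity of $\rho|_{\Pin_n^-}$ would fail. One must read the definition as $N^{-1}(\set{\pm 1})\cap\Gamma$, equivalently the subgroup of $\Gamma$ generated by unit vectors, as in \cite{ABS}; with that reading your surjectivity argument goes through and the kernel is still $\R^\times\cap\Pin_n^\pm=\set{\pm 1}$. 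Second, a minor imprecision in the kernel computation: the claim that ``the only elements commuting with all of $\R[S]$ are even'' is false when $\abs{S}$ is odd (the volume element $e_1\dotsm e_n$ is odd and commutes with every generator). But your parity splitting already reduces the question to the even part $x_0$, and the only \emph{even} elements commuting with all generators are the scalars, so the conclusion $\ker\rho=\R^\times$ stands. Finally, note that multiplicativity of $N$ on $\Gamma$ presupposes the lemma that $N(\Gamma)\subseteq\R^\times$, which you are implicitly citing from \cite{ABS} along with the rest of the structure theory of $\Gamma$; granting that, the norm bookkeeping, the reflection formula, Cartan--Dieudonn\'e, and the identification of $\det\circ\rho$ with the parity map (with kernel $\Spin_n$) are all as in the standard argument.
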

Let $\rho\colon H\to G$ be a homomorphism of Lie groups and $\pi\colon P\to M$ be a principal $G$-bundle. Recall
that a \term{reduction of the structure group} of $P$ to $H$ is data $(\pi'\colon Q\to M,\theta)$ such that
\begin{itemize}
	\item $\pi'\colon Q\to M$ is a principal $H$-bundle, and
	\item $\theta\colon Q\times_H G \to P$ is an isomorphism of principal $G$-bundles, where $H$ acts on $G$
	through $\rho$.
\end{itemize}
An equivalence of reductions $(Q_1,\theta_1)\to(Q_2,\theta_2)$ is a map $\psi\colon Q_1\to Q_2$ intertwining
$\theta_1$ and $\theta_2$.
\begin{defn}
If $\rho\colon H\to\GL_n(\R)$ is a Lie group homomorphism, an \term{$H$-structure} on a vector bundle $E\to X$ is
an equivalence class of reductions of the structure group of the principal $\GL_n(\R)$-bundle of frames of $E$ to
$H$. If $M$ is a smooth manifold and $E = TM$, this is called a \term{tangential $H$-structure} on $M$; if $M$ is a
smooth manifold and $E$ is its stable normal bundle, this is called a \term{normal $H$-structure}.
\end{defn}
For example, an $\SO_n$-structure is the same thing as an orientation. A \term{spin structure} on an $n$-manifold
$M$ is a tangential $\Spin_n$ structure, and we define \pinp and \pinm{} structures analogously.
\begin{rem} 
We note that such structures are \emph{stable} in the following sense: a (s)\pinpm-structure on a vector bundle $V$
is equivalent to a (s)\pinpm-structure on $V\oplus \underline{\R}$. In particular, a stable framing on a vector
bundle (a trivialization of $V \oplus \underline{\R}^N$ for some $N$) determines a (s)\pinpm-structure.
\end{rem}

\begin{prop}[{\cite[Lemma 1.3]{KT90}}]
\label{pinSW}
Let $E\to X$ be a vector bundle and $w_n(E)\in H^n(X;\Z/2)$ denote its $n^{\mathrm{th}}$ Stiefel-Whitney class.
\begin{itemize}
	\item $E$ admits a spin structure iff $w_1(E) = 0$ and $w_2(E) = 0$.
	\item $E$ admits a \pinp structure iff $w_2(E) = 0$.
	\item $E$ admits a \pinm structure iff $w_2(E) + w_1(E)^2 = 0$.
\end{itemize}
In all cases, if $E$ admits one of these structures, the set of such structures (in the spin case, with fixed orientation) on $E$ is an
$H^1(X;\Z/2)$-torsor.
\end{prop}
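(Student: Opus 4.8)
The plan is to treat all three structures uniformly through the obstruction theory for reductions of structure group along a central $\Z/2$-extension, and then to identify the relevant extension classes by a naturality argument.

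First, recall the general principle. A central extension $1\to\Z/2\to\tilde G\to G\to 1$ is classified by a class $e\in H^2(BG;\Z/2)$, the $k$-invariant of the fibration $B\Z/2\to B\tilde G\to BG$; given a principal $G$-bundle on $X$ with classifying map $f\colon X\to BG$, a reduction of its structure group to $\tilde G$ exists if and only if $f^*e=0$ in $H^2(X;\Z/2)$, and when the set of such reductions up to equivalence is nonempty it is a torsor over $[X,B\Z/2]=H^1(X;\Z/2)$, acted on through the fiberwise $B\Z/2$-action on $B\tilde G$. Since $\GL_n(\R)$ deformation retracts onto $\O_n$, for $H$ any of $\SO_n$, $\Spin_n$, $\Pin_n^\pm$ a reduction of the structure group of the frame bundle of $E\to X$ to $H$ is the same as a lift of the classifying map $f_E\colon X\to B\GL_n(\R)\simeq B\O_n$ along $BH\to B\O_n$, and $f_E$ pulls the universal class $w_i$ back to $w_i(E)$.

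Now organize the three cases. A \pinpm structure is a reduction from $\O_n$ to $\Pin_n^\pm$ directly, since $\rho$ in \eqref{to_On} is surjective; hence it exists iff $f_E^*e^\pm=0$, where $e^\pm\in H^2(B\O_n;\Z/2)$ is the extension class of \eqref{to_On}. A spin structure is a reduction from $\GL_n(\R)$ to $\Spin_n$, which factors as a reduction to $\SO_n$ followed by a lift along $\Spin_n\to\SO_n$; the first step is obstructed by $w_1(E)$, and, after fixing an orientation, the second by the pullback of the extension class of $\Spin_n\to\SO_n$. Over $B\SO_n$ the group $H^2(B\SO_n;\Z/2)$ is one-dimensional, spanned by $w_2$, and this class is nonzero since the extension is nontrivial (for instance $\Spin_2\to\SO_2$ is the connected double cover), so it equals $w_2$. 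In each case the torsor statements are the general principle applied verbatim, with the orientation held fixed in the spin case.

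It remains to identify $e^\pm$, which is the main step. For $n\ge 2$ the group $H^2(B\O_n;\Z/2)$ is two-dimensional with basis $\{w_1^2,w_2\}$ (for $n=1$ the class $w_2$ vanishes identically and the criteria below degenerate consistently), so write $e^\pm=a^\pm w_1^2+b^\pm w_2$ and pin down $a^\pm,b^\pm\in\Z/2$ by restricting \eqref{to_On} along two subgroups. Along $\SO_n\hookrightarrow\O_n$: since $\Spin_n=\rho^{-1}(\SO_n)$ is the even part of $\Pin_n^\pm$, the extension restricts to $\Spin_n\to\SO_n$, so $e^\pm$ maps to $w_2$ under $H^2(B\O_n;\Z/2)\to H^2(B\SO_n;\Z/2)$; as this sends $w_1\mapsto 0$ and $w_2\mapsto w_2$, we get $b^\pm=1$. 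Along $\O_1\hookrightarrow\O_n$, covered by the inclusion of the Clifford subalgebra on the first generator (with the sign matching the superscript on $\Pin$) and the induced map $\Pin_1^\pm\to\Pin_n^\pm$: the extension restricts to $\Pin_1^\pm\to\O_1=\Z/2$, and under $H^2(B\O_n;\Z/2)\to H^2(B\O_1;\Z/2)$ one has $w_2\mapsto 0$ while $w_1^2$ maps to the generator. A direct computation with the algebra models $\Cl_1(\R)\cong\R\times\R$ and $\Cl_{-1}(\R)\cong\C$ shows $\Pin_1^+\cong\Z/2\times\Z/2$ (split extension) and $\Pin_1^-\cong\Z/4$ (nontrivial extension), so $e^+$ restricts to $0$ and $e^-$ restricts to the generator, giving $a^+=0$ and $a^-=1$. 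Hence $e^+=w_2$ and $e^-=w_1^2+w_2$, which together with the spin case yields the three criteria; the torsor claims were handled above. The point to watch is that neither restriction alone determines $e^\pm$—the map to $H^2(B\O_1;\Z/2)$ kills $w_2$—so both are genuinely needed, and beyond this only routine compatibilities remain (that the frame-bundle picture agrees with the definition of $H$-structure, and that restricting \eqref{to_On} along each subgroup really gives the stated extension).
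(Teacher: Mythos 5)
Your argument is correct, and it is worth noting that the paper itself gives no proof of this proposition --- it simply cites Kirby--Taylor --- so you have supplied a self-contained argument where the paper defers to the literature. The route you take (central extensions classified by $H^2(BG;\Z/2)$, lifting obstructions and the $H^1$-torsor of lifts for a principal $K(\Z/2,1)$-fibration, then pinning down $e^\pm\in H^2(B\O_n;\Z/2)=\langle w_1^2,w_2\rangle$ by restricting to $\SO_n$ and to $\O_1$) is essentially the standard proof and matches the spirit of the cited Lemma 1.3 of \cite{KT90}. The two computations that carry the weight are both right: the restriction map to $H^2(B\SO_n;\Z/2)\oplus H^2(B\O_1;\Z/2)$ is injective on $\langle w_1^2,w_2\rangle$, so the two restrictions do determine $e^\pm$; and $\Pin_1^+\cong\Z/2\times\Z/2$ versus $\Pin_1^-\cong\Z/4$ (from $e_1^2=\pm1$ in $\Cl_{\pm1}(\R)$) correctly distinguishes $a^+=0$ from $a^-=1$. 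You were also right to flag the spin case as a two-stage lift with the orientation fixed before applying the torsor statement.
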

\begin{cor}
Let $M$ be a closed manifold of dimension at most $2$. Then, $M$ has a \pinm-structure, and has a spin structure
if and only if it is orientable. If $\dim M = 2$, then $M$ has a \pinp structure iff its Euler characteristic is even.
\end{cor}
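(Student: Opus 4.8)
The plan is to apply \cref{pinSW} to $E = TM$ and read off the three existence criteria directly in terms of $w_1(TM)$ and $w_2(TM)$, so the entire corollary reduces to a computation of these two classes when $\dim M\le 2$. Since an $H$-structure exists on $M$ iff one exists on each connected component, I would reduce at the outset to the case that $M$ is connected; in the clause about \pinp structures the Euler-characteristic condition is then read componentwise. If $\dim M\le 1$ everything is immediate: $H^2(M;\Z/2) = 0$, so both obstructions $w_2(TM)$ and $w_2(TM) + w_1(TM)^2$ vanish automatically and $M$ carries \pinpm structures, while $w_2(TM) = 0$ makes the spin obstruction reduce to $w_1(TM) = 0$, i.e.\ to orientability. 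So the real content is the surface case.

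For $\dim M = 2$ the key input is the Wu formula: the total Wu class $v = 1 + v_1 + v_2 + \dotsb$ of a closed $n$-manifold satisfies $w(TM) = \Sq(v)$, with $v_i = 0$ whenever $2i > n$. For a surface this kills $v_i$ with $i\ge 2$; concretely $v_2$ is characterised by $\ang{v_2\cup x, [M]} = \ang{\Sq^2 x, [M]}$ for $x\in H^0(M;\Z/2)$, and $\Sq^2$ vanishes on $H^0$. Hence $v = 1 + w_1(TM)$, and comparing degree-$2$ parts of $w(TM) = \Sq(1 + v_1)$ gives $w_2(TM) = \Sq^1 v_1 = v_1^2 = w_1(TM)^2$.

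From $w_2(TM) = w_1(TM)^2$ the three statements follow. For \pinm the obstruction is $w_2(TM) + w_1(TM)^2 = 2\,w_1(TM)^2 = 0$, so every closed surface admits a \pinm structure (and, with the previous paragraph, so does every closed manifold of dimension $\le 2$). For spin we need $w_1(TM) = 0$ and $w_2(TM) = 0$; the first is exactly orientability, and it then forces $w_2(TM) = w_1(TM)^2 = 0$, so $M$ is spin iff it is orientable. For \pinp the obstruction is $w_2(TM) = w_1(TM)^2$, and since $M$ is a connected closed surface $H^2(M;\Z/2)\cong\Z/2$ is detected by pairing with the $\Z/2$-fundamental class, so $w_1(TM)^2 = 0$ iff $\ang{w_2(TM), [M]} = 0$. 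Finally $\ang{w_2(TM),[M]}$ is the mod-$2$ reduction of the Euler characteristic $\chi(M)$ — the top Stiefel–Whitney number of a closed manifold computes $\chi\bmod 2$, e.g.\ via Poincaré–Hopf — so $M$ admits a \pinp structure iff $\chi(M)$ is even.

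The computation itself is short; what needs care is the bookkeeping for disconnected $M$ in the \pinp clause (Euler characteristics of components do not recombine usefully mod $2$), which is why I would pass to the connected case first, together with the standard-but-not-quite-trivial identification of the top Stiefel–Whitney number with $\chi\bmod 2$. I would regard the appeal to the Wu formula as the main ``obstacle,'' though it functions here as a black box rather than a genuine difficulty.
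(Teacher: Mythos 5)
Your argument is correct and is exactly the standard deduction the paper leaves implicit: the corollary is stated without proof as an immediate consequence of \cref{pinSW}, and the intended input is precisely the Wu-formula identity $w_2(TM)=w_1(TM)^2$ for a closed surface together with $\ang{w_2(TM),[M]}=\chi(M)\bmod 2$, as you use. Your observation that the \pinp clause must be read componentwise for disconnected surfaces (e.g.\ $\RP^2\amalg\RP^2$ has even total Euler characteristic but admits no \pinp structure) is a genuine, if minor, imprecision in the statement that the paper does not address.
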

\begin{rem}
There are a few facts about pin structures which might be surprising to a reader who has only studied spin
manifolds. A tangential spin structure is equivalent data to a normal spin structure, but this is false for pin
structures: a tangential \pinp structure is equivalent to a normal \pinm structure, and vice versa. This is
discussed in~\cite[\S1]{KT90}, and will be relevant in our homotopical approach to 2D \pinm TQFTs.

The product of spin manifolds has an induced spin structure, but using \cref{pinSW} one can write down \pinm
manifolds whose product doesn't have a \pinm structure, and similiarly for \pinp{}. This means that the \pinp and
\pinm cobordism groups are not rings, though they are modules over the spin cobordism ring.
\end{rem}
\begin{prop}[{\cite[\S2]{KT90}}]\label{prop-pin-bordism-groups}
Let $\Omega_n^{H}$ denote the cobordism group of $n$-manifolds with tangential $H$-structure. Then:
\begin{enumerate}
	\item There are isomorphisms $\Omega_1^{\Spin}\cong\Z/2$ and $\Omega_1^{\Pin^-}\cong\Z/2$, and the forgetful
	map $\Omega_1^{\Spin}\to\Omega_1^{\Pin^-}$ is an isomorphism. Both are generated by the circle with structure
	induced by its Lie group framing.
	\item There are isomorphisms $\Omega_2^{\Spin}\cong\Z/2$ and $\Omega_2^{\Pin^-}\cong\Z/8$ which identify the
	forgetful map $\Omega_2^{\Spin}\to\Omega_2^{\Pin^-}$ with the map sending $1\mapsto 4$. The torus with spin
	structure induced from its Lie group framing generates $\Omega_2^{\Spin}$, and $\RP^2$ (with either of its two \pinm structures) generates
	$\Omega_2^{\Pin^-}$.
\end{enumerate}
\end{prop}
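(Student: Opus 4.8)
\emph{Proof strategy.} The result is due to \cite[\S2]{KT90}; the plan is to reprove it via stable homotopy theory, in the spirit of the rest of the paper. By Pontryagin--Thom, $\Omega_n^{\Spin}\cong\pi_n\MSpin$ and $\Omega_n^{\Pin^-}\cong\pi_n\MTPin^-$, where $\MTPin^-$ is the Thom spectrum assembled from the tautological bundle over the classifying space for tangential \pinm structures; by the tangential/normal duality recalled above, $\MTPin^-\simeq\MPin^+$. One computes these homotopy groups through degree $2$ and then matches the abstract answers with the claimed generators using natural bordism invariants. For \spin this is standard: the Atiyah--Bott--Shapiro orientation $\MSpin\to\ko$ is an isomorphism on $\pi_n$ for $n\le 7$, so $\Omega_n^{\Spin}\cong\KO_n$, which is $\Z,\Z/2,\Z/2$ in degrees $0,1,2$. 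Moreover $\Omega_*^{\Spin}$ is a ring receiving the unit map $\pi_*^s\to\Omega_*^{\Spin}$, which is surjective in degrees $\le 2$; the generator $\eta\in\pi_1^s$ maps to a generator of $\Omega_1^{\Spin}$ and is represented by $S^1$ with its (non-bounding) Lie-group framing, while $\eta^2$ maps to a generator of $\Omega_2^{\Spin}$ and equals $[S^1\times S^1]$ with the product Lie-group spin structure. This settles the \spin assertions.

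The \pinm groups are the main work. Using the extension $1\to\Spin_n\to\Pin^-_n\to\Z/2\to 1$ one identifies $\MTPin^-$ with $\MSpin$ smashed with the Thom spectrum of a virtual line bundle over $\RP^\infty$; combined with the low-degree identification $\MSpin\simeq\ko$ this reduces the computation through degree $3$ to standard $\ko$-homology of (stunted) real projective spectra. (Alternatively, run the Atiyah--Hirzebruch spectral sequence for $\MTPin^-$ directly, its mod-$2$ homology in this range being spanned by the Thom class and powers of $w_1$; Kirby--Taylor instead argue geometrically, via $\eta$-invariants.) Degrees $0$ and $1$ give $\Z/2$, generated by a \pinm point and by the image of $\eta$ under $\pi_*^s\to\Omega_*^{\Pin^-}$, the latter being the Lie-group circle; since the same $\eta$ generates $\Omega_1^{\Spin}$, the forgetful map $\Omega_1^{\Spin}\to\Omega_1^{\Pin^-}$ is an isomorphism. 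In degree $2$ the associated graded has total order $8$ with no differentials in range, so $\abs{\Omega_2^{\Pin^-}}=8$; the crux is the extension problem, namely showing the group is $\Z/8$ rather than $\Z/4\oplus\Z/2$ or $(\Z/2)^3$. I would resolve it by exhibiting an element of order $8$: the class of $\RP^2$, whose Arf--Brown invariant is a primitive $8$th root of unity (computed from the quadratic-refinement description recalled in \S\ref{intersection_theoretic}, or equivalently from a $\tfrac18$-$\eta$-invariant as in \cite{KT90}). Hence $\Omega_2^{\Pin^-}\cong\Z/8$, generated by $[\RP^2]$; the two \pinm structures on $\RP^2$ differ by the nonzero class of $H^1(\RP^2;\Z/2)$ and have complex-conjugate Arf--Brown invariants, so either one generates.

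It remains to identify the forgetful map $\Omega_2^{\Spin}\to\Omega_2^{\Pin^-}$ with $1\mapsto 4$. The Arf--Brown invariant is a \pinm bordism invariant that restricts on a spin surface to $(-1)^{\mathrm{Arf}}$, landing in the order-$2$ subgroup $\{0,4\}\subset\Z/8$; since the Lie-group torus has Arf invariant $1$, it maps to $4$, and as it generates $\Omega_2^{\Spin}$ the forgetful map sends $1\mapsto 4$. The one genuine obstacle in all of this is the degree-$2$ extension problem for $\Omega_*^{\Pin^-}$, and every way of settling it ultimately comes down to evaluating the Arf--Brown (or an equivalent $\eta$-) invariant on $\RP^2$.
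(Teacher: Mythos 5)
Your outline is correct, but note that the paper does not prove this proposition at all---it is quoted verbatim from Kirby--Taylor, whose argument in \cite[\S2]{KT90} is geometric: quadratic enhancements of the intersection form, Brown's Witt-group-style analysis of the Gauss sum, and explicit low-dimensional bordisms. You instead reprove it homotopy-theoretically, via $\Omega_*^{\Spin}\cong\pi_*\MSpin\cong\ko_*$ in this range and $\Omega_*^{\Pin^-}\cong\pi_*(\ko\wedge\Sigma^{-1}\MO_1)\cong\widetilde{\ko}_{*+1}(\RP^\infty)$; this is exactly the machinery the paper deploys later in \S\ref{KO_theoretic} and \S\ref{arf_brown_TQFT} (including the computation $\ko_{2+w}(BO_1)\cong\Z/8$ from \cite{BG10}), so your route has the virtue of uniformity with the rest of the text. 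Two caveats on what you are black-boxing. First, your resolution of the degree-$2$ extension problem invokes the fact that the Arf--Brown Gauss sum is a well-defined $\mu_8$-valued bordism invariant; that is Brown's theorem, proved independently by the quadratic-form argument, so there is no circularity, but it means your proof still bottoms out in the same computation $\AB(\RP^2)=e^{\pm2\pi i/8}$ that drives Kirby--Taylor's. Second, the assertions that there are ``no differentials in range'' in the Atiyah--Hirzebruch spectral sequence (the $d_2$ for $\ko$ is governed by $\mathrm{Sq}^2$ and does not vanish for free) and that the image of $\eta$ generates $\Omega_1^{\Pin^-}$ (equivalently, that $\eta\cdot\colon\widetilde{\ko}_1(\RP^\infty)\to\widetilde{\ko}_2(\RP^\infty)$ is onto) are true but need a citation to \cite{BG10} or an Adams spectral sequence check rather than a wave of the hand. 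With those references supplied, the argument is complete, and your identification of the forgetful map as $1\mapsto4$ via the restriction of $\AB$ to $(-1)^{\operatorname{Arf}}$ on spin surfaces is exactly right and matches the paper's own remark at the end of \S\ref{intersection_theoretic}.
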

In particular, the two isomorphism classes of \pinm{} circles aren't cobordant: one bounds and the other doesn't.
We denote the bounding \pinm circle by $\Sb^1$, and the nonbounding \pinm circle by $\Snb^1$. This applies
\textit{mutatis mutandis} to the two spin circles.

\subsection{Homotopy theory}

\label{homotopy_theory}
We follow the conventions in \cite{Beaudry-Campbell}. If the reader is unfamiliar with spectra and the stable homotopy category we recommend they first read Section 2 of loc cit. Here, we briefly recall some notation, basic definitions, and examples.

\subsubsection{The (stable) homotopy category}
\begin{itemize}
	\item The unstable homotopy category is denoted $\hS$. This category receives a map from the category $\Top$ of topological spaces and continuous maps which takes weak equivalences in $\Top$ to isomorphisms in $\hS$; by abuse of notation we will denote the image of a topological space in $\hS$ by the same symbol, and refer to the objects of $\hS$ as spaces.
	\item We write $[X,Y]$ for the space of morphisms in $\hS$ between spaces $X$ and $Y$; if we choose ``nice enough'' representatives for $X$ and $Y$ (for example CW-complexes), then this is given by the set of homotopy classes of maps between $X$ and $Y$.
	\item We denote by $\hSp$ the stable homotopy category (also known as the homotopy category of spectra). This category receives a functor from the category of prespectra which takes weak equivalences of prespectra to isomorphisms in $\hSp$. As in the unstable case, we will refer to objects of $\hSp$ simply as spectra, and use the same symbol for a prespectrum and its corresponding object in $\hSp$.
	\item Given a pair of spectra $E,F$, we write $[E,F]$ for the set of morphisms $\Hom_{\hSp}(E,F)$, and
	$[E,F]_n$ for $[\Sigma^n E, F]$; if $E$ and $F$ are ``nice enough'' (for example, if they are \term{CW-spectra}
	-- prespectra such that each space is a CW complex, the structure maps are cellular inclusions, and the
	adjoints of the structure maps are homeomorphisms) then $[E,F]$ is given by homotopy classes of maps between spectra. There is a natural abelian group structure on $[E,F]$ and $[E,F]_n$.
	\item The category $\hSp$ carries a symmetric monoidal structure $\wedge$, called the smash product.  There is also a mapping object (internal hom) $\Map(E,F)$ whose homotopy groups are $\pi_n(\Map(E,F))=[E,F]_n$.
\end{itemize}

\subsubsection{Examples of spectra}

\begin{exm}
	Given a pointed space $X$, we have the \term{suspension spectrum} $\Sigma^\infty X$, which may be presented by a prespectrum whose $n$th space is $\Sigma^nX$. A special case of this construction is the \term{sphere spectrum} $\bbS = \Sigma^\infty S^0$, which is the unit object for the smash product. 
\end{exm}

Given a spectrum $E$ and a space $X$, we write $E^n(X)$ for the \term{$E$-cohomology group} $[\Sigma^\infty X, E]_n$. Similarly, we write $E_n(X)$ for the \term{$E$-homology group} $\pi_n(\Sigma^\infty X\wedge E)$. These are examples of \term{generalized cohomology} (resp.\ \term{homology}) \term{theories}: they satisfy all of the Eilenberg-Steenrod axioms except the dimension axiom.
\begin{rem}
The Brown representability theorem~\cite{Brown75} states that any generalized cohomology theory $h^*$ (resp. generalized homology theory $h_*$) arises from a spectrum $E$ in this manner; we say that $h^*$ (resp.\ $h_*$) is \term{represented by} $E$.
\end{rem}
\begin{exm}
Ordinary cohomology with coefficients in an abelian group $A$ is represented by the \term{Eilenberg-MacLane spectrum} $HA$. This may be modeled as a spectrum whose $n$th space is the Eilenberg-MacLane space $K(n,A)$ for $n > 0$, and whose nonpositive spaces are trivial.
	
Complex $K$-theory is a generalized cohomology theory represented by a spectrum denoted $\mathit{KU}$. Similarly, real $K$-theory is represented by a spectrum $\mathit{KO}$.
\end{exm}

A spectrum is called \emph{connective} if it has trivial negative homotopy groups. Given a connective spectrum $E$, its zeroth space\footnote{Here it is essential that one considers a spectrum rather than just a prespectrum (i.e.\ the adjoints of the structure maps are homeomorphisms).} has the structure of an infinite loop space. In fact, the homotopy theory of connective spectra is equivalent to that of infinite loop spaces: given an infinite loop space, by definition it has a sequence of deloopings which form the spaces in the corresponding spectrum. See Adams~\cite{InfiniteLoopSpaces} for more on this correspondence.

\subsubsection{Thom spaces and Thom spectra}
Given a space $X$ and a vector bundle $V\to X$, one may form the \term{Thom spectrum} $\tau(V)$, which is the suspension spectrum of the Thom space. Thom spectra satisfy the property
\[
\tau(V \oplus W) \simeq \tau(V) \wedge \tau(W).
\]

We extend this definition to include \emph{virtual} vector bundles (formal differences of vector bundles). For example, there is a Thom spectrum $\tau(-V)$. Explicitly, one can find another bundle $W$ such that $V\oplus W \simeq \underline\R^N$ for some $N$; then, $\tau(-V) \simeq \Sigma^{-N}\tau(W)$.

The classifying space $BO_n$ of the $n$th orthogonal group carries a universal vector bundle $\gamma_n\coloneqq
EO_n\times_{\O_n}\R^n\to BO_n$. We denote by $\MO_n$ the Thom spectrum $\tau(\gamma_n)$; taking the colimit,
we have a spectrum $\MO$, represented by a prespectrum whose $n$th space is the Thom space of $\gamma_n$. One
defines analogous spectra $\MH_n$ and $\MH$ for any family of groups $H_n$ with compatible homomorphisms $H_n \to \O_n$ (such as $\SO_n$, $\Spin_n$, and $\Pin^\pm_n$).

The \emph{Madsen-Tillmann} spectrum $\MTO_n$ is defined as the Thom spectrum of the virtual bundle $-\gamma_n$.
There are maps $\Sigma^n MTO_n \to \Sigma^{n+1}\MTO_{n+1}$, and the direct limit is denoted $MTO$. One defines
$\MTH_n$ and $\MTH$ analogously for groups $H_n$ as above.

The reason we care about Thom spectra is that their homotopy groups compute cobordism.
\begin{thm}[Thom~\cite{ThomThesis}, Pontrjagin~\cite{Pon55}]
\label{PTthm}
Let $\rho_n\colon H_n\to\O_n$ be a compatible family of Lie group homomorphisms. Then there are isomorphisms
\begin{itemize}
	\item $\pi_n(\MTH)\cong\Omega_n^H$, the cobordism group of $n$-manifolds with tangential $H_n$-structure, and
	\item $\pi_n(\MH)\cong\Omega_n^{\nu H}$, the cobordism group of $n$-manifolds with normal $H_n$-structure.
\end{itemize}
\end{thm}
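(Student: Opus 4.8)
The plan is to prove both isomorphisms by the Pontrjagin--Thom construction, running the normal and tangential cases in parallel, since they differ only in which bundle one classifies. Fix $n$ and consider first the map $\Omega_n^{\nu H}\to\pi_n(\MH)$. Given a closed $n$-manifold $M$ with a normal $H$-structure, the Whitney embedding theorem provides an embedding $M\inj\R^{n+k}$ for $k$ large, with normal bundle $\nu\to M$ of rank $k$; after increasing $k$ if necessary, the stable normal $H$-structure refines to an honest reduction of the structure group of $\nu$, i.e.\ a bundle map $\nu\to\gamma_k^H$ over some $g\colon M\to BH_k$, where $\gamma_k^H = EH_k\times_{H_k}\R^k$ ($H_k$ acting on $\R^k$ through $\rho_k$). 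A tubular neighborhood of $M$ is diffeomorphic to the total space of $\nu$, so collapsing its complement to a point gives the Pontrjagin--Thom map $S^{n+k}\to\mathrm{Th}(\nu)$; composing with $\mathrm{Th}(\nu)\to\mathrm{Th}(\gamma_k^H)=\MH_k$ and stabilizing in $k$ yields a class in $\pi_{n+k}(\MH_k)$, hence in $\pi_n(\MH)$. For the tangential case one uses instead that $TM\oplus\nu\cong\underline{\R^{n+k}}$: a tangential $H_n$-structure amounts to a map $g\colon M\to BH_n$ with $g^*\gamma_n\cong TM$, so $\nu$ represents the virtual bundle $g^*(-\gamma_n)$; desuspending the collapse map by $n+k$ and invoking the functoriality of Thom spectra gives $\bbS\to\tau(-TM)\cong\tau(-g^*\gamma_n)\to\tau(-\gamma_n)=\MTH_n$, i.e.\ a class in $\pi_0(\MTH_n)$ and hence in $\pi_n(\MTH)$.

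I would then check that this is well defined on bordism classes. Any two embeddings of $M$ into $\R^{n+k}$ become isotopic after increasing $k$ (general position), and an isotopy induces a homotopy of the associated maps; similarly a bordism $W$ from $M_0$ to $M_1$, embedded in $\R^{n+k}\times[0,1]$ with boundary in the end slices, collapses to a homotopy between the maps associated to $M_0$ and $M_1$. Additivity is immediate, since disjoint union of manifolds corresponds under the collapse construction to the sum in $\pi_n$. So we obtain homomorphisms $\Omega_n^{\nu H}\to\pi_n(\MH)$ and $\Omega_n^H\to\pi_n(\MTH)$.

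For the inverse I would use transversality. Represent a class in $\pi_n(\MH)$ by a map $f\colon S^{n+k}\to\MH_k$; as $S^{n+k}$ is a finite complex, $f$ factors through a finite skeleton, which I replace by (a Thom space over) a smooth compact manifold $B$ approximating $BH_k$, carrying a bundle $\gamma$ approximating $\gamma_k^H$. After smoothing $f$ and applying the Thom transversality theorem, $f$ is transverse to the zero section $B$, and then $M\coloneqq f^{-1}(B)$ is a closed $n$-manifold whose normal bundle in $S^{n+k}$ is canonically $f^*\gamma$; hence $\nu_M$ inherits an $H$-structure, and since $TM\oplus\nu_M$ is stably trivial this is a stable normal $H$-structure on $M$. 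A one-parameter transversality argument over $S^{n+k}\times[0,1]$ shows the bordism class of $M$ depends only on the homotopy class of $f$, and the construction is compatible with stabilization in $k$; the same works for $\MTH$ using transversality to the zero section of the relevant $(-\gamma)$-Thom spectrum. That the two constructions are mutually inverse is then the standard verification: transversality applied to a Pontrjagin--Thom collapse recovers $M$ up to isotopy, and collapsing a tubular neighborhood of a transverse preimage recovers $f$ up to homotopy.

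The only genuinely delicate point — everything else being bookkeeping — is the passage to finite-dimensional approximations: $BH_k$, and hence $\MH_k$ and $\MTH_k$, are infinite-dimensional, so transversality to the zero section is not literally a statement about smooth manifolds until one has compressed $f$ into a finite skeleton and modeled that skeleton by a compact smooth manifold. One must also verify that the $H$-structures produced are coherent under the stabilizations used on both sides, and that the colimits over $k$ defining $\MH$ and $\MTH$ correspond to the stabilization of embeddings and of normal (resp.\ tangential) structures on the geometric side. In practice one cites Thom~\cite{ThomThesis}, Pontrjagin~\cite{Pon55}, or a modern treatment for these verifications.
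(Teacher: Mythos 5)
The paper does not prove this theorem---it is cited to Thom and Pontrjagin---so the reference point is the standard Pontrjagin--Thom argument, and your sketch is a correct outline of exactly that: collapse maps one way, transversality the other, with the genuine subtleties (compression into finite skeleta, smooth models for $BH_k$, coherence under stabilization) correctly identified rather than swept under the rug. The only phrase that is not literally meaningful as written is ``transversality to the zero section of the relevant $(-\gamma)$-Thom spectrum'' in the tangential case: one must first choose a complementary bundle $\gamma_n^\perp$ with $\gamma_n\oplus\gamma_n^\perp\cong\underline{\R}^N$ over a finite skeleton, identify $\Sigma^N\tau(-\gamma_n)$ with $\tau(\gamma_n^\perp)$, and apply transversality to the zero section of $\gamma_n^\perp$; the preimage then acquires a stable tangential (rather than normal) $H$-structure from $TM\oplus f^*\gamma_n^\perp\cong\underline{\R}^{n+k}$ together with $f^*\gamma_n\oplus f^*\gamma_n^\perp\cong\underline{\R}^N$, which is the point of the tangential/normal distinction the paper emphasizes.
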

In the case when tangential $H_n$-structure is the same thing as normal $H_n'$-structure, the relevant Thom spectra
are weakly equivalent. In particular, there are weak equivalences $\MO\simeq\MTO$, $\MSpin\simeq\MTSpin$,
$\MPin^+\simeq\MTPin^-$, and $\MPin^-\simeq\MTPin^+$.

Thom spectra are also useful for understanding duality in $\hSp$.
\begin{thm}[Atiyah~\cite{AtiyahThomComplexes}]
Let $M$ be a closed manifold. Then $\Sigma^\infty M$ is dualizable in $\hSp$, and its dual is the Thom spectrum of
the stable normal bundle $\nu$ of $M$.
\end{thm}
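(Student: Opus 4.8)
The plan is to follow Atiyah's geometric argument, extracting the duality from an embedding of $M$ into a Euclidean space. Dualizability itself is essentially formal: a closed manifold has the homotopy type of a finite CW complex, so $\Sigma^\infty M_+$ (writing $M_+$ for $M$ with a disjoint basepoint, which is what the statement intends) is a finite spectrum; and the dualizable objects of $\hSp$ are closed under suspension, retracts, and cofiber sequences and contain the self-dual sphere spectrum $\bbS$, hence contain all finite spectra. The content is therefore to identify the dual $D(\Sigma^\infty M_+)\coloneqq\Map(\Sigma^\infty M_+,\bbS)$ with the Thom spectrum $\tau(-TM)$ of the stable normal bundle. To set up, I would fix a smooth embedding $\iota\colon M\hookrightarrow\R^N$ with $N$ large; its honest normal bundle $\nu$ satisfies $TM\oplus\nu\cong\underline\R^N$, so $\tau(\nu)\simeq\Sigma^N\tau(-TM)$, and the tubular neighborhood theorem identifies $\nu$ with an open neighborhood $V$ of $M$ in $\R^N$, with retraction $p\colon V\to M$.

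Next I would construct the duality maps. The coevaluation is the composite, on the level of spaces, of the Pontryagin--Thom collapse $S^N\to\mathrm{Th}(\nu)$ onto the Thom space of $\nu$ (sending everything outside $V$ to the basepoint) with the Thom diagonal $\mathrm{Th}(\nu)\to M_+\wedge\mathrm{Th}(\nu)$ induced by $x\mapsto(p(x),x)$; desuspending $N$ times this is a map $\bbS\to\Sigma^\infty M_+\wedge\tau(-TM)$. The evaluation $\tau(-TM)\wedge\Sigma^\infty M_+\to\bbS$ is produced by a second Pontryagin--Thom collapse, associated to the embedding $m\mapsto(m,\iota(m))$ of $M$ into $M\times\R^N$ -- whose normal bundle is again $\nu$ -- followed by collapsing the $M$-factor. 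Conceptually both maps are incarnations of Alexander duality for $M\subset S^N$: the complement of a tubular neighborhood is Spanier--Whitehead dual to $\mathrm{Th}(\nu)$.

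There are then two ways to finish. Directly, one checks that the coevaluation and evaluation satisfy the two triangle (zig-zag) identities; geometrically these reduce to a statement about $V$ and the diagonal of $M$, essentially that pushing into the tube and back out via the difference construction is homotopic to the identity. Preferably, one observes that the coevaluation is adjoint to a comparison map $D(\Sigma^\infty M_+)\to\tau(-TM)$ and checks that it is an equivalence; as both sides are finite spectra it suffices to check this on integral homology, where, via the Thom isomorphism $\widetilde H_{*}(\tau(-TM);\Z)\cong H_{*+d}(M;\Z_{w_1})$ (with $d=\dim M$ and $\Z_{w_1}$ denoting $\Z$ twisted by the orientation character of $M$) and the duality identification $H_{n}(D(\Sigma^\infty M_+);\Z)\cong H^{-n}(M;\Z)$, the map becomes cap product with the fundamental class and is an isomorphism by Poincar\'e--Lefschetz duality. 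Independence of the auxiliary choices ($N$, the embedding, the tubular neighborhood) is automatic, since $\tau(-TM)$ is defined intrinsically and any two embeddings become isotopic after further stabilization. I expect the main obstacle to be the explicit construction and homological control of the evaluation map -- equivalently, the verification of the triangle identities -- as this is the one step that genuinely requires tracking fundamental classes through the Pontryagin--Thom collapses and the tubular-neighborhood identification; the rest is either formal or a routine application of the Thom isomorphism and Poincar\'e duality.
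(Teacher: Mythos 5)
The paper does not actually prove this theorem---it is quoted from Atiyah's \emph{Thom complexes} and used as a black box (``we won't say very much about duality'')---so there is no in-paper argument to compare against. On its own terms, your outline is the standard proof of Atiyah duality and is essentially sound: dualizability is formal because $\Sigma^\infty M_+$ is a finite spectrum; the coevaluation is the Pontryagin--Thom collapse followed by the Thom diagonal; and the cleanest way to close is to show the adjoint comparison map $D(\Sigma^\infty M_+)\to\tau(-TM)$ is an equivalence by checking integral homology, where (both sides being bounded below) it becomes cap product with the twisted fundamental class and Poincar\'e duality applies. You are also right to locate the real labor in identifying that map with the cap product, equivalently in verifying the zig-zag identities.

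One correction to the construction of the evaluation: the normal bundle of the graph embedding $m\mapsto(m,\iota(m))$ is not $\nu$ but the trivial bundle $\underline{\R}^N$, since the quotient $(T_mM\oplus\R^N)/\set{(v,d\iota(v)):v\in T_mM}$ is trivialized by $(v,w)\mapsto w-d\iota(v)$. This is not cosmetic: the triviality of this normal bundle is exactly what makes the construction work. One should perform the collapse for the graph inside $M\times V$ (not all of $M\times\R^N$), so that the compactified source is $M_+\wedge V^+\simeq M_+\wedge\mathrm{Th}(\nu)$ and the collapse lands in $\mathrm{Th}(\underline{\R}^N)=\Sigma^N M_+$; collapsing the $M$-factor then gives the desired evaluation $M_+\wedge\mathrm{Th}(\nu)\to S^N$. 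If the normal bundle were $\nu$ as written, the collapse would target $\mathrm{Th}(\nu)$ and ``collapsing the $M$-factor'' would not produce a sphere. With that repaired, the argument goes through.
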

We won't say very much about duality, but we note in particular that if $B$ is a spectrum with dual $B^\vee$, then
for any spectra $A$ and $C$ there is a weak equivalence
\begin{equation}
	\Map(A \wedge B, C)\simeq \Map(A, B^\vee\wedge C).
\end{equation}
For more on duality, see~\cite[\S III.5]{AdamsStableHomotopy}.

\section{The Arf-Brown invariant of a \pinm surface}
In this section, we give various constructions of the Arf-Brown invariant of a \pinm surface: intersection
theoretic in \S\ref{intersection_theoretic}, index-theoretic in \S\ref{index_theoretic}, and $KO$-theoretic in
\S\ref{KO_theoretic}.
 
\label{arf_brown_invariant}

	\subsection{Intersection-theoretic descriptions of the invariants}
		\label{intersection_theoretic}
		The Arf invariant of a spin surface and the Arf-Brown invariant of a \pinm surface are complete cobordism
invariants defined using intersection theory. 

\subsubsection{The Arf invariant of a spin surface}
Let $\Sigma$ be a closed oriented surface. If $x,y\in H_1(\Sigma;\Z/2)$, then the mod 2 intersection number
$I_2(x,y)\in\Z/2$ is defined by choosing smooth, transverse representative curves for $x$ and $y$ and computing the
number of points mod 2 in their intersection. This does not depend on the choice of representatives and defines a
non-degenerate bilinear pairing 
\[
I_2\colon H_1(\Sigma;\Z/2)\otimes H_1(\Sigma;\Z/2)\to\Z/2.
\]
A \term{$\Z/2\Z$-quadratic enhancement} of $I_2$ is a quadratic form on $H_1(\Sigma;\Z/2)$ whose induced bilinear form is $I_2$. Explicitly, this is a function
\[
q\colon H_1(\Sigma;\Z/2)\to\Z/2
\]
such that for all $x,y\in H_1(\Sigma;\Z/2)$,
\begin{equation}
	q(x+y) = q(x) + q(y) + I_2(x,y).
\end{equation}
The set of $\Z/2\Z$-quadratic enhancements of $I_2$ is an $H^1(\Sigma;\Z/2)$-torsor: given a $\gamma\in H^1(\Sigma;\Z/2)$ and a quadratic enhancement $q\colon H_1(\Sigma;\Z/2)\to\Z/2$, the function $q_\gamma(x)\coloneqq q(x) + \gamma(x)$ is again a quadratic enhancement. 

We have the following relationship between spin structures and quadratic enhancements of the intersection form.
\begin{thm}[\cite{Joh80, Ati71}]
There is an isomorphism of $H^1(\Sigma;\Z/2)$-torsors between the set of $\Z/2\Z$-quadratic enhancements of $I_2$ and isomorphism classes of \spin structure on $\Sigma$.
\end{thm}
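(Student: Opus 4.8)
The plan is to construct the isomorphism of torsors explicitly in both directions and then verify compatibility with the $H^1(\Sigma;\Z/2)$-actions, which is the only substantive point once the construction is in place. Recall that a spin structure on $\Sigma$ is a reduction of the frame bundle along $\Spin_2 \to \SO_2$, equivalently (using stability, see the remark on stable structures) a reduction of the stable tangent bundle. Given a spin structure $\mathfrak{s}$, I would define a function $q_{\mathfrak s}\colon H_1(\Sigma;\Z/2)\to\Z/2$ as follows: represent a class $x$ by an embedded circle $C\subset\Sigma$ (every class in $H_1$ of a surface is represented by an embedded, though possibly disconnected, curve), observe that $TC\oplus\nu_C \cong T\Sigma|_C$ so the spin structure on $\Sigma$ restricts to a \spin (equivalently, stable framing-class) structure on the circle $C$ together with a trivialization of its normal bundle (which exists since $\Sigma$ is oriented and $C$ is two-sided), hence to a spin structure on $C$; then set $q_{\mathfrak s}(x)=0$ if this is the bounding spin circle $\Sb^1$ and $q_{\mathfrak s}(x)=1$ if it is $\Snb^1$. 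By \cref{prop-pin-bordism-groups}(1), $\Omega_1^{\Spin}\cong\Z/2$ distinguishes exactly these two cases, so $q_{\mathfrak s}$ is well-defined independent of the chosen embedded representative.

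The heart of the argument is the quadratic identity $q_{\mathfrak s}(x+y)=q_{\mathfrak s}(x)+q_{\mathfrak s}(y)+I_2(x,y)$. First I would reduce to the case of embedded curves $C_x, C_y$ meeting transversally in $I_2(x,y)$ points; near each intersection point, $x+y$ is represented by the oriented resolution (smoothing) of the crossing. The claim then becomes local: comparing the induced spin structure on the resolved curve with those on $C_x$ and $C_y$ contributes a factor of $1$ per crossing. This is the standard computation that the spin structure on a figure-eight resolved into a single circle differs by the nonbounding class — concretely, one checks it on the genus-one picture or by a direct Clifford-algebraic computation of holonomy, reducing to the model of two transverse lines in $\R^2$ with the standard framing. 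I would cite the intersection-theoretic computation of holonomy of the spin structure around curves, which makes $q_{\mathfrak s}$ manifestly quadratic over disjoint-union-and-resolve operations, noting that $I_2$ is precisely the count of resolutions modulo $2$.

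For the torsor compatibility: given $\gamma\in H^1(\Sigma;\Z/2)=\Hom(H_1(\Sigma;\Z/2),\Z/2)$, the spin structures $\mathfrak s$ and $\mathfrak s\cdot\gamma$ (using the $H^1(\Sigma;\Z/2)$-torsor structure from \cref{pinSW}) restrict on an embedded curve $C$ representing $x$ to spin structures on $C$ differing by the image of $\gamma$ under $H^1(\Sigma;\Z/2)\to H^1(C;\Z/2)$, which is the pairing $\gamma(x)\in\Z/2$ since $H^1(C;\Z/2)$-torsor shifts of the spin structure on a circle exchange $\Sb^1$ and $\Snb^1$ exactly when the class is nonzero. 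Hence $q_{\mathfrak s\cdot\gamma}(x)=q_{\mathfrak s}(x)+\gamma(x)=(q_{\mathfrak s})_\gamma(x)$, which is exactly the torsor action on quadratic enhancements described above. Since the map $\mathfrak s\mapsto q_{\mathfrak s}$ is equivariant between two torsors over the same group, it is automatically a bijection provided it is well-defined and lands in the set of quadratic enhancements — both of which the previous paragraphs establish. I expect the main obstacle to be the local crossing computation: carefully pinning down the sign conventions so that the transverse-crossing contribution is $+1$ rather than $0$, i.e.\ genuinely proving that resolving a crossing toggles the bounding/nonbounding class; everything else is formal torsor-theoretic bookkeeping together with the classification of spin circles already recorded in \cref{prop-pin-bordism-groups}.
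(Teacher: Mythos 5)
The paper does not actually prove this theorem---it is quoted from \cite{Joh80, Ati71}---but the remark following it describes the correspondence exactly as you construct it: $q_{\mathfrak s}$ sends a class represented by an embedded circle to $0$ or $1$ according to whether the induced spin structure bounds. Your outline is the standard (Johnson) argument, and the closing step---an equivariant map between torsors over the same group is automatically a bijection---is correct and is the right way to avoid any counting.

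Two points need attention. First, your justification of well-definedness is a non sequitur: the isomorphism $\Omega_1^{\Spin}\cong\Z/2$ of \cref{prop-pin-bordism-groups} tells you that the invariant read off a single embedded curve lives in $\Z/2$, but it says nothing about why two homologous embedded curves $C,C'$ representing the same class $x$ yield the same value. The actual argument is that an embedded closed $1$-manifold in $\Sigma$ that is null-homologous mod $2$ bounds a compact subsurface $F\subset\Sigma$, and restricting the spin structure of $\Sigma$ to $F$ exhibits that $1$-manifold as a spin boundary, so its class in $\Omega_1^{\Spin}$ vanishes; one applies this to an embedded representative of $[C]+[C']=0$ obtained by resolving the intersections of $C\cup C'$, which feeds back into the same local crossing analysis. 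This is where the bordism group genuinely enters, and it must be said. Second, as you yourself acknowledge, the entire content of quadraticity is the local claim that smoothing a transverse crossing toggles the class in $\Omega_1^{\Spin}$ between $\Sb^1$ and $\Snb^1$; ``I would cite the computation'' is not a proof of it, and any written version must either carry out the model computation (two transverse arcs in a disc, with an explicit lift of the frame path to $\Spin_2$) or point precisely to where \cite{Joh80} does it. With those two repairs the proposal is a correct account of the standard proof; the remaining steps are accurate torsor bookkeeping.
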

\begin{rem}
Given a spin structure on $\Sigma$, the associated quadratic form is easy to describe: it takes a homology class represented by an embedded circle to either $0$ or $1$ depending on whether the induced spin structure on the circle is bounding or non-bounding.
\end{rem}

\begin{defn}
Given a spin surface $\Sigma$ with corresponding quadratic form $q$, the Arf invariant of $q$ may be defined as follows. If $\set{e_i,f_i}$ is a basis of $H_1(\Sigma;\Z/2)$ which is symplectic with
respect to the intersection form, then
\begin{equation}
	\operatorname{Arf}(\Sigma)\coloneqq \sum_i q(e_i)q(f_i)\in\Z/2.
\end{equation}
\end{defn}
\begin{thm}[\cite{KT90}]
The Arf invariant is a spin bordism invariant, and defines an isomorphism
\begin{equation}
	\operatorname{Arf}:\Omega_2^\Spin\to\Z/2.
\end{equation}
\end{thm}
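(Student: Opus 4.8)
The plan is to reduce the statement to the known value $\Omega_2^{\Spin}\cong\Z/2$ recorded in \cref{prop-pin-bordism-groups}: once we know that $\operatorname{Arf}$ descends to a group homomorphism $\Omega_2^{\Spin}\to\Z/2$ which is not identically zero, we are done, since any nonzero homomorphism out of $\Z/2$ is an isomorphism. So there are two things to establish --- that $\operatorname{Arf}$ is a well-defined additive spin-bordism invariant, and that it is nonzero --- and of these only the bordism invariance requires real work. This is in essence the argument of Kirby--Taylor~\cite{KT90}.

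I would first dispose of the algebra and the formal properties. For a nondegenerate $\Z/2$-valued symplectic form and any quadratic refinement $q$ of it, the quantity $\sum_i q(e_i)q(f_i)$ is independent of the choice of symplectic basis $\set{e_i,f_i}$; the slickest way to see this is the ``democratic'' characterization, that it equals $0$ exactly when $q$ takes the value $0$ on strictly more than half of the space. Applied to the genuine quadratic refinement $q_\Sigma$ of $I_2$ that the Johnson--Atiyah correspondence attaches to a spin surface $\Sigma$, this shows $\operatorname{Arf}(\Sigma)$ is well defined. Since $H_1(\Sigma_1\amalg\Sigma_2;\Z/2)$ is the $I_2$-orthogonal direct sum of the two summands, with $q_{\Sigma_1\amalg\Sigma_2}$ restricting to $q_{\Sigma_i}$ on each, $\operatorname{Arf}$ is additive under disjoint union; and orientation reversal affects neither $I_2$ nor $q_\Sigma$, so $\operatorname{Arf}$ is insensitive to it --- both facts are needed to pass from ``additive and vanishing on boundaries'' to ``factors through $\Omega_2^{\Spin}$''. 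I would also record the elementary fact that if $q$ vanishes on a Lagrangian subspace $L$, then completing a basis of $L$ to a symplectic basis gives $\operatorname{Arf}(q)=0$.

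The substantial point is that $\operatorname{Arf}(\Sigma)=0$ whenever $\Sigma=\partial W$ with $W$ a compact spin $3$-manifold. Set $L\coloneqq\ker\bigl(H_1(\Sigma;\Z/2)\to H_1(W;\Z/2)\bigr)$. Poincaré--Lefschetz duality for the pair $(W,\Sigma)$ --- ``half lives, half dies'' --- shows $L$ is a Lagrangian subspace of $\bigl(H_1(\Sigma;\Z/2),I_2\bigr)$: it is isotropic, because if two circles in $\Sigma$ bound surfaces in $W$ then their mod-$2$ intersection number equals the number of boundary points of the (compact $1$-manifold) intersection of those surfaces, hence vanishes; and it is half-dimensional by the standard duality count. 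By the last remark above it then suffices to show $q_\Sigma$ vanishes on $L$, and since $L$ has a basis of classes represented by pairwise-disjoint embedded circles, it is enough to check $q_\Sigma([\gamma])=0$ for one such circle $\gamma\subset\Sigma$. As $[\gamma]$ dies in $H_1(W;\Z/2)$, after pushing $\gamma$ into the interior it bounds a compact embedded surface $F\subset W$, possibly non-orientable. Restricting $TW$ to $F$ and using $TW|_F\cong TF\oplus\nu$ for the normal line bundle $\nu$ of $F$, together with $w_1(TW)=w_2(TW)=0$, forces $w_1(\nu)=w_1(TF)$ and $w_2(TF)+w_1(TF)^2=0$; hence $\nu\cong\det TF$, and by \cref{pinSW} the spin structure of $W$, restricted to $TW|_F\cong TF\oplus\det TF$, is precisely the datum of a \pinm structure on $F$. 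Its restriction to $\gamma=\partial F$ is a (spin, as $\gamma$ is a circle) structure that bounds, and --- since it too descends from the spin structure of $W$, which restricts to that of $\Sigma$ --- it coincides with the spin structure $\Sigma$ induces on $\gamma$. By the circle description of $q_\Sigma$ recorded just after the Johnson--Atiyah theorem, $q_\Sigma([\gamma])=0$; hence $q_\Sigma$ vanishes on $L$ and $\operatorname{Arf}(\Sigma)=0$. Combined with the formal properties above, $\operatorname{Arf}$ now descends to a homomorphism $\Omega_2^{\Spin}\to\Z/2$.

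Nontriviality is then a one-line computation using the generator of \cref{prop-pin-bordism-groups}: on the torus $T^2$ with the spin structure induced by its Lie group framing, the two coordinate circles form a symplectic basis of $H_1(T^2;\Z/2)$, and the framing restricts to the nonbounding spin structure on each, so $q_{T^2}$ takes the value $1$ on both and $\operatorname{Arf}(T^2)=1$. Thus $\operatorname{Arf}\colon\Omega_2^{\Spin}\to\Z/2$ is surjective, hence (being a map out of $\Z/2$) an isomorphism. I expect the main obstacle to be the third paragraph: the ``half lives, half dies'' identification of $L$ as a Lagrangian, and above all the verification that $q_\Sigma|_L=0$, where one must take care over the possible non-orientability of the spanning surface $F$ and over the compatibility of the spin and \pinm structures that $F$, $\Sigma$, and $\gamma$ inherit from $W$.
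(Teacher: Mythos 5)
The paper does not prove this theorem --- it is quoted directly from Kirby--Taylor \cite{KT90} --- so there is no internal argument to compare against; what you have written is a faithful reconstruction of the standard proof from that reference, and it is sound. The formal reductions (basis-independence via the democratic characterization, additivity under disjoint union, vanishing on a Lagrangian implying vanishing of $\operatorname{Arf}$, and the torus computation, which the paper records as the example following the theorem) are all correct, and the ``half lives, half dies'' identification of $L=\ker\bigl(H_1(\Sigma;\Z/2)\to H_1(W;\Z/2)\bigr)$ as a Lagrangian is standard. Two small remarks. First, the pairwise disjointness of the circles representing a basis of $L$ is not actually needed: isotropy of $L$ already makes $q_\Sigma$ additive on $L$, so it suffices to kill $q_\Sigma$ on each basis element separately, each represented by its own embedded circle bounding its own surface in $W$. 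Second, the step you rightly single out as delicate --- that the structure $\gamma$ inherits as $\partial F$ (via the \pinm structure on $F$ coming from the splitting $TW|_F\cong TF\oplus\det TF$ and \cref{pinSW}) agrees with the structure it inherits as a curve in $\Sigma$, and that bounding in $\Omega_1^{\Pin^-}$ forces the bounding spin structure via the isomorphism $\Omega_1^{\Spin}\to\Omega_1^{\Pin^-}$ of \cref{prop-pin-bordism-groups} --- is exactly where Kirby--Taylor expend their care over normal framings, and your sketch correctly locates rather than elides that content.
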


\begin{exm}
Let $T=S^1\times S^1$ denote the torus with spin structure afforded by the Lie group framing. Consider the symplectic basis $\set{e,f}$ for $H_1(T;\Z/2)$ corresponding to the embedded circles $S^1\times \set{1}$ and $\set{1}\times S^1$. As each circle carries the non-bounding spin structure, the associated quadratic form $q$ takes the values $q(e)=q(f)=1$. Thus the Arf invariant is $1\in\Z/2$, and hence $T$ is a generator for the \spin bordism group.
\end{exm}

\subsubsection{The Arf-Brown invariant of a \pinm surface}
Now suppose $\Sigma$ is any closed surface (not necessarily oriented). Then $H_1(\Sigma; \Z/2)$ still carries a non-degenerate intersection form $I_2$, although $H_1(\Sigma; \Z/2)$ may be odd dimensional, and will not admit a symplectic basis in general. In this case, one must consider the following notion:
\begin{defn}
A \term{$\Z/4$-quadratic enhancement of the intersection form} on $\Sigma$ is a function
\begin{equation}
	q\colon H_1(\Sigma;\Z/2)\to\Z/4
\end{equation}
such that for all $x,y\in H_1(\Sigma;\Z/2)$,
\begin{equation}
	q(x+y) = q(x) + q(y) + 2\cdot I_2(x,y),
\end{equation}
where $(2\cdot)\colon\Z/2\inj\Z/4$ is inclusion.
\end{defn}
As with $\Z/2$-quadratic enhancements, the set of $\Z/4$ quadratic enhancements is an $H^1(\Sigma;\Z/2)$-torsor: given a $\gamma\in H^1(\Sigma;\Z/2)$ and a
quadratic enhancement $q\colon H_1(\Sigma;\Z/2)\to\Z/4$, the function $q_\gamma(x)\coloneqq q(x) + 2\cdot\gamma(x)$
is again a $\Z/4$-quadratic enhancement.
\begin{thm}[\cite{KT90}]
For any closed surface $\Sigma$, there is an isomorphism of $H^1(\Sigma;\Z/2)$-torsors from the set of
\pinm structures on $\Sigma$ to the set of $\Z/4$-quadratic enhancements of the intersection form on $\Sigma$.
\end{thm}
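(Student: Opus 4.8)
The plan is to reduce the statement to the analogous, previously‐known fact about spin surfaces (the Johnson–Atiyah theorem cited above), using the classification of \pinm structures as an $H^1(\Sigma;\Z/2)$-torsor from \cref{pinSW}, and to construct the map directly by a ``bounding vs. non-bounding'' recipe on embedded circles refined to keep track of normal data. Concretely, a \pinm structure on $\Sigma$ restricts, along an embedded circle $C\hookrightarrow\Sigma$, to a \pinm structure on the (possibly nonorientable) tubular neighborhood $\nu_C$, equivalently a \pinm structure on $C$ together with the data of whether the normal bundle is trivial or the Möbius bundle; by the low-dimensional classification this is exactly a $\Z/4$-valued quantity, and we set $q([C])$ to be that value. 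The two key things to check are that this is well-defined on homology classes and that it satisfies the quadratic identity $q(x+y)=q(x)+q(y)+2I_2(x,y)$; the factor-of-two correction will come from the two transverse intersection points of generic representatives, each contributing a sign flip to the restricted \pinm structure exactly as in the spin case.

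First I would set up the torsor compatibility. Both sides are $H^1(\Sigma;\Z/2)$-torsors — the left by \cref{pinSW}, the right by the explicit formula $q\mapsto q_\gamma$ recalled just above — so it suffices to (i) exhibit a single map $\Phi$ from \pinm structures to $\Z/4$-quadratic enhancements, (ii) check it is $H^1(\Sigma;\Z/2)$-equivariant, and (iii) check it is well-defined (i.e. lands in the set of quadratic enhancements). Surjectivity and injectivity are then automatic from equivariance of a map of torsors over a group, once nonemptiness of both sides is known; nonemptiness of the left is \cref{pinSW} (every surface of dimension $\le 2$ is \pinm), and nonemptiness of the right then follows, or can be seen directly. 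For step (ii), adding $\gamma\in H^1(\Sigma;\Z/2)$ to the \pinm structure changes the induced structure on each embedded circle $C$ by $\langle\gamma,[C]\rangle$, and one must match this against $q\mapsto q(x)+2\gamma(x)$; the point is that the nonbounding/bounding \emph{and} normal-bundle data shift compatibly under this torsor action, which is a local computation near $C$.

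The substantive step is (iii): showing $\Phi(\text{\pinm structure})$ is quadratic. Given two embedded circles $C_x,C_y$ meeting transversely in $k=I_2(x,y)\bmod 2$ points (after choosing nice representatives), one forms an embedded representative of $x+y$ by resolving each intersection point, and compares the \pinm data carried by this resolved curve with the \pinm data on $C_x$ and on $C_y$. Each resolution introduces a local $180^\circ$ turn, and the framing/normal-bundle bookkeeping at each of the $k$ crossings contributes exactly a $2\in\Z/4$; summing gives $q(x+y)=q(x)+q(y)+2k$, which is the claimed identity since $2\cdot$ factors through $\Z/2$. Here I would lean on the established spin version of the argument (Johnson's proof as in \cite{Joh80}) — over an \emph{oriented} region the computation is literally the spin computation — and only do the genuinely new local analysis at points where the relevant curves pass through an orientation-reversing loop, where one must verify that the Möbius normal data is what upgrades the target from $\Z/2$ to $\Z/4$.

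The main obstacle, as just indicated, will be this last local verification: making precise the claim that restricting a \pinm structure to an embedded circle together with its (trivial or Möbius) normal bundle is faithfully recorded by an element of $\Z/4$, and that under band-summing circles these $\Z/4$-values add with the intersection-number correction. A clean way to handle it is to package the data as a \pinm structure on the $1$-manifold $C$ twisted by the normal line bundle — equivalently, to use the stability remark (a \pinm structure on $\nu_C$ is a \pinm structure on $\nu_C\oplus\underline\R$) so that everything is reduced to \pinm structures on rank-$2$ bundles over $S^1$, which are classified by a straightforward $\pi_0(\Pin^-_2)$-computation — and then the additivity follows from the behavior of \pinm structures under the band connect-sum cobordism. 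I expect the rest (torsor formalism, equivariance) to be routine once this local picture is nailed down.
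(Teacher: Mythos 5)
First, note that the paper does not prove this theorem: it is quoted from Kirby--Taylor \cite{KT90}, and your outline is essentially a sketch of their argument (restrict the \pinm structure to tubular neighborhoods of embedded circles, prove the quadratic identity by resolving transverse intersections, and use torsor equivariance to get bijectivity for free). So the strategy is the right one. However, the sketch has a genuine gap exactly at the point you yourself identify as the crux, and it is worth being precise about why the gap is not merely a routine verification. The set of pairs (\pinm structure on the tubular neighborhood $N(C)$, isomorphism type of the normal bundle) has four elements --- each of the annulus and the M\"obius band admits exactly two \pinm structures, forming an $H^1(S^1;\Z/2)$-torsor --- but a four-element set carries no canonical $\Z/4$-structure, and the ``$\pi_0(\Pin_2^-)$-computation'' you propose will only reproduce this $(\Z/2\text{-torsor})\times(\Z/2)$ description. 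Declaring the annulus cases to be $\set{0,2}$ and the M\"obius cases to be $\set{1,3}$ still leaves a sign ambiguity in the odd values, and more importantly gives no reason why band-summing should be \emph{addition} in $\Z/4$ rather than in $\Z/2\times\Z/2$. The actual content of \cite{KT90} at this point is an independent construction of the $\Z/4$-value: using that a \pinm structure on $T\Sigma|_C$ is a spin structure on the orientable (hence trivial) bundle $T\Sigma|_C\oplus\det(T\Sigma|_C)$, one counts half-twists of a vector field along $C$ relative to that spin structure, and it is this half-twist count --- not the enumeration of structures --- that produces a canonical element of $\Z/4$ and makes the crossing-resolution argument yield the correction term $2I_2(x,y)$. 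Without some such construction, the quadratic identity cannot even be formulated, let alone checked.

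A second, smaller gap: you list well-definedness of $q$ on homology classes as one of the ``two key things to check'' but never return to it. Two embedded curves representing the same class in $H_1(\Sigma;\Z/2)$ need not be isotopic, so one must argue that $q$ is unchanged under whatever moves connect them (e.g.\ that $q$ vanishes on curves bounding in $\Sigma$, proved by a bordism argument inside the surface, together with the resolution move). This is not automatic and is not subsumed by the quadratic identity. Your reduction of the oriented part of the argument to Johnson's spin case \cite{Joh80}, and the torsor/equivariance formalism in steps (i)--(ii), are fine as stated.
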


\begin{defn}[\cite{Bro71, KT90}]
Let $\Sigma$ be a \pinm surface and let $q\colon H_1(\Sigma;\Z/2)\to\Z/4$ be its associated quadratic
enhancement. The \term{Arf-Brown invariant} of $\Sigma$ is the unit complex number
\begin{equation}
	\AB(\Sigma)\coloneqq \frac{1}{\sqrt{\abs{H_1(\Sigma;\Z/2)}}}\sum_{x\in H_1(\Sigma;\Z/2)} \exp\paren{\frac{2\pi
	i q(x)}{4}}.
\end{equation}
\end{defn}
This is sometimes called the \term{Kervaire invariant} or the \term{Arf-Brown-Kervaire invariant}.
\begin{thm}[\cite{Bro71, KT90}]
The Arf-Brown invariant $\AB(\Sigma)$ is a \pinm bordism invariant, and defines an isomorphism
\begin{equation}
	\AB:\Omega_2^{\Pin^-}\to \mu_8 \cong \Z/8
\end{equation}
where $\mu_8$ denotes the group of eighth roots of unity.
\end{thm}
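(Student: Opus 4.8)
The plan is to combine a small amount of elementary algebra with Gauss sums with the identification $\Omega_2^{\Pin^-}\cong\Z/8$ already recorded in \cref{prop-pin-bordism-groups}. Since the quadratic enhancement $q_\Sigma\colon H_1(\Sigma;\Z/2)\to\Z/4$ is attached functorially to the \pinm structure, $\AB(\Sigma)$ depends only on the \pinm diffeomorphism type of $\Sigma$, so it is well defined to begin with. I would then prove two properties: (i) $\AB(\Sigma_1\sqcup\Sigma_2)=\AB(\Sigma_1)\AB(\Sigma_2)$ (with $\AB(\emptyset)=1$), and (ii) $\AB(\partial W)=1$ for every compact \pinm $3$-manifold $W$. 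Together (i) and (ii) say exactly that $\AB$ factors through a group homomorphism $\Omega_2^{\Pin^-}\to\C^\times$; because $\Omega_2^{\Pin^-}\cong\Z/8$, the image of this homomorphism is automatically a cyclic group of order dividing $8$, so it lands in $\mu_8$ — no separate ``Gauss sums are eighth roots of unity'' input is needed. It then remains to evaluate $\AB$ on the generator $\RP^2$ and check that the value is a \emph{primitive} eighth root of unity, which forces the homomorphism to be an isomorphism.

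For the algebra, I would work with the quantity $\gamma(V,q)\coloneqq|V|^{-1/2}\sum_{x\in V}i^{q(x)}$ attached to a finite $\Z/2$-vector space $V$ with a nondegenerate symmetric bilinear form $b$ and a $\Z/4$-quadratic refinement $q$ (so $q(x+y)=q(x)+q(y)+2b(x,y)$), noting $\AB(\Sigma)=\gamma(H_1(\Sigma;\Z/2),q_\Sigma)$. Two facts suffice. First, $\gamma$ is multiplicative under orthogonal direct sums, immediately from $i^{q_1(x_1)+q_2(x_2)}=i^{q_1(x_1)}i^{q_2(x_2)}$; since $H_1(\Sigma_1\sqcup\Sigma_2;\Z/2)=H_1(\Sigma_1;\Z/2)\oplus H_1(\Sigma_2;\Z/2)$ orthogonally for $I_2$ with $q_{\Sigma_1\sqcup\Sigma_2}=q_{\Sigma_1}\oplus q_{\Sigma_2}$, this gives (i). Second, if $L\subseteq V$ satisfies $L=L^\perp$ (hence $|V|=|L|^2$) and $q|_L\equiv0$, then $\gamma(V,q)=1$: grouping the sum by cosets $c\in V/L$, using $q(c+\ell)=q(c)+2b(c,\ell)$ for $\ell\in L$ and $i^{2b(c,\ell)}=(-1)^{b(c,\ell)}$, one sees that $\sum_{\ell\in L}(-1)^{b(c,\ell)}$ equals $|L|$ when $c\in L^\perp=L$ and $0$ otherwise, so the sum collapses to $i^{q(0)}|L|=|L|$.

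For (ii), let $\Sigma=\partial W$ and set $L\coloneqq\ker\bigl(H_1(\Sigma;\Z/2)\to H_1(W;\Z/2)\bigr)$. The ``half lives, half dies'' form of Poincaré–Lefschetz duality — valid with $\Z/2$ coefficients for any compact $3$-manifold, orientable or not — gives $L=L^\perp$ for the intersection form $I_2$. The remaining input, which I expect to be the main obstacle, is that $q_\Sigma|_L\equiv0$: a class $x\in L$ is represented by an embedded circle $C\subset\Sigma$ that is null-homologous mod $2$ in $W$, hence bounds an embedded (possibly nonorientable) surface $F\subset W$, and one must show by examining the \pinm structure on a regular neighborhood of $F$ and its restriction to the annular-or-Möbius neighborhood of $C=\partial F$ in $\Sigma$ that the induced \pinm structure near $C$ is one on which $q$ vanishes. (This is essentially the relevant computation in \cite{KT90}; it is the step where one genuinely uses the geometry of the bounding $3$-manifold rather than pure homological algebra.) Granting $q_\Sigma|_L\equiv0$, the vanishing criterion above yields $\AB(\partial W)=\gamma(H_1(\Sigma;\Z/2),q_\Sigma)=1$.

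Finally I would evaluate on the generator. Here $H_1(\RP^2;\Z/2)\cong\Z/2$, generated by a class $a$ with $I_2(a,a)=1$ (two projective lines meet in one point), so the defining relation forces $0=q(a+a)=2q(a)+2$, whence $q(a)\in\{1,3\}$ — the two values realizing the two \pinm structures on $\RP^2$. Therefore
\[
\AB(\RP^2)=\frac{1}{\sqrt2}\paren{i^{0}+i^{q(a)}}=\frac{1\pm i}{\sqrt2}=e^{\pm i\pi/4},
\]
a primitive eighth root of unity. Hence the homomorphism $\Omega_2^{\Pin^-}\to\mu_8$ induced by $\AB$ sends the generator $[\RP^2]$ of \cref{prop-pin-bordism-groups} to a generator of $\mu_8\cong\Z/8$, and so is an isomorphism.
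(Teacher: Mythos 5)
The paper itself offers no proof of this theorem---it is quoted from Brown and Kirby--Taylor---so the comparison is really between your reconstruction and the classical argument of those references, which you follow. The algebraic half of your proposal is complete and correct: multiplicativity of $\gamma$ under orthogonal direct sum; the vanishing criterion $\gamma(V,q)=1$ when $q$ vanishes on a subspace $L=L^\perp$ (your character-sum computation is right, and this is precisely Brown's criterion for a $\Z/4$-form to have trivial invariant); the observation that a homomorphism out of the cyclic group $\Omega_2^{\Pin^-}\cong\Z/8$ of \cref{prop-pin-bordism-groups} automatically lands in $\mu_8$; and the evaluation $\AB(\RP^2)=e^{\pm i\pi/4}$, which forces the homomorphism to be an isomorphism. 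One point you pass over silently but which is automatic from your (i) and (ii): to descend to the bordism group one needs $\AB(\overline{\Sigma})=\AB(\Sigma)^{-1}$ for the reversed \pinm structure, and this follows since $\Sigma\sqcup\overline{\Sigma}=\partial(\Sigma\times[0,1])$.

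The one genuine gap is the step you flag yourself: that $q_\Sigma$ vanishes on $L=\ker\bigl(H_1(\Sigma;\Z/2)\to H_1(W;\Z/2)\bigr)$. This is not a formality to be waved through---it is the entire geometric content of bordism invariance, and it cannot be extracted from the characterization of quadratic enhancements as an $H^1(\Sigma;\Z/2)$-torsor used elsewhere in the paper, since that description never tells you how to evaluate $q$ on a particular class. One needs the concrete Kirby--Taylor recipe computing $q(x)$ from the \pinm structure restricted to a band neighborhood of an embedded representative $C$, together with the argument that when $C=\partial F$ for an embedded surface $F\subset W$ (and already producing such an embedded $F$ from mod~2 null-homology requires a small transversality argument), the restriction of the \pinm structure on $W$ to a neighborhood of $F$ forces that local data to be trivial. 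Since you correctly isolate this as the crux and attribute it to \cite{KT90}, your proposal stands as a correct reduction of the theorem to that one lemma; a self-contained proof would have to carry the lemma out, and that is where essentially all of the work in the original sources lies.
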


\begin{exm}
Let us compute the value of the Arf-Brown invariant for the \pinm structures on $\RP^2$. In that case $H_1(\RP^2;\Z/2) \simeq \Z/2$, and there are two quadratic enhancements of the intersection form which take the image of the non-zero homology class to either $1$ or $3$ $\mod 4$. In the first case, we see that the Arf-Brown invariant is $\exp(\frac{2\pi i}{8})$, and in the second $\exp(\frac{-2\pi i}{8})$. It follows that either structure gives a generator for \pinm bordism.
\end{exm}

If $\Sigma$ is an oriented surface, then a $\Z/4$-quadratic enhancement is necessarily valued in the even elements of $\Z/4$, and thus recovers the $\Z/2$-quadratic enhancement corresponding to a spin structure. Moreover, the Arf-Brown invariant of such a quadratic enhancement is the (exponentiated) Arf-invariant of the corresponding quadratic form.

	\subsection{Index-theoretic description of the invariants}
	\label{index_theoretic}
	The Arf(-Brown) invariant of a \spin (or \pinm) surface admits an alternative description in terms of Dirac operators acting on sections of (s)pinor bundles, which we will now describe. In the spin case, the Arf invariant corresponds to the mod 2 index, or Atiyah invariant of a Spin Riemann surface - the mod 2 dimension of the space of holomorphic sections of a theta-characteristic. In the \pinm case, the Arf-Brown invariant may be interpreted as the reduced $\eta$-invariant of a twisted Dirac operator as defined and studied by Zhang \cite{Zhang1, Zhang2}.

\subsubsection{The Atiyah invariant of a spin surface}
Let $\Sigma$ be a closed surface equipped with a Riemannian metric and a spin structure.\footnote{Here, we consider a Riemannian metric to induce a \emph{negative} definite quadratic form on the fibers of $T^\ast_\Sigma$.} Then $\Sigma$ carries a graded spinor bundle
\begin{equation}\label{graded spinor}
S_\Sigma = P_{\Spin_2} \times_{\Spin_2} \Cl_{-2}(\R)
\end{equation}
with a left action of the bundle of Clifford algebras $\Cl(T^\ast_\Sigma)$ and a commuting right action of the constant algebra $\Cl_{-2}(\R)$. This bundle splits as a sum of its graded components $S^0_\Sigma \oplus S^1_\Sigma$, where each $S^i_\Sigma$ carries a fiberwise action of $\Cl_{-2}^0(\R) \cong \C$ and thus may be considered as a complex line bundle. 

There is a Dirac operator
\[
	D^+_\Sigma:C^\infty(S^0_\Sigma) \to C^\infty(S^1_\Sigma)
\]
given by composing the canonical connection operator
\[
\nabla: C^\infty(S_\Sigma) \to C^{\infty}(T^\ast _\Sigma \otimes S_\Sigma)
\]
with the action of sections of $T^\ast_\Sigma$ via Clifford multiplication.

\begin{defn}
	The \emph{Atiyah invariant} of the spin surface $\Sigma$ is 
	\[
	\dim \ker (D^+_\Sigma) \bmod 2 \in \Z/2.
	\]
\end{defn}

\begin{rem}	
	The Riemannian metric and orientation on $\Sigma$ determine a complex structure, and the even spinor bundle $S^0_{\Sigma}$ defines a square root of the holomorphic cotangent bundle (such a square root is known as a \emph{theta characteristic}). The Dirac operator is identified with the $\overline{\partial}$ operator defining the holomorphic structure on $S^0_\Sigma$. Thus the Atiyah invariant of $\Sigma$ is the mod 2 dimension of the space of holomorphic sections of $S^+_\Sigma$.
\end{rem} 
	
	\begin{prop}[\cite{Joh80}]
		Given a closed spin surface $\Sigma$, the Atiyah invariant and Arf invariant coincide.
	\end{prop}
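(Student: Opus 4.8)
The plan is to prove the equality $\dim\ker(D^+_\Sigma)\bmod 2 = \operatorname{Arf}(\Sigma)$ by exhibiting both sides as the same function on the set of spin structures of $\Sigma$, realized concretely through the quadratic enhancement $q$ of the intersection form. By the theorems of \S\ref{intersection_theoretic}, both the Arf invariant and the construction of the spin Dirac operator are completely determined by the isomorphism class of the spin structure, so it suffices to match the two functions on the $H^1(\Sigma;\Z/2)$-torsor of spin structures. First I would recall the deformation-invariance of the mod $2$ index: as the Riemannian metric varies, $\dim\ker D^+_\Sigma$ jumps only by even integers (kernel and cokernel of $D^+_\Sigma$ have the same dimension since $D^+_\Sigma$ is the $\overline\partial$-operator of a theta characteristic and Serre duality pairs $H^0$ with $H^1$), so the mod $2$ reduction is a genuine invariant of the spin structure and we are free to pick a convenient metric on each surface.

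The key computational step is to understand how the Atiyah invariant behaves under two operations that generate all spin surfaces up to the relevant equivalence: connected sum (or disjoint union) and the action of an element $\gamma\in H^1(\Sigma;\Z/2)$ changing the spin structure. For disjoint union / connected sum, the spinor bundle and Dirac operator decompose additively in a suitable sense, giving $\dim\ker D^+_{\Sigma_1\# \Sigma_2} \equiv \dim\ker D^+_{\Sigma_1} + \dim\ker D^+_{\Sigma_2}\pmod 2$ after a neck-stretching / gluing argument, which mirrors the additivity of $\operatorname{Arf}$ under orthogonal direct sum of quadratic forms. For the torsor action, I would show that twisting the spin structure by the mod $2$ reduction of the Poincaré dual of an embedded curve changes $\dim\ker D^+$ in the same way that $q\mapsto q_\gamma$ changes $\operatorname{Arf}$; this is where the \emph{Remark} after the Johnson--Atiyah theorem is essential, since it tells us $q$ evaluated on an embedded circle detects whether the restricted spin structure on that circle is bounding, and correspondingly whether the associated theta characteristic has a section supported near that circle.

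Concretely, I would reduce to the genus-one building block: compute $\dim\ker D^+$ directly for the two spin structures on $T^2$ (the Lie-group framing, with $q(e)=q(f)=1$ and $\operatorname{Arf}=1$, versus the three others with $\operatorname{Arf}=0$) using the flat metric, where the Dirac operator is just $\overline\partial$ twisted by a flat line bundle of order dividing $2$ — its kernel is one-dimensional exactly when the line bundle is trivial, i.e.\ for the Lie-group spin structure, and zero-dimensional otherwise. This gives the match on the generator of $\Omega_2^{\Spin}$. Combined with additivity under connected sum and the (metric-independent, hence bordism-invariant) nature of both quantities — and the fact from \S\ref{intersection_theoretic} that $\operatorname{Arf}\colon\Omega_2^{\Spin}\xrightarrow{\sim}\Z/2$ — the agreement on all closed spin surfaces follows. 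The main obstacle is the gluing step: making precise the claim that $\dim\ker D^+$ is additive under connected sum requires an analytic neck-stretching argument controlling small eigenvalues of the Dirac operator on the long neck, and one must be careful that no extra harmonic spinors appear in the limit; alternatively one can sidestep this by invoking that the mod $2$ index is a spin-bordism invariant (a consequence of the Atiyah--Singer family index theorem mod $2$, or of cobordism invariance of the analytic index) and then only needs to evaluate it on the single generator $T^2$, which is the cleanest route and the one I would ultimately take.
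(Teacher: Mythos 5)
The paper offers no proof of this proposition---it is quoted from Johnson and Atiyah---so there is nothing internal to compare against; your argument is essentially the classical one from those sources, and the route you ultimately select (bordism invariance of the mod~$2$ index plus evaluation on the generator of $\Omega_2^{\Spin}\cong\Z/2$) is sound and is also the cleanest. Both invariants vanish on the zero bordism class, both equal $1$ on the Lie-group torus (your flat-metric computation is correct: the four spin structures correspond to the four $2$-torsion theta characteristics, $h^0$ is nonzero only for the trivial one, and the trivial one is the doubly periodic, i.e.\ Lie-group, spin structure, matching $q(e)=q(f)=1$ and $\operatorname{Arf}=1$), so agreement follows once the Atiyah invariant is known to descend to $\Omega_2^{\Spin}$. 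That last fact is exactly what the paper's \S\ref{KO_theoretic} encodes by identifying the Atiyah invariant with the pushforward $\pi_!^\Sigma(1)\in\KO^{-2}(\pt)$, so your black-box invocation is consistent with the surrounding text. Your alternative neck-stretching/connected-sum route would work too but, as you note, costs a genuine gluing analysis that the bordism argument avoids.

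One sub-argument in your first paragraph is wrong as stated and should be excised or repaired: the claim that $\dim\ker D^+_\Sigma$ jumps only by even integers \emph{because} kernel and cokernel have equal dimension does not follow. Serre duality for a theta characteristic $L$ (with $K\otimes L^{-1}\cong L$) gives $h^0(L)=h^1(L)$, i.e.\ index zero, but an index-zero family can perfectly well have $h^0$ and $h^1$ each jump by $1$ simultaneously. The parity invariance of $h^0(L)$ in families is a genuinely deeper theorem (Mumford's algebraic argument, or Atiyah's analytic one exploiting the skew-adjoint real structure of the Dirac operator, whose mod~$2$ index is a homotopy invariant). Since your final strategy invokes cobordism invariance of the mod~$2$ index anyway---which subsumes metric independence---this flaw does not sink the proof, but the justification given in the first paragraph should not be presented as the reason the invariant is well defined.
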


\subsubsection{Reduced $\eta$-invariant of a \pinm surface}
 Given a \pinm surface $\Sigma$, the pinor bundle 
 \[
 S_\Sigma = P_{\Pin^-_2} \times_{\Pin^-_2} \Cl_{-2}(\R)
 \]
 still makes sense, though it doesn't carry a natural $\Z/2$-grading. However, as Clifford multiplication is not \pinm equivariant, the formula for the Dirac operator now defines a map on sections
 \[
D_\Sigma: C^\infty(S_\Sigma) \to C^\infty(S_\Sigma \otimes \delta),
 \]
 where $\delta$ is the orientation bundle. 
 
To get an operator acting on sections of the same bundle, we may apply the following trick (which is spelled out in \cite{Stolz_exotic} in the analogous case of a \pinp-manifold of dimension 4 (mod 8)). The left regular action of $\Cl_{-2}(\R) \cong \H$ on itself extends to an action of $\Cl_{-3}(\R) \cong \H \oplus \H$; now compose the operator $D_\Sigma$ with the action of $e_3 \in \Cl_{-3}(\R)$ to get a self-adjoint operator $\widetilde{D}_\Sigma$ on sections $S_\Sigma$ called the \emph{twisted Dirac operator}.
 		
	The twisted Dirac operator has an associated $\eta$-function, defined for $s\in \C$ with $Re(s) \gg 0$ by the formula
	\[
	\eta_{\widetilde{D}}(s) = \sum_{\lambda \neq 0} \mathrm{sign}(\lambda) \dim E(\lambda) |\lambda|^{-s}
	\]
	where $E(\lambda)$ is the eigenspace with eigenvalue $\lambda$.
	This function admits a meromorphic extension to $s=0$, and thus we may define the \emph{reduced $\eta$-invariant}:
		\[
		\overline{\eta}(\widetilde{D}) = \frac{\dim \ker(\widetilde{D}) + \eta_{\widetilde{D}}(0)
		}{2} \bmod 2\Z \in \R/2\Z
	\]	
	
	\begin{prop}[\cite{Zhang1, Zhang2}]
	Given a \pinm surface $\Sigma$, the reduced $\eta$-invariant $\overline{\eta}(\widetilde{D}_\Sigma)$ is an element of $\Z[\frac 14]/2\Z$, and agrees with the Arf-Brown invariant of $\Sigma$ under the isomorphism given by the exponential map $\Z[\frac{1}{4}]/2\Z \cong \mu_8 \subseteq \C^\times$.
	\end{prop}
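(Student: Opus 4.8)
The plan is to show that both the assignment $\Sigma\mapsto\overline\eta(\widetilde D_\Sigma)$ and (the logarithm of) $\AB$ define homomorphisms out of $\Omega_2^{\Pin^-}\cong\Z/8$, and that they agree on the generator $\RP^2$. I would begin with the two easy structural facts. First, for a disjoint union $\widetilde D_{\Sigma_1\amalg\Sigma_2}=\widetilde D_{\Sigma_1}\oplus\widetilde D_{\Sigma_2}$, so $\eta$-functions add and $\overline\eta(\widetilde D_{\Sigma_1\amalg\Sigma_2})=\overline\eta(\widetilde D_{\Sigma_1})+\overline\eta(\widetilde D_{\Sigma_2})$ in $\R/2\Z$. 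Second, $\overline\eta(\widetilde D_\Sigma)$ is independent of the Riemannian metric: along a path of metrics $\widetilde D_\Sigma$ varies through self-adjoint operators, and when eigenvalues cross $0$ the quantity $\dim\ker\widetilde D_\Sigma+\eta_{\widetilde D_\Sigma}(0)$ changes by spectral flow; the right $\Cl_{-2}(\R)\cong\H$-action on $S_\Sigma$ commutes with $\widetilde D_\Sigma$ and forces each eigenspace to have even complex dimension, so this change lies in $4\Z$ and $(\dim\ker\widetilde D_\Sigma+\eta_{\widetilde D_\Sigma}(0))/2$ is constant mod $2\Z$. This quaternionic symmetry is also what promotes the usual ``mod $\Z$'' statement about $\eta$-invariants to the ``mod $2\Z$'' refinement in the proposition.

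The essential step is bordism invariance. Suppose $\Sigma=\partial W$ for a compact \pinm $3$-manifold $W$; put a product metric near $\partial W$, extend the pinor bundle and the auxiliary $\Cl_{-3}(\R)$-action over $W$, and apply the Atiyah--Patodi--Singer index theorem to the resulting \pinm Dirac operator $D_W$ with APS boundary conditions. The interior term $\int_W\hat A(W)\wedge(\cdots)$ vanishes because $\hat A$ has no component in degree $3$ (indeed none in any positive degree below $4$), so the APS formula reduces to $\mathrm{ind}_{\mathrm{APS}}(D_W)=-c\,\bigl(\dim\ker\widetilde D_\Sigma+\eta_{\widetilde D_\Sigma}(0)\bigr)$ for a universal constant $c$; tracking $c$ and using the real structure to see $\mathrm{ind}_{\mathrm{APS}}(D_W)$ is an even integer then gives $\overline\eta(\widetilde D_\Sigma)=0$ in $\R/2\Z$. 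Combined with additivity, the assignment $\Sigma\mapsto\overline\eta(\widetilde D_\Sigma)$ descends to a homomorphism $\Omega_2^{\Pin^-}\to\R/2\Z$; since $\Omega_2^{\Pin^-}\cong\Z/8$ its image lies in the cyclic subgroup $\tfrac14\Z/2\Z\subseteq\Z[\tfrac14]/2\Z$, which proves the first assertion, and on which $t\mapsto\exp(\pi i t)$ is an isomorphism onto $\mu_8$.

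It remains to pin this homomorphism down on $\RP^2$. Writing $\RP^2=S^2/(\Z/2)$ with the round metric, $\widetilde D_{\RP^2}$ pulls back to the $\Z/2$-equivariant twisted Dirac operator on $S^2$, whose spectrum and multiplicities are classical; splitting the round-sphere eigenspinors into the two characters of the deck group — which is exactly where the choice of \pinm structure on $\RP^2$ (the $H^1(\RP^2;\Z/2)$-torsor) enters, the two structures being exchanged by tensoring with the nontrivial real line bundle and hence by $\widetilde D\mapsto-\widetilde D$ up to conjugation — reduces $\eta_{\widetilde D_{\RP^2}}(s)$ to a Hurwitz $\zeta$-expression that evaluates at $s=0$ in closed form, yielding $\overline\eta(\widetilde D_{\RP^2})=\pm\tfrac14\in\Z[\tfrac14]/2\Z$. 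Under $t\mapsto\exp(\pi i t)$ this is $\exp(\pm2\pi i/8)=\AB(\RP^2)$ from the example above; since $\overline\eta$ and $\log\AB$ are now homomorphisms $\Z/8\to\Z/8$ agreeing on a generator, they coincide. I expect the main obstacle to be this explicit spectral computation on $\RP^2$, together with the bookkeeping in the APS step — getting the constant $c$ right and justifying that the real structure makes $\mathrm{ind}_{\mathrm{APS}}(D_W)$ even, so that the sharper mod-$2\Z$ statement is legitimate rather than merely the mod-$\Z$ one.
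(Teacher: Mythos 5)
The paper does not actually prove this proposition---it is quoted from Zhang \cite{Zhang1, Zhang2}---so your outline should be measured against Zhang's arguments. Your overall template (additivity under disjoint union, well-definedness and bordism invariance mod $2\Z$, then evaluation on the generator $\RP^2$) is the right one, and several pieces are sound: the commuting right $\Cl_{-2}(\R)\cong\H$-action does force even multiplicities and is exactly what upgrades the usual mod-$\Z$ statement to mod $2\Z$; the value $\overline\eta(\widetilde D_{\RP^2})=\pm\tfrac14$, computed on the double cover $S^2$ by splitting eigenspinors into deck-group characters, is correct and matches $\AB(\RP^2)=e^{\pm 2\pi i/8}$; and two homomorphisms out of $\Omega_2^{\Pin^-}\cong\Z/8$ agreeing on $[\RP^2]$ must coincide.

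The genuine gap is the bordism-invariance step. The Atiyah--Patodi--Singer index theorem you invoke computes the integer index of a \emph{chiral} Dirac operator on an \emph{even}-dimensional manifold with odd-dimensional boundary; here $W$ is $3$-dimensional and $\Sigma=\partial W$ is $2$-dimensional, the opposite parity. On an odd-dimensional $W$ the complex index of the natural Dirac-type operator vanishes identically, so the assertion ``$\mathrm{ind}_{\mathrm{APS}}(D_W)$ is an even integer, hence $\overline\eta(\widetilde D_\Sigma)\in 2\Z$'' is vacuous rather than the constraint you need; there is also real content in exhibiting an operator on the \pinm $3$-manifold $W$ whose boundary operator in product form is the \emph{twisted} operator $e_3\circ D_\Sigma$. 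What is actually required is a $\KO$-valued (quaternionic/mod $2$) APS-type index theorem for odd-dimensional \pinm manifolds with boundary---this is the main theorem of \cite{Zhang2} (note its title), not an off-the-shelf input to be cited. Zhang circumvents a direct APS argument: in \cite{Zhang1} the invariance is obtained via adiabatic limits of $\eta$-invariants on circle bundles and Rokhlin-type congruences, and in \cite{Zhang2} by identifying the analytic invariant with a topological index valued in $\KO^0(\RP^2)\cong\Z\oplus\Z/4$ (the construction recalled in \S\ref{KO_theoretic} of this paper), after which the comparison with Brown's combinatorial invariant reduces to the generator computation you describe. So your plan is salvageable, but the sentence ``apply the APS index theorem and track the constant $c$'' is standing in for the theorem's actual proof.
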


\begin{rem}
	In the \spin case, the contribution from the $\eta$-invariant vanishes, and we are just left with half the dimension of $\ker(D_\Sigma)$ (or equivalently the dimension of $\ker(D^+_\Sigma)$) mod 2.
\end{rem}

	\subsection{$\KO$-theoretic descriptions of the invariants}
		\label{KO_theoretic}





Here we explain how the analytic index-theoretic invariants of the previous section may be expressed topologically in terms of pushforwards in (twisted) $\KO$-theory.

\subsubsection{The Atiyah-Bott-Shapiro orientation and pushforward maps in $\KO$-theory}\label{sec:the-atiyah-bott-shapiro-orientation-and-pushforward-maps-in-ko-theory}


Let $\pi:V\to X$ be a rank-$k$ real vector bundle equipped with a \spin structure, and let $\tau(V)$ denote its Thom space. The Clifford module construction of Atiyah-Bott-Shapiro~\cite{ABS} determines a
Thom isomorphism
\begin{equation}
\label{spinThom}
\KO^*(X)\xrightarrow{\cong} \widetilde\KO^{*+k}(\tau(V)).
\end{equation}
This isomorphism is given by multiplication by a Thom class $u_V \in \widetilde{\KO}^n(\tau(V))$, which may be described as follows. The \spin structure on $V$ determines a graded spinor bundle $S_V = S_V^0 \oplus S_V^1$ (see (\ref{graded spinor})), which carries a left action of the Clifford bundle $\Cl(V)$. Pulling $S_V$ back to the total space of $V$, we obtain a pair of bundles together with a homomorphism
 \[
 \pi^\ast(S^1_V) \to \pi^\ast(S^0_V)
 \]
given by Clifford multiplication, which is an isomorphism away from the zero section. This defines an element of $\KO(V,V-\{0\}) \cong \widetilde{\KO}(\tau(V))$ which is the required Thom class.

Using the Thom isomorphism, we can define a pushforward for an $n$-dimensional spin manifold $M$. Choose an embedding $M\to\R^N$
for some large $N$, and let $\nu\to M$ be the normal bundle, which has rank $N-n$. Using the tubular
neighborhood theorem to embed $\nu\inj\R^N$, consider the Pontryagin-Thom collapse map 
\[
PT_\nu: S^{N} = \R^N \cup \{\infty\} \to \tau(\nu)
\] 
which takes the complement of the tubular neighborhood in $S^N$ to the basepoint of the Thom space. 

%
	
\begin{defn}	
The \term{pushforward map in $\KO$-theory} for $X$ is the composition
	\begin{equation}
	\label{pushforward_KO}
	\xymatrix{
		\pi_!^M\colon \KO^*(M)\ar[r]_-{\eqref{spinThom}} &
		\widetilde\KO^{*+N-n}(\tau(\nu))\ar[r]^-{PT_\nu^\ast}
		&\widetilde{\KO}^{*+N-n}(S^N)\ar[r]^-s_-{} &\KO^{*-n}(\pt).
	}
	\end{equation}
\end{defn}
where $s$ is the suspension isomorphism.
One may check that this invariant does not depend on the choice of embedding for large enough $N$. Moreover, by considering the appropriate modification for manifolds with boundary, one can check that the association of $\pi^M_!(1_M) \in KO^{-n}(pt)$ to a closed \spin $n$-manifold is a cobordism invariant.
\begin{rem}
	Another perspective on the pushforward in $KO$-theory is that a closed Spin $n$-manifold $M$ carries a
	fundamental class $[M] \in \KO_n(M)$, Spanier-Whitehead dual to the Thom class in $u_M \in
	\widetilde{\KO}^{-n}(\Sigma^{-N}\tau(\nu))$. We can then pushforward to get a class in $\KO_n(pt) = \KO^{-n}(pt)$.
\end{rem}

The collection of Thom classes given by the Atiyah-Bott-Shapiro orientation may be succinctly formulated as a morphism of spectra (in fact of $E_\infty$-ring spectra ~\cite{Joa04, AHR})
\begin{equation}
	\widehat A\colon\MSpin\to\KO,
\end{equation}
The Pontryagin-Thom construction can then be interpreted as a homomorphism (in fact, isomorphism) from the spin cobordism groups $\Omega_n^\Spin$ to the homotopy groups $\pi_n(\MSpin)$, as in \cref{PTthm}; the induced map on homotopy groups
\[
\Omega_n^{\Spin}\cong \pi_n(MSpin) \to \pi_n(\KO) = \KO^{-n}(pt)
\]
takes the class of a closed spin $n$-manifold $M$ to the pushforward $\pi^M_! 1_M$.



With the pushforward in hand, we can give a description for the Arf invariant of a spin surface which is simpler, if less intuitive,
than the one given in the previous section.
\begin{prop}[\cite{Ati71}]
Let $\Sigma$ be a spin surface. The Atiyah/Arf invariant of $\Sigma$ is the pushforward of $1$:
\begin{equation}
	A'(\Sigma)\coloneqq \pi_!^\Sigma(1)\in\KO^{-2}(\pt)\cong\Z/2.
\end{equation}
\end{prop}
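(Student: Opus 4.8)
The plan is to use that both $A'$ and the Arf invariant are spin bordism invariants, so that it suffices to compare them on a generator of $\Omega_2^{\Spin} \cong \Z/2$. That $\Sigma \mapsto \pi_!^\Sigma(1_\Sigma)$ depends only on the spin bordism class of $\Sigma$ was noted above; more precisely, under the Pontryagin--Thom identification $\Omega_2^{\Spin} \cong \pi_2(\MSpin)$ this invariant is the homomorphism induced on homotopy groups by the Atiyah--Bott--Shapiro orientation, so it factors as $\Omega_2^{\Spin} \cong \pi_2(\MSpin) \xrightarrow{\hat A_*} \pi_2(\KO) = \KO^{-2}(\pt)$. That $\operatorname{Arf}$ is a spin bordism invariant is the theorem of \cite{KT90} recalled above. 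By \cref{prop-pin-bordism-groups}, $\Omega_2^{\Spin}$ is generated by the torus $T = S^1 \times S^1$ with the spin structure induced by its Lie group framing, and $\operatorname{Arf}(T) = 1$ was computed in the Example above. So it is enough to show $\pi_!^T(1_T) \neq 0$ in $\KO^{-2}(\pt) \cong \Z/2$.

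First I would observe that the Lie group framing on $T$ is an honest framing, which refines the spin structure in question, so that under Pontryagin--Thom the class $[T] \in \pi_2(\MSpin)$ is the image of the framed bordism class of $(T, \text{Lie framing})$ in $\pi_2(\bbS)$ under the unit map $\bbS \to \MSpin$. Now $S^1$ with its Lie group framing represents the Hopf element $\eta$, the generator of $\pi_1(\bbS) \cong \Z/2$, and the Lie group framing on $T$ is the product of those on the two circle factors; since the Pontryagin--Thom isomorphism is multiplicative, $(T, \text{Lie framing})$ represents $\eta^2$, the nonzero element of $\pi_2(\bbS) \cong \Z/2$. Finally, because $\hat A \colon \MSpin \to \KO$ is a map of ($E_\infty$-ring) spectra, the composite $\bbS \to \MSpin \xrightarrow{\hat A} \KO$ is the unit of $\KO$, and the induced map $\pi_2(\bbS) \to \pi_2(\KO)$ sends $\eta^2$ to $\eta^2$, which generates $\pi_2(\KO) \cong \Z/2$. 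Hence $\pi_!^T(1_T) = \hat A_*[T] = \eta^2 \neq 0$, and therefore $A'(\Sigma) = \operatorname{Arf}(\Sigma)$ for every closed spin surface $\Sigma$; together with the identification of $\operatorname{Arf}(\Sigma)$ with the Atiyah invariant recalled above (following \cite{Joh80}), this proves the proposition.

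I expect the verification $\pi_!^T(1_T) \neq 0$ to be the only real content of the argument; it rests on two classical facts --- that the Lie-group-framed circle represents $\eta \in \pi_1(\bbS)$, and that $\eta^2$ has nonzero image under the unit map $\pi_2(\bbS) \to \pi_2(\KO)$, which is immediate from Bott periodicity and the standard computation of $\pi_*(\KO)$ --- while everything else is formal. Alternatively, one can prove the proposition without the reduction to a generator, at the cost of more analytic input: identify $\pi_!^\Sigma(1)$ directly with the Atiyah invariant by recognizing the topological pushforward as the topological index of the Clifford-linear Dirac operator on $S_\Sigma$ and invoking the mod $2$ Atiyah--Singer index theorem, which identifies it with $\dim_{\C}\ker(D^+_\Sigma) \bmod 2$. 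The cost of that route is the careful matching of the Atiyah--Bott--Shapiro identification of the relevant $\Cl_{-2}(\R)$-module Grothendieck group with $\KO^{-2}(\pt) \cong \Z/2$ against the naive mod $2$ index, together with the analytic input behind the index theorem itself.
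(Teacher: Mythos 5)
Your argument is correct. The paper itself does not prove this proposition --- it is quoted from Atiyah's paper \cite{Ati71}, whose original argument is essentially the analytic one you sketch at the end (the topological pushforward is the index of the $\Cl_{-2}$-linear Dirac operator, identified with $\dim\ker D^+_\Sigma \bmod 2$ by the mod~2 index theorem, and then with the Arf invariant via Johnson's isomorphism). Your main argument is a cleaner, purely homotopy-theoretic alternative, and every input it needs is already on record in the paper: both invariants are homomorphisms out of $\Omega_2^{\Spin}\cong\Z/2$ (the Arf invariant by the quoted theorem of \cite{KT90}, the pushforward because it is $\pi_2$ of the spectrum map $\widehat A\colon\MSpin\to\KO$); the generator is the Lie-group-framed torus, on which $\operatorname{Arf}=1$ by the worked example; the framed torus represents $\eta^2\in\pi_2(\bbS)$ (as recorded in the framed-bordism example of \S\ref{classifying_2D}); and $\widehat A$ being a ring map forces $\widehat A_*[T]=\eta^2\ne 0$ in $\pi_2(\KO)\cong\Z/2$. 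The one fact you use that is external to the paper is that the unit $\pi_2(\bbS)\to\pi_2(\KO)$ is an isomorphism, which is standard. What your route buys is avoiding the index theorem entirely; what it costs is that it only identifies the pushforward with the Arf invariant abstractly as bordism invariants, so the equality with the Atiyah invariant still has to be routed through \cite{Joh80}, as you note.
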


\begin{rem}
From this perspective, the Arf invariant is an example of a generalized characteristic number, specifically a
\term{$\KO$-Pontrjagin number} as constructed by Anderson-Brown-Peterson~\cite{ABP66}.
\end{rem}





\subsubsection{Twisted Thom isomorphism for \pinm manifolds}\label{twisted Thom}
Next we discuss the generalization of this invariant to \pinm surfaces. One issue is that \pinm vector bundles are not oriented for $KO$, so it is not immediately clear how to define a pushforward.

The key idea is the observation that a \pinpm structure on a vector bundle $V$ is equivalent to a \spin structure on the virtual vector bundle $V \mp \det(V)$ (see \cite{KT90}). In particular, we have a Thom class
\[
u_{V\mp \det(V)} \in \KO^n(\tau(V \mp \det(V))
\]
Now if $M$ is a closed \pinm $n$-manifold with an embedding in $\R^N$ for $N\gg 0$, its normal bundle $\nu$ is equipped with a \pinp structure, so we have a corresponding Thom class
\[
u_M \in \KO^{N-n-1}(\tau(\nu - \delta)) = \KO^{N-n}(\tau(\nu + 1 - \delta)
\]
Alternatively, by Spanier-Whitehead duality, there is a fundamental class
\[
[M] \in \KO_n(\tau(\delta-1)).
\]

These ideas are best understood from the perspective of twisted $KO$-theory. 
\begin{defn}
	Given a space $X$ equipped with a map $w:X\to BO_1$ (which we may think of as classifying either a double cover $X^w$ or a real line bundle $L_w$), we define the \emph{twisted $\KO$-cohomology}
	\[
	\KO^{w + n}(X) = \KO^n(\tau(L_w-1)).
	\]
	Similarly, we have the \emph{twisted $KO$-homology}
	\[
	\KO_{w+n}(X) = \KO_{w+n}(\tau(1-L_w)).
	\]
\end{defn}

\begin{rem}
This construction is an example of the notion of a twisted generalized cohomology theory. This and other examples can be found in \cite{FHT1} (e.g.\ see Example 2.28.)	and~\cite{ABG10}.
\end{rem}

	\subsubsection{Pushforward in twisted $\KO$-homology}\label{pushforward twisted KO}
	Using this language, we may now give a $\KO$-theoretic construction of the Arf-Brown (or reduced $\eta$-invariant) of a \pinm surface $\Sigma$. Let
	\[
	w_1:\Sigma \to BO_1
	\]
	denote the classifying map of the orientation line bundle. Then consider the pushforward
	\begin{equation}
\label{w1push}
	w_{1!}([\Sigma]) \in \KO_{2+w}(BO_1)
	\end{equation}
	
The group 
\[
\KO_{2+w}(BO_1) = \KO_2(\Sigma^{-1}\MO_1)\cong \KO_3(\RP^\infty)
\]
is isomorphic to the direct limit of cyclic 2-groups $\Z/2^\infty$, or equivalently, the collection of all $2^n$th
roots of unity in $\C^\times$. One may also apply the above construction for the connective theory $\ko$, in which
case (see \cite{BG10})
\[
\ko_{2+w}(BO_1) \cong \Z/8
\] 
and thus the class $w_{1!}([\Sigma])$ may be interpreted as an 8th root of unity via the exponential map.
	\begin{prop}
The class $w_{1!}([M]) \in\ko_{2+w}(BO_1)\cong \Z/8$ agrees with the Arf-Brown invariant of $\Sigma$ (via the isomorphism between $\Z/8$ and the 8th roots of unity in $\C^\times$ given by the exponential map).
	\end{prop}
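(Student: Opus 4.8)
The plan is to show that the maps $\Sigma\mapsto w_{1!}([\Sigma])$ and $\Sigma\mapsto\AB(\Sigma)$ agree as homomorphisms $\Omega_2^{\Pin^-}\to\mu_8\cong\Z/8$, using for the former the identification $\ko_{2+w}(BO_1)\cong\Z/8$ of the statement. First I would check that $\Sigma\mapsto w_{1!}([\Sigma])\in\ko_{2+w}(BO_1)$ is independent of the auxiliary embedding (exactly as in the \spin case treated after \eqref{pushforward_KO}), is additive under disjoint union (twisted fundamental classes add and pushforward is $\ko_*$-linear), and is a \pinm bordism invariant: if $\Sigma=\partial W$ for a compact \pinm $3$-manifold $W$, then $W$ carries a relative twisted fundamental class $[W,\Sigma]\in\ko_{3+w}(W,\Sigma)$ with $\partial[W,\Sigma]=[\Sigma]$, and $w_1\colon\Sigma\to BO_1$ extends over $W$, so $w_{1!}([\Sigma])$ factors through $\ko_{2+w}(\Sigma)\to\ko_{2+w}(W)$, which kills $[\Sigma]=\partial[W,\Sigma]$ by exactness. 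Hence $[\Sigma]\mapsto w_{1!}([\Sigma])$ descends to a homomorphism $\Omega_2^{\Pin^-}\to\ko_{2+w}(BO_1)\cong\Z/8$, and by the theorem above $\AB$ is an isomorphism, in particular a homomorphism, $\Omega_2^{\Pin^-}\to\mu_8\cong\Z/8$.

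Since $\Omega_2^{\Pin^-}\cong\Z/8$ is cyclic, generated by $\RP^2$ with either \pinm structure, it then suffices to verify $w_{1!}([\RP^2])=\AB(\RP^2)$ for one fixed choice of structure. By the earlier computation $\AB(\RP^2)$ is the primitive root $\exp(2\pi i/8)$ for the structure whose quadratic enhancement sends the generator of $H_1(\RP^2;\Z/2)$ to $1$; in particular it generates $\mu_8$.

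For the remaining computation of $w_{1!}([\RP^2])$ the cleanest route is the bridge to \S\ref{index_theoretic}: the $\KO$-theoretic pushforward of a twisted fundamental class computes the analytic index of the associated Dirac operator, and since $\Sigma$ is not itself $\KO$-orientable the role of that index is played by the reduced $\eta$-invariant $\overline\eta(\widetilde D_\Sigma)\in\Z[\tfrac14]/2\Z$ of the twisted Dirac operator, by an Atiyah-Patodi-Singer argument (compare the \spin case, where the $\eta$-term vanishes and one recovers $\dim\ker(D^+_\Sigma)\bmod 2=\pi^\Sigma_!(1)$ as in \cite{Ati71}). Granting this, the proposition follows from the result of Zhang in \S\ref{index_theoretic} identifying $\overline\eta(\widetilde D_\Sigma)$ with $\AB(\Sigma)$. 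Alternatively one computes $w_{1!}([\RP^2])$ topologically: the orientation bundle of $\RP^2$ is $\gamma_1|_{\RP^2}$, so $w_1$ is the standard inclusion $\RP^2\hookrightarrow\RP^\infty=BO_1$, and one pushes the twisted $\ko$-fundamental class forward using the structure of $\ko_*(\RP^\infty)$ recorded in \cite{BG10}. A third, structural, remark: the twisted Atiyah-Bott-Shapiro construction realizes $[\Sigma]\mapsto w_{1!}([\Sigma])$ as the effect on $\pi_2$ of a map of spectra from $\MTPin^-$ to a twisted $\ko$-theory spectrum which, as for $\MSpin\to\ko$, is an equivalence in degrees $\le 2$, so $w_{1!}$ is \emph{a priori} an isomorphism $\Z/8\to\Z/8$.

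The main obstacle is the final normalization: the structural inputs only show $w_{1!}$ is \emph{some} isomorphism $\Z/8\to\Z/8$, so to conclude it equals $\AB$ rather than $\AB$ postcomposed with one of the automorphisms $x\mapsto 3x,5x,7x$ of $\Z/8$, one must carry out an honest computation on $\RP^2$ --- tracing the twisted Thom class through the Pontryagin-Thom collapse, or running the APS index formula for $\widetilde D_{\RP^2}$ --- and verify that the answer is exactly $\exp(2\pi i/8)$, matching the $\KO$-theoretic normalization against the factor $1/\sqrt{\abs{H_1(\RP^2;\Z/2)}}$ in the definition of $\AB$. That reconciliation of normalizations is the delicate part.
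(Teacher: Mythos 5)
Your proposal follows essentially the same route as the paper, which likewise establishes this proposition by comparing the twisted $\ko$-pushforward with Zhang's topological index (observing that both factor through $\KO^0(\RP^2)$ and induce the same morphism to $\Z/8$) and then invoking Zhang's theorem identifying that index with the reduced $\eta$-invariant and hence with the Arf-Brown invariant. The normalization on $\RP^2$ that you flag as the remaining delicate step is precisely what the paper also defers, citing \cite{Zhang2} and leaving the comparison of the two maps to $\Z/8$ as a check for the reader; your bordism-invariance reduction to the single generator $\RP^2$ is a clean way to organize that check.
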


\begin{rem}\label{remark about defining index using $KO$-homology}
One may define an invariant in $\Z/8$ associated to any vector bundle on $\Sigma$ by first considering the twisted Poincar\'e duality isomorphism
\[
\KO^0(\Sigma) \to \KO_{2+w_1}(\Sigma)
\]
then taking the pushforward in twisted $\KO$-homology as before.
\end{rem}

\begin{rem}
The advantage of this perspective is that it naturally occurs as the induced map on homotopy groups of a map of spectra. Namely, the observation from \S\ref{twisted Thom} relating \pinm structures and \spin structures leads to the identification
\[
\MTPin^- \simeq \MPin^+ \simeq \MSpin \wedge \Sigma^{-1}\MO_1.
\]
Thus smashing the Atiyah-Bott-Shapiro orientation with the factor $\Sigma^{-1}\MO_1$ leads to a map of spectra
\[
\MTPin^- \to \ko \wedge \Sigma^{-1}\MO_1.
\]
The class $w_{1!}([M])$ as defined in \ref{w1push} is precisely the image of the class in $\Omega_2^{\Pin^-}$ defined by $M$.
\end{rem}

We now mention a few closely related $\KO$-theoretic constructions of the Arf-Brown invariant that appear in the literature.
\subsubsection{Zhang's construction}
Given a \pinm surface $\Sigma$, the classifying map of the orientation double cover  
\[
w_1: \Sigma \to BO_1 \simeq \RP^\infty
\]
is homotopic to a map which factors through the two skeleton $\RP^2$. Let $\widetilde{w_1}$ denote the corresponding map $\Sigma \to \RP^2$. The stable normal bundle of $\widetilde{w_1}$ is a virtual vector bundle of rank 0 which carries a spin structure (after choosing a \pinm structure on $\RP^2$), and thus there is a well-defined pushforward map
\[
\KO^\ast(\Sigma) \to \KO^\ast(\RP^2).
\]
In \cite{Zhang2}, Zhang defines the topological index of a vector bundle $V$ on $\Sigma$ as the image of $f_!V$ under a certain homomorphism
\[
\KO^0(\RP^2) \cong \Z \oplus \Z/4 \to \Z[\frac{1}{4}]/2 \cong \Z/8.
\]
It is shown in loc cit.\ that the topological index agrees with the reduced $\eta$-invariant of a twisted Dirac operator on $V$. In particular, the topological index of the trivial bundle agrees with the Arf-Brown invariant.

One can check that Zhang's construction of the topological index agrees with the one in Remark \ref{remark about defining index using $KO$-homology} (first observe that in each case, the index factors through $\KO^0(\RP^2)$, then check that the morphism to $\Z/8$ is the same).

\subsubsection{Distler-Freed-Moore construction}
Distler-Freed-Moore~\cite{DFM10} give a slightly different $\KO$-theoretic construction. In order to state it, we will need to consider two modifications of the cohomology theory $\KO$: first, we consider the Postnikov truncation of the connective cover $R=\KO\langle 0,1,2,3,4\rangle$; then we take $R$-cohomology with coefficients in $\R/\Z$ (this is represented by a spectrum $R(\R/\Z)$ which fits in to an exact triangle $R \to R\wedge H\R \to R(\R/\Z)$). Moreover, the authors consider a twisted version of this theory, as explained above. 

It is shown in \cite{DFM10} that $R^{w_1-2}(BO_1;\R/\Z)$ is cyclic of order 8. Given a closed \pinm surface $\Sigma$, the Arf-Brown invariant is obtained as the image of a generator under the morphisms
\[
R^{w_1-2}(BO_1;\R/\Z) \xrightarrow{w_1^\ast} R^{w_1-2}(\Sigma;\R/\Z) \xrightarrow{\pi^\Sigma_!} R^{-4}(pt;\R/\Z) \cong \R/\Z \xrightarrow{\exp} \C^\times.
\]

\section{Invertible TQFTs via stable homotopy theory}
	\label{invertible_TQFTs}

	In this section, we give a brief exposition of TQFTs leading to the classification of invertible TQFTs in terms of homotopy theory. For concreteness we will focus on the case of 2-dimensional TQFTs. The notion of fully extended TQFT requires some ideas from higher category theory, which we will discuss only at an informal level. However, when considering invertible theories, we will see that the higher categorical aspects may be reformulated in the (perhaps) more familiar terms of stable homotopy theory.

\subsection{What is a (2d) invertible TQFT?}
\label{2D_generalities}

\subsubsection{Atiyah-Segal Axioms} In \cite{atiyah1988} (see also~\cite{SegalCFT}) an $n$-dimensional TQFT was axiomatized as a symmetric monoidal functor
\[
Z:\Bord_{n,n-1} \to \Vect_\C
\] 
The source category has objects given by closed $(n-1)$-manifolds, and morphisms are cobordisms between them. Thus a TQFT $Z$ will assign a vector space $Z(Y^{n-1})$ to a closed $(n-1)$-manifold, and a linear map 
\[
Z(X):Z(Y_1)\to Z(Y_2)
\]
whenever $X$ is a cobordism between $Y_1$ and $Y_2$. Moreover, this assignment is compatible with the symmetric monoidal structures on each side: disjoint union of manifolds is taken to tensor product of vector spaces. In particular, the empty $(n-1)$-manifold $\emptyset^{n-1}$ is the unit object for the symmetric monoidal structure on $\Bord_{n,n-1}$, and thus we can identify $Z(\emptyset^{n-1})$ with the trivial line $\C$. Given a closed $n$-manifold, $X^n$, we may interpret it as a cobordism between empty $(n-1)$-manifolds, and thus we obtain a number $Z(X) \in \C$; this is referred to as the \term{partition function} of the theory.

In order to study the Arf-Brown TQFT, we will need a number of variations, extensions, and simplifications of these axioms.

\subsubsection{Fully extended TQFTs}  A 2-dimensional TQFT as defined above may be thought of as an assignment of an invariant $Z(X)\in \C$ to every closed $2$-manifold $X$, which satisfies a certain locality property. Namely, $Z(X)$ may be computed by decomposing $X$ in to pairs of pants and discs. We will be interested in \emph{fully extended TQFTs} which satisfy a stronger form of locality, allowing the partition function to be computed by decomposing $X$ in to arbitrarily small pieces (for example, a triangulation).

One way to make this precise involves replacing the ordinary category $\Bord_{2,1}$ with a certain 2-category $\Bord_2$. In this paper, we will use the term 2-category to denote what some authors would call a \emph{weak $2$-category}, or \emph{bicategory}. We will treat the subject in an informal and expository manner---a 2-category should have objects, $1$-morphisms, and $2$-morphisms which may be composed in various ways; for further details, see ~\cite{benabou_introduction_1967, SP17}. In our case, the objects of $\Bord_2$ are now $0$-manifolds, $1$-morphisms are $1$-dimensional cobordisms, and $2$-morphisms are diffeomorphism classes of certain $2$-manifolds with corners, interpreted as cobordisms between $1$-dimensional cobordisms. 

Similarly we must replace the category $\Vect_\C$ with an appropriate 2-category of coefficients for the theory. One choice is the Morita $2$-category $\Alg_\C$, whose objects are algebras, $1$-morphisms are bimodules, and $2$-morphisms are bimodule homomorphisms. As in the usual Atiyah-Segal axioms, these $2$-categories carry symmetric monoidal structures (for more details on symmetric monoidal 2-categories, see~\cite{Shu10}).  A fully extended 2d TQFT may be formulated as a symmetric monoidal functor:
\[
Z:\Bord_2 \to \Alg_\C
\]
Note that the original categories $\Bord_{1,2}$ and $\Vect_\C$ sit as the endomorphism categories of the unit objects in $\Bord_2$ and $\Alg_\C$ respectively, and thus a fully extended TQFT gives rise a TQFT as in Atiyah and Segal's original definition.

\subsubsection{Tangential structure} As the Arf-Brown invariant is an invariant of \pinm surfaces, to define the Arf-Brown TQFT, we must consider a variant of the bordism category in which all the manifolds are equipped with a \pinm structure. More generally, for any Lie group $H$ equipped with a homomorphism $H\to O_2$, there is a 2-category $\Bord_2^{H}$ defined as above, but now all manifolds are equipped with ${H}$-structures.\footnote{One defines a ${H}$-structure on a $0$ or $1$ manifold by taking the direct sum of the tangent bundle with a trivial bundle of appropriate rank (so the total rank is two). Or better, one can consider the manifolds appearing in the bordism category as being equipped with 2-dimensional collars. Making this precise in the smooth category takes some care; see \cite{SP17} for a careful treatment in the 2-dimensional case.}  

\begin{rem}[Structure groups]\label{rem-symmetry-groups}
In the cases of interest, the group $H$ is usually part of a family $H_n$, $n \in \Z_{>0}$, equipped with maps $H_n \to O_n$. For example, we could take $H_n$ to be $SO_n$, $\Spin_n$, $\Pin_n^{\pm}$, or $O_n$ itself. These examples all have the additional property that an $H_n$-structure on a vector bundle $V$ is equivalent to an $H_{n+1}$-structure on $V\oplus \underline\R$; thus we may think of an $H_n$ structure as an $H$-structure on the corresponding stable vector bundle (where $H = \varinjlim H_n$). In that case, we will generally write $\Bord_2^{H}$ instead of $\Bord_2^{H_2}$. We will also consider the framed case $H_n = 1$; however, note that a framing on an $n$-dimensional bundle is not determined by a stable framing in general.
\end{rem}
\begin{rem}\label{remark duals}
	Given a symmetric monoidal 2-category $\cC$, we say that $\cC$ \emph{has duals} if every object of $\cC$ admits a dual with respect to the monoidal structure and every 1-morphism of $\cC$ admits an adjoint. The bordism 2-categories $\Bord_2^{H_2}$ all have duals in this sense. Moreover, one version of the cobordism hypothesis \cite{} states that the framed bordism category $\Bord_2^{fr}$ (obtained by taking ${H_2}=1$ above) is the universal with this property, in the following sense: Given any symmetric monoidal 2-category with duals $\cC$, symmetric monoidal functors $\Bord_2^{fr} \to \cC$ are in correspondence with objects of $\cC$ (where the correspondence assigns to a functor, the value of the framed $0$-manifold $pt_+$).
\end{rem}

\subsubsection{Gradings}
To allow for more interesting theories we can also enlarge the coefficient category by considering algebras and bimodules with a $\Z/2\Z$-grading (and morphisms compatible with this grading). This gives rise to a 2-category $\sAlg_\C$ (the $2$-category of superalgebras), which is equipped with a symmetric monoidal structure incorporating the Koszul rule of signs. Later we will see that this choice of target category is universal in a certain sense  (see Corollary \ref{cor-salg-universal}).

\subsubsection{Examples}
\begin{exm}[Euler Theories]\label{example-euler}
	There is a TQFT $Z_1$ defined on unoriented manifolds which assigns only identity objects and morphisms in $s\Alg$. In particular, the partition function takes the constant value $1$. Slightly more interesting is the \emph{Euler theory} $Z_\lambda$, associated to a complex number $\lambda \in \C^\times$. This agrees with the trivial theory on $0$ and $1$-manifolds, but assigns the number $\lambda^{\chi(X)}$ (considered as a linear map between trivial lines) to any 2-dimensional cobordism. 
\end{exm}

\begin{exm}[2-dimensional Dijkgraaf-Witten Theory \cite{dijkgraaf_topological_1990, freed_topological_2010}]
	Given a finite group $G$, there is a 2d oriented TQFT whose partition function on a closed surface is the weighted sum
	\[
	Z_G(\Sigma) = \sum_{[P]} \frac{1}{|\Aut(P)|} 
	\]
	This theory assigns the group algebra of $G$ to a point, and the space of class functions to a circle.
\end{exm}
 
\begin{exm}[The Arf-Brown Theory]\label{exm-arf-brown}
We will see in \ref{arf_brown_theory} that the Arf-Brown invariant is the partition function of a 2d fully extended TQFT. In other words, there is a symmetric monoidal functor
\[
Z_{AB}:\Bord_2^{Pin^-} \to \sAlg
\]
such that for a \pinm surface $X$, $Z_{AB}(X)=\AB(X)$. The Arf-Brown theory takes the following values on lower dimensional manifolds:
\begin{itemize}
\item For a bounding \pinm circle $S^1_{b}$, $Z_{AB}(S^1_b)\cong \C$, an even line.
\item For a non-bounding \pinm circle $S^1_{nb}$, $Z_{AB}(S^1_{nb})\cong \C[1]$, an odd line.
\item We have $Z_{AB}(pt)\cong \Cl_1$, the first Clifford algebra.
\end{itemize}
\end{exm}

%


\subsubsection{Invertible theories} The Euler theories and the Arf-Brown theory have the property that every value of the functor $Z$ is invertible (either as an object with respect to the monoidal structure, or as a $1$ or $2$-morphism) in the symmetric monoidal 2-category $\sAlg$ (for example, Dijkgraaf-Witten theory does not have this property unless $G=1$). Such TQFTs are called \emph{invertible}.

When dealing with invertible theories we may restrict attention to the following class of 2-categories:
\begin{defn}
	A $2$-category is called a \emph{$2$-groupoid} if every $1$-morphism and $2$-morphism is invertible. A symmetric monoidal 2-category $\cC$ is called a \emph{Picard 2-groupoid} if it is a 2-groupoid and in addition every object is invertible with respect to the monoidal structure.  
\end{defn}
Note that if a symmetric monoidal 2-groupoid $\cC$ has duals, then it is necessarily a Picard 2-groupoid, i.e. the duals of objects must be inverses.

Given a symmetric monoidal $2$-category $\cC$, we may consider the maximal subcategory $\cC^\times$ which is a Picard 2-groupoid (i.e. throw out any non-invertible objects and morphisms). A TQFT 
\[
Z: \Bord_2^{H_2} \to s\Alg_\C
\]
 is invertible if and only if $Z$ factors through $s\Alg_\C^\times \to s\Alg_\C$. 

There is another way to associate a 2-groupoid  $\ls{\loc}{\cC}$ to a $2$-category $\cC$ by formally inverting any non-invertible $1$ and $2$-morphisms. This procedure is left adjoint to the inclusion of $2$-groupoids in to $2$-categories (the maximal $2$-groupoid is right adjoint). In other words, any functor from $\cC$ to a 2-groupoid will factor uniquely through $\ls\loc\cC$. Moreover, if $\cC$ is symmetric monoidal and has duals, then $\ls\loc\cC$ will be a Picard 2-groupoid. We note the following upshot: the data of an invertible (fully extended) $H$-TQFT is given by a functor of Picard 2-groupoids
\[
Z: \ls\loc\Bord_2^H \to \sAlg^\times
\]

\subsection{The homotopy hypothesis and stable 2-types}
\label{hom_hyp}
Now let us explain how an invertible TQFT may be reformulated in terms of maps in the stable homotopy category. 

\subsubsection{The fundamental $2$-groupoid and Postnikov truncations}
To begin with let us recall the following $2$-category associated to a space.
\begin{exm}
	Let $X$ be a topological space. The \emph{fundamental 2-groupoid} $\pi_{\leq 2}(X)$ of $X$ is given as follows:
	\begin{itemize}
		\item The objects are points of $X$,
		\item The $1$-morphisms are paths between points,
		\item The $2$-morphisms are given by homotopy classes of homotopies between paths.
	\end{itemize}
\end{exm}

To understand what information about a space the fundamental $2$-groupoid captures, let us recall the following:
\begin{defn}
We say that a topological space $X$ is a homotopy $n$-type if the homotopy groups $\pi_i(X)$ are non-zero only if $i= 0, 1,\ldots n$. Given any space $X$, there is a Postnikov fibration
\[
f_{\leq n}: X \to X_{\leq n}
\]
where $X_{\leq n}$ is an $n$-type and $f_{\leq n}$ induces an isomorphism on $\pi_i$ for $i=0,1, \ldots n$. We refer to $X_{\leq n}$ as the (Postnikov) $n$-truncation of $X$, or simply the $n$-type of $X$.
\end{defn}

\subsubsection{Unstable homotopy hypothesis}
The idea of the homotopy hypothesis (formulated by Grothendieck in ``Pursuing Stacks'') is that the homotopy theory of $n$-types should be modeled by $n$-groupoids. There are many forms the homotopy hypothesis might take, depending on what flavor of higher category theory one considers. In some cases, the result is almost tautological (if ones uses an inherently homotopical theory of higher category), and in others it is false (if one uses a too strict a higher category theory). 

In the case $n=2$, one expects\footnote{We were unable to locate a suitable reference for this particular statement.}  that the following categories are equivalent:
\begin{itemize}
	\item The homotopy category of $2$-types;
	\item The category of $2$-groupoids, with equivalence classes of $2$-functors.
\end{itemize}
The equivalence is constructed as follows: to a homotopy $2$-type $X$, we assign the fundamental $2$-groupoid. The inverse functor is given by the \emph{classifying space} of a $2$-groupoid. In general, the classifying space of a $2$-category assigns a space $|\cC|$ to a $2$-category $\cC$ (see \cite{Duskin}), constructed as the geometric realization of a certain simplicial set---the \emph{nerve} of $\cC$.

\begin{rem}
	The classifying space of a 2-category $\cC$ is weakly equivalent to the classifying space of its localization $\ls{\loc}{\cC}$ (in fact, one may construct the localization by taking the fundamental 2-groupoid of its classifying space). 
\end{rem}

\subsubsection{Stable homotopy hypothesis}
A symmetric monoidal structure on a $2$-category $\cC$ induces a binary operation on the classifying space $|\cC|$. This operation is commutative and associative up to homotopy; more precisely, it can be shown that $|\cC|$ carries an $E_\infty$ structure. If $\cC$ has duals (for example, if it is a Picard $2$-groupoid), then every object has an inverse up to homotopy, and the $E_\infty$ structure is said to be grouplike. 

Foundational results in homotopy theory (see~\cite[\S2.3]{InfiniteLoopSpaces} and the references therein, or~\cite[\S5.1.3]{HigherAlgebra} for a modern approach) state that a grouplike $E_\infty$-structure on a space $X$ is equivalent to an infinite loop space structure on $X$. Equivalently, $X = \Omega^\infty E$ may be identified as the zeroth space in a connective spectrum $E$ (the other spaces in the spectrum are given by iterated deloopings or loop spaces of $X$).

\begin{defn}
	A \emph{stable $n$-type} is a connective spectrum $E$ such that $\Omega^\infty E$ is a homotopy $n$-type.
\end{defn}

Thus we arrive at the \emph{stable homotopy hypothesis}, which states that there is an equivalence between:
\begin{itemize}
	\item The homotopy category stable $2$-types.
	\item The category of Picard $2$-groupoids with equivalence classes of symmetric monoidal functors.
\end{itemize}
Thanks to recent work of \cite{GJO17} this is now a precisely formulated theorem.

\subsection{Classifying invertible TQFTs up to isomorphism}
\label{classifying_2D}
One consequence of the homotopy hypothesis is that we may encode a 2d $G$-TQFT as a morphism of stable $2$-types:
\[
|Z|:|\Bord_2^G| \to |\sAlg_\C^\times|
\]
Thus to classify isomorphism classes of invertible TQFTs, we should classify homotopy classes of maps between stable $2$-types as above. 

\subsubsection{Stable Postnikov data}	
Let us first unpack the case of a stable $1$-type, which according to the appropriate version of the the homotopy hypothesis, is equivalent data to a Picard groupoid. 

A stable $1$-type $E$ may be succinctly encoded in terms of the following data:
	\begin{itemize}
		\item A pair of abelian groups $A=\pi_0E$ and $B=\pi_1E$,
		\item A homomorphism $k:A/2A \to B$
	\end{itemize}
The homomorphism $k$ encodes the $k$-invariant
\[
HA \to \Sigma^2 HB
\]
which defines the Postnikov tower of $E$. 
\begin{rem}[{\cite[\S3]{GJOS}}]\label{rem-action-of -eta}
The homomorphism $k$ may also be identified with the action of the generator $\eta \in \pi_1(\bbS) \cong Z/2$ on the homotopy groups of $E$ (see Remark \ref{rem-ring-spectra}).
\end{rem}

\begin{prop}[\cite{JO12}]\label{prop-picard}
Given a Picard $1$-groupoid $\cC$, we recover the above data as follows:
\begin{itemize}
	\item $\pi_0(|\cC|)$ is the set of equivalence classes of objects in $\cC$, with group structure coming from the symmetric monoidal structure.
	\item $\pi_1(|\cC|)$ is the set of automorphisms of the unit object in $\cC$.
	\item Given an object $a\in \cC$, the element $k(a) \in \pi_1(|\cC|)$ may be identified with the image of the symmetry map $\sigma \in \Aut(a\otimes a)$ under the equivalence $\Aut(1_\cC) \simeq \Aut(a \otimes a)$ induced by the functor $- \otimes (a\otimes a)$.
\end{itemize}
\end{prop}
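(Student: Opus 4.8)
The plan is to treat the three assertions in turn: the first two are formal consequences of the construction of the spectrum $|\cC|$, and the real content is in the third, which I would reduce to a computation of the action of the Hopf element $\eta\in\pi_1(\bbS)$. For assertions (1) and (2): since $\cC$ is a Picard groupoid, its classifying space is a grouplike $E_\infty$-space and hence the zeroth space of the connective spectrum we are calling $|\cC|$, so $\pi_n|\cC|$ computed in spectra agrees with $\pi_n$ of the classifying space for $n\geq 0$. For any groupoid, the classifying space has $\pi_0$ the set of isomorphism classes of objects and $\pi_1$ based at an object $x$ equal to $\Aut_\cC(x)$; the $E_\infty$-structure coming from $\otimes$ induces the asserted group structure on $\pi_0$, giving (1), and the basepoint of the $E_\infty$-space is $1_\cC$, so taking $x=1_\cC$ gives (2).

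For the $k$-invariant I would first invoke \cref{rem-action-of -eta}: the $k$-invariant $k\colon A/2A\to B$ of the stable $1$-type $|\cC|$ is the homomorphism induced by the action of $\eta$, so writing a class $[a]\in A=\pi_0|\cC|$ as a map $[a]\colon\bbS\to|\cC|$ we have $k(\overline{[a]})=[a]_*(\eta)\in\pi_1|\cC|=B$, where $[a]_*\colon\pi_1(\bbS)\to\pi_1|\cC|$ is post-composition. (Post-composition is additive in $[a]$, and $2\eta=0$, so this does factor through $A/2A$ and land in the $2$-torsion, consistent with $k$ being a homomorphism $A/2A\to B$.) It therefore suffices to show that $[a]_*(\eta)$ corresponds to $\sigma_{a,a}\in\Aut_\cC(a\otimes a)$ under the equivalence $\Aut_\cC(1_\cC)\xrightarrow{\sim}\Aut_\cC(a\otimes a)$, $f\mapsto f\otimes\id_{a\otimes a}$, from the statement.

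To identify $[a]_*(\eta)$ with the symmetry I would use two standard inputs. First, $\eta$ is detected by transpositions: $\eta\in\pi_1(\bbS)$ is the image of the generator of $\pi_1(\RP^\infty)=\pi_1(B\Sigma_2)$ under the stable map $\Sigma^\infty(B\Sigma_2)_+\to\bbS$ coming from the $E_\infty$-ring structure on $\bbS$ (equivalently, via the Kahn--Priddy/Barratt--Priddy--Quillen theorem), and that generator is itself the image of the generator of $\pi_1(\RP^1)$, $\RP^1=S^1$. Second, the map $[a]\colon\bbS\to|\cC|$ is compatible with the $E_\infty$-structures, so $[a]_*$ carries this structure map to the map $\Sigma^\infty(B\Sigma_2)_+\to|\cC|$ built from ``the square of $a$ with its $\Sigma_2$-action'': the symmetry axiom $\sigma_{a,a}^2=\id_{a\otimes a}$ makes $*\mapsto a\otimes a$, $1\mapsto\sigma_{a,a}$ into a functor $B(\Z/2)\to\cC$, and passing to classifying spaces and restricting along $\RP^1\inj\RP^\infty=B(\Z/2)$ yields a pointed map $S^1\to|\cC|$ landing in the component $[a\otimes a]=2[a]$ and sending the generator of $\pi_1(S^1)$ to $\sigma_{a,a}\in\Aut_\cC(a\otimes a)$. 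Finally, translating that component back to the basepoint is the classifying space of $-\otimes(a\otimes a)^{-1}$, which on automorphism groups is $g\mapsto g\otimes\id_{(a\otimes a)^{-1}}$ --- precisely the inverse of the equivalence in the statement. Tracing $\eta$ through $[a]_*$ along these two descriptions then yields $[a]_*(\eta)=\sigma_{a,a}$ under that equivalence, proving (3). (As a cross-check: for the Picard groupoid of complex super-lines, $A=\Z/2$, $B=\C^\times$, and this returns the Koszul sign $\sigma_{\mathrm{odd},\mathrm{odd}}=-1$.)

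The main obstacle is the compatibility invoked in the previous paragraph: that the spectrum-forming functor $\cC\mapsto|\cC|$ takes the categorical symmetry isomorphism to the $\eta$-action \emph{with the correct normalization} --- not to $0$, nor to $\eta$ followed by some further operation. Making this precise means fixing a model for $|\cC|$ --- a $\Gamma$-space or other $E_\infty$ delooping machine --- and tracking the $\Sigma_2$-equivariance of the two-fold multiplication through it, together with the classical identification of the transposition with $\eta$. One can sidestep much of this by first checking the identity on the universal example (the $1$-truncation of the free Picard groupoid on one object, whose associated spectrum is $\bbS_{\leq 1}$, where both sides are manifestly the nonzero homomorphism $\Z\to\Z/2$) and then invoking naturality in $\cC$ together with the fact that the triple $(A,B,k)$ determines the stable $1$-type. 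Everything else is bookkeeping.
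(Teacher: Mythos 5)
The paper does not prove this proposition; it is quoted from Johnson--Osorno \cite{JO12} (see also \cite{GJOS}), so there is no in-text argument to compare against. Your proposal is correct and is essentially the argument of the cited references: parts (1) and (2) are formal, and part (3) reduces via \cref{rem-action-of -eta} to identifying the action of $\eta$ with the symmetry $\sigma_{a,a}$, which you rightly flag as the only point of substance --- it requires tracking the $\Sigma_2$-equivariance of the two-fold multiplication through the chosen delooping machine together with the classical fact that the transposition in $B\Sigma_2 \to \Omega^\infty_0\bbS$ detects $\eta$. Your fallback of verifying the identity on the free Picard groupoid on one object (where both homomorphisms $\Z/2 \to \Z/2$ are visibly nonzero, so no normalization ambiguity can arise) and then invoking naturality is exactly the device used in \cite{JO12}, and is the cleanest way to discharge that compatibility.
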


Similarly, a stable $2$-type may be described by the following data:
\[
\xymatrix{
	\Sigma^2 HC \ar[r]^{i_2} & E = E\langle 0,1,2\rangle \ar[d] & \\
	\Sigma^1 HB \ar[r]^{i_1} & E\langle 0,1\rangle \ar[d] \ar[r]^{k_1} & \Sigma^3 HC \\
	\Sigma HA \ar[r]^{i_0} & E\langle 0\rangle \ar[r]^{k_0} & \Sigma^2 HB \\
	}
\]
As we have seen considering stable $1$-types, the maps $k_0 i_0$ and $k_1 i_1$ are classified by homomorphisms
\begin{equation}\label{eqk0}
A/2A \to B
\end{equation}
and 
\begin{equation}\label{eqk1}
B/2B \to C
\end{equation}

The map $i_0$ is an equivalence, so $k_0$ is determined by $k_0 i_0$. However, these data do not fix the homotopy class of $k_1$ itself and thus do not fix the homotopy type of $E$ in general.\footnote{For example, consider the Postnikov truncation $ku_{\leq 2}$ of connective complex $K$-theory. We have that $\pi_1(ku)=0$, so both \ref{eqk0} and \ref{eqk1} are necessarily zero; however, there is a non-zero $k$-invariant $H\Z \to \Sigma^3 H\Z$.}

\subsubsection{Bordism and Madsen-Tillmann Spectra}
Given a Lie group $H_n$ with a map $H_n\to O_n$, Galatius-Madsen-Tillmann-Weiss \cite{GMTW} define a spectrum $MTH_n$ as the Thom spectrum of the virtual vector bundle $-\gamma_n$ over $BH_n$ (where $\gamma_n$ is pulled back from the tautological vector bundle on $BO_n$). Building on the fundamental work of \cite{GMTW,MW} \cite{SP17} shows that the classifying space of $\Bord_2^{H_2}$ is given by the Postnikov 2-truncation of $\Omega^{\infty -2}MTH_2$.\footnote{More generally, in \cite{SP17} it is shown that the classifying space of the $(\infty, n)$-category $\Bord_n^{H_n}$ is $\Sigma^n MTH_n$.}

Let us also consider the following variant of the Bordism 2-category. Now suppose $H_n$ is a family of groups as in Remark \ref{rem-symmetry-groups}. Associated to such data we have the \emph{stable bordism $2$-category}, a symmetric monoidal $2$-category 
$\Bord_{2,st}^H$ 
defined in a similar way to $\Bord_2^{H_2}$, except the manifolds are now equipped with a $H$-structure on their stabilized tangent bundles (the direct limit of the sequence obtained by taking iterated direct sums with the trivial vector bundle), and crucially that 2-morphisms are given by 2-cobordisms modulo the relation given by 3-cobordisms (as opposed to diffeomorphism as before). 

Remarkably, the stable bordism $2$-category is already a Picard $2$-groupoid (the duals in the usual cobordism 2-category are inverses modulo cobordism). Its classifying space is the stable 2-type associated to the spectrum $MTH$, the direct limit of spectra $\Sigma^n MTH_n$, which represents the cohomology theory defined by $G$-cobordism of manifolds:
$\pi_i(MTH) = \Omega_i^H$ consists of smooth closed $i$-manifolds, with a $H$-structure on the stable tangent bundle, up to $H$-cobordism.

\begin{exm}
	In the case $H=1$, we have the \emph{framed} stable 2d-bordism category,
	$\Bord_{2,st}^{fr}$. Pontryagin-Thom theory identifies the classifying space of this category with the stable $2$-type of the sphere spectrum $\bbS$. The homotopy groups are given by
	\[
	\pi_0(\bbS) \cong \Z, \pi_1(\bbS) \cong \Z/2, \pi_2(\bbS) \cong \Z/2
	\]
	The generator $\eta$ of $\pi_1(\bbS)$ is represented by the stably framed manifold $S^1$ with its Lie group framing, and the generator $\eta^2$ of $\pi_2(\bbS)$ is represented by $S^1\times S^1$, also with its Lie group framing (see \ref{rem-ring-spectra})
\end{exm}

Let $H_n$ be a family of groups as in Remark \ref{rem-symmetry-groups}. There is a canonical functor 
\[
\Bord_2^{H_2} \to \Bord_{2,st}^H
\]
\begin{defn}
A $H_2$-TQFT
\[
Z: \Bord_2^{H_2}\to \sAlg_\C
\]
is called \emph{stable} if it factors through $\Bord^H_{2,st}$.
\end{defn}

\begin{exm}\label{pin-bordism-category-example}
We have described the homotopy groups of 
\[
|\Bord_{2,st}^{\Pin^-}| \simeq \MTPin^-\langle 0,1,2\rangle
\]
in Proposition \ref{prop-pin-bordism-groups}; in particular, recall that $\pi_2(\MTPin^-) = \Omega_2^{\Pin^-} \simeq \Z/8$. By the results of \cite{SP17,GMTW}, there is an equivalence:
\[
|\Bord_2^{\Pin^-}| \simeq \Sigma^2 MT\Pin_2^-\langle 0,1,2\rangle
\]
According to \cite{RW}, we have $\pi_2(\Sigma^2 MT\Pin_2^-) \cong \Z \oplus \Z/4$, and the map
\[
\Bord_2^{\Pin^-} \to \Bord_{2,st}^{\Pin^-}
\]
induces the map $\Z \oplus \Z/4 \to \Z/8$ which takes $(a,b)$ to $a+2b \mod 8$. The projection on to the first factor is represented by the Euler characteristic. In particular, we see that the Euler theories $Z_\lambda$ of Example \ref{example-euler} is stable if and only if $\lambda = \pm 1$.
\end{exm}

\begin{rem}\label{rem-ring-spectra}
The Thom spectra $MO$, $MSpin$, and $\bbS$ carry natural $E_{\infty}$-ring structures; geometrically, the ring operations corresponds to direct product of manifolds. Moreover, any spectrum is a module spectrum for the sphere spectrum in a unique way (just as any any abelian group is a module for the integers). In particular, the graded ring $\bigoplus_i \pi_i(\bbS)$ acts on $\bigoplus_i \pi_i(E)$ for any spectrum $E$. In the case $E$ is  bordism spectrum then this action may be understood in terms of direct product of appropriately structured manifolds. Note however, that $\MPin^\pm$ are not ring spectra - only module spectra for $\MSpin$.
\end{rem}

\subsubsection{Brown-Comenetz duality and invertible superalgebras}
We define a spectrum $I\C^\times$, which is a variant of the Brown-Comenetz dual of the sphere spectrum \cite{BC} (in the original construction $\Q/\Z$ was used in place of $\C^\times$). Abstractly, the spectrum $I\C^\times$ may be defined as follows in terms of its corresponding cohomology theory: for a spectrum $E$ we have
\[
(I\C^\times)^i(E) = \pi_i(E)^\vee = \Hom(\pi_i(E),\C^\times)
\]
where we write $A^\vee$ to denote the Pontryagin dual of an abelian group $A$. By construction, we have
\[
\pi_i(I\C^\times) = (I\C^\times)^{-i}(\bbS) = \pi_{-i}(\bbS)^\vee
\]
(in particular, $I\C^\times$ has vanishing homotopy groups in positive degree).

Consider the shifted spectrum $\Sigma^k I\C^\times$. This is characterized by the following universal property among spectra $E$:
\[
[E,\Sigma^k I\C^\times] \cong \Hom(\pi_k(E),\C^\times) = \pi_k(E)^\vee
\]
In particular, a character $c_k\in \pi_k(E)^\vee$ determines morphisms for each positive integer $i$:
\[
c_{k-i}: \pi_{k-i}(E) \to \pi_{k-i}(\Sigma^k I \C^\times) = \pi_i(\bbS)^\vee
\]
Unwinding the definitions, we see that these morphisms are computed by the action of $\pi_\ast(\bbS)$ on $\pi_\ast(E)$:
\begin{lem}\label{lem-bcdual-homotopy}
Given $E$ and $c_k \in \pi_k(E)^\vee$ as above, we have
\[
c_{k-i}(x)(\xi) = c_k(\xi.x)
\]
for all $x\in \pi_{k-i}(E)$, $\xi \in \pi_i(\bbS)$.
\end{lem}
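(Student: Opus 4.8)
The plan is to pass to spectra and exploit the fact (Remark~\ref{rem-ring-spectra}) that every spectrum is a module over $\bbS$ in an essentially unique way, so that every map of spectra induces a map of $\pi_\ast(\bbS)$-modules on homotopy. First I would let $\hat c\colon E\to\Sigma^k I\C^\times$ be the map of spectra corresponding to $c_k$ under the universal property $[E,\Sigma^k I\C^\times]\cong\pi_k(E)^\vee$. Chasing this isomorphism, which is natural in $E$, one sees that the character $c_k$ is literally $\pi_k(\hat c)$ followed by the identification $\pi_k(\Sigma^k I\C^\times)=\pi_0(I\C^\times)=\pi_0(\bbS)^\vee=\C^\times$, and that for each $i$ the homomorphism $c_{k-i}$ of the statement is $\pi_{k-i}(\hat c)\colon\pi_{k-i}(E)\to\pi_{k-i}(\Sigma^k I\C^\times)=\pi_i(\bbS)^\vee$.

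Next I would use that $\hat c$, being a map of spectra, is automatically a map of $\bbS$-module spectra (the $\bbS$-module structure is built from the unit equivalence $\bbS\wedge F\simeq F$, which is natural in $F$), so $\pi_\ast(\hat c)$ is a homomorphism of graded $\pi_\ast(\bbS)$-modules. Applying this in degrees $k-i$ and $k$ yields, for $x\in\pi_{k-i}(E)$ and $\xi\in\pi_i(\bbS)$,
\[
c_k(\xi.x)=\pi_k(\hat c)(\xi.x)=\xi.\bigl(\pi_{k-i}(\hat c)(x)\bigr)=\xi.c_{k-i}(x),
\]
where on the right the action is that of $\pi_i(\bbS)$ on $\pi_{k-i}(\Sigma^k I\C^\times)=\pi_i(\bbS)^\vee$, landing in $\pi_k(\Sigma^k I\C^\times)=\C^\times$.

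It then remains to identify this module action with evaluation, i.e.\ to check that $\xi.\varphi=\varphi(\xi)$ for $\varphi\in\pi_i(\bbS)^\vee$; this is the actual content of the lemma, again obtained by unwinding Brown--Comenetz duality. I would present $\varphi\in\pi_{-i}(I\C^\times)=(I\C^\times)^i(\bbS)$ by a map $\tilde\varphi\colon\bbS\to\Sigma^i I\C^\times$, note that under the universal property the character $\varphi$ sends $\xi$ to $\pi_i(\tilde\varphi)(\xi)\in\pi_i(\Sigma^i I\C^\times)\cong\C^\times$, and observe that the module action $\xi.\varphi$ is computed by smashing $\xi\colon\Sigma^i\bbS\to\bbS$ with $\tilde\varphi$ and applying the unit equivalence $\bbS\wedge I\C^\times\simeq I\C^\times$, which by $\bbS$-linearity of $\tilde\varphi$ is exactly the precomposition $\tilde\varphi\circ\xi$. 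Hence $\xi.\varphi=\pi_i(\tilde\varphi)(\xi)=\varphi(\xi)$, and combining with the previous display gives $c_{k-i}(x)(\xi)=c_k(\xi.x)$.

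The main obstacle is bookkeeping rather than ideas: one must keep the suspension/reindexing isomorphisms between $\pi_{k-i}(\Sigma^k I\C^\times)$, $\pi_{-i}(I\C^\times)$, $(I\C^\times)^i(\bbS)$ and $\pi_i(\bbS)^\vee$ straight, and use the naturality of the Brown--Comenetz universal property in the spectrum variable (concretely, by feeding it the map $\xi\colon\Sigma^i\bbS\to\bbS$) so that the two recipes producing an element of $\C^\times$ agree on the nose, with no stray sign.
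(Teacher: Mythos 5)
Your argument is correct: identifying $c_{k-i}$ with $\pi_{k-i}(\hat c)$, using $\bbS$-linearity of $\hat c$ on homotopy, and checking that the $\pi_i(\bbS)$-action on $\pi_{-i}(I\C^\times)=\pi_i(\bbS)^\vee$ is evaluation is exactly the unwinding of definitions the paper leaves implicit (it states the lemma with no written proof beyond ``unwinding the definitions''). The only caveat is the sign ambiguity you already flag, which is harmless here since the relevant classes $\eta,\eta^2$ are $2$-torsion.
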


The following key result identifies the Picard 2-groupoid corresponding to $\Sigma^2I\C^\times$.
\begin{prop}\label{prop-salg-bcdual}
	The stable $2$-type $|\sAlg_\C^\times|$ is equivalent to the connective cover of $\Sigma^2 I\C^\times$.
\end{prop}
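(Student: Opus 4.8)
The plan is to compute the homotopy groups and $k$-invariants of both sides and show they agree, using the dictionary from \Cref{prop-picard} (and its 2-groupoid analog) for $|\sAlg_\C^\times|$ and \Cref{lem-bcdual-homotopy} for the connective cover of $\Sigma^2 I\C^\times$. Since both are stable $2$-types, it suffices to match the triple of homotopy groups $(\pi_0,\pi_1,\pi_2)$ together with the two $k$-invariants encoded (as in the discussion of stable $2$-types above) by homomorphisms $\pi_0/2\pi_0 \to \pi_1$ and $\pi_1/2\pi_1 \to \pi_2$, plus the secondary $k$-invariant $k_1 \in [E\langle 0,1\rangle, \Sigma^3 H\pi_2]$.

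First I would identify the homotopy groups. On the Brown--Comenetz side, $\pi_i(\Sigma^2 I\C^\times) = \pi_{2-i}(\bbS)^\vee$, so the connective cover has $\pi_0 = \pi_2(\bbS)^\vee = (\Z/2)^\vee \cong \Z/2$, $\pi_1 = \pi_1(\bbS)^\vee = (\Z/2)^\vee \cong \Z/2$, and $\pi_2 = \pi_0(\bbS)^\vee = \Z^\vee \cong \C^\times$. On the superalgebra side I would compute directly in the Picard $2$-groupoid $\sAlg_\C^\times$: the invertible superalgebras over $\C$ up to Morita equivalence are $\C$ and $\Cl_1$, giving $\pi_0 \cong \Z/2$; the invertible bimodules over $\C$ (i.e.\ the Picard group of the unit) are the even and odd lines, giving $\pi_1 \cong \Z/2$; and the automorphisms of the unit bimodule $\C$ are $\C^\times$, giving $\pi_2 \cong \C^\times$. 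So the homotopy groups match.

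Next I would match the $k$-invariants. By \Cref{lem-bcdual-homotopy}, on the Brown--Comenetz side the two primary $k$-invariants are governed by the action of $\eta \in \pi_1(\bbS)$: the map $\pi_0/2\pi_0 \to \pi_1$ is $\eta$-multiplication $\pi_2(\bbS)^\vee \to \pi_1(\bbS)^\vee$ dual to $\eta\colon \pi_0(\bbS)\to\pi_1(\bbS)$... more precisely, unwinding \Cref{lem-bcdual-homotopy}, $c_{k-i}(x)(\xi) = c_k(\xi\cdot x)$, so each $k$-invariant is the dual of a Hopf-map action and is the \emph{nonzero} homomorphism $\Z/2\to\Z/2$ in both primary cases (since $\eta\colon \pi_0\bbS \to \pi_1\bbS$ and $\eta\colon\pi_1\bbS\to\pi_2\bbS$ are both surjective). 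On the $\sAlg_\C^\times$ side, \Cref{prop-picard} computes the first $k$-invariant as the image of the braiding $\sigma \in \Aut(a\otimes a)$ under $\Aut(1)\simeq\Aut(a\otimes a)$: for $a=\Cl_1$ one checks $\Cl_1\otimes\Cl_1 \simeq \Cl_2 \simeq M_2(\C)$ (as a superalgebra, using the graded tensor product) and the braiding is the nontrivial class, so $k_0 i_0$ is nonzero; for the odd line $\ell$, the braiding on $\ell\otimes\ell$ is $-1$, again nonzero, so $k_1 i_1$ is nonzero. This matches. Finally, for the secondary $k$-invariant $k_1$ I would argue that it is \emph{determined} once the primary data and the universal property of $\Sigma^2 I\C^\times$ are in play: concretely, use the universal property $[E,\Sigma^2 I\C^\times]\cong \pi_2(E)^\vee$ to produce a map $|\sAlg_\C^\times| \to \Sigma^2 I\C^\times$ classified by the identity character $\mathrm{id}\in \pi_2(|\sAlg_\C^\times|)^\vee = (\C^\times)^\vee$, check it is an iso on $\pi_2$ by construction, then invoke \Cref{lem-bcdual-homotopy} to see it is an iso on $\pi_0$ and $\pi_1$ as well (these follow from the braiding computations above matching the $\eta$-actions), and conclude it is an equivalence after connective cover by Whitehead's theorem.

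The main obstacle I expect is the secondary $k$-invariant and, underlying it, getting the braiding/Koszul-sign computations in the $2$-category $\sAlg_\C$ exactly right---in particular verifying that $\Cl_1\otimes\Cl_1$ with its \emph{graded} tensor product and the symmetry isomorphism represents the correct nonzero class, and that the functor built from the identity character is genuinely a symmetric monoidal functor (not merely a map of underlying spectra). Rather than compute $k_1$ by hand, the cleanest route is the one sketched: build the comparison map directly out of the universal property of Brown--Comenetz duality so that agreement of $k$-invariants is automatic, reducing everything to the (still somewhat delicate but finite) check on $\pi_0,\pi_1,\pi_2$ and the two braiding classes.
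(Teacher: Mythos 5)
Your proposal is correct, and the ``cleanest route'' you settle on at the end---constructing the comparison map $|\sAlg_\C^\times|\to\Sigma^2 I\C^\times$ from the universal property via the identity character on $\pi_2$, then using \cref{lem-bcdual-homotopy} to reduce the check on $\pi_0$ and $\pi_1$ to the $\eta$-action computations (the symmetry on an odd line giving $-1$, and $\eta\cdot[\Cl_1]$ being the odd line)---is exactly the paper's proof. The preliminary attempt to match Postnikov data by hand is an unnecessary detour, but you correctly identify the secondary $k$-invariant as the obstruction and pivot to the right argument.
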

We give a proof of this result in \ref{proof} below.

As an immediate consequence of Proposition \ref{prop-salg-bcdual} we obtain the following notable result:
\begin{cor}\label{cor-salg-universal}
Any character $c:\pi_2(MTH_2) \to \C^\times$ arises as the partition function of a unique invertible TQFT
\[
Z_c:Z:\Bord_2^G \to \sAlg_\C
\]
\end{cor}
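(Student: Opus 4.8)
The plan is to deduce Corollary~\ref{cor-salg-universal} from Proposition~\ref{prop-salg-bcdual} together with the identification $|\Bord_2^{H_2}| \simeq \Sigma^2 MTH_2\langle 0,1,2\rangle$ recalled in \S\ref{classifying_2D} (the $\SP$-type statement) and the stable homotopy hypothesis of \S\ref{hom_hyp}. Here is how I would organize it.

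First, unwind what an invertible $H_2$-TQFT is in homotopy-theoretic terms. By the discussion following Remark~\ref{remark duals} and the stable homotopy hypothesis, an invertible TQFT $Z\colon\Bord_2^{H_2}\to\sAlg_\C$ is the same datum as a map of Picard 2-groupoids $\ls\loc\Bord_2^{H_2}\to\sAlg_\C^\times$, and passing to classifying spaces (which, as noted, does not see the difference between $\cC$ and $\ls\loc\cC$), this is the same as a map of connective spectra
\[
|\Bord_2^{H_2}| \;\longrightarrow\; |\sAlg_\C^\times|.
\]
By the cited result of \cite{SP17,GMTW}, the source is the Postnikov truncation $\Sigma^2 MTH_2\langle 0,1,2\rangle$, and by Proposition~\ref{prop-salg-bcdual} the target is the connective cover $(\Sigma^2 I\C^\times)\langle 0,1,2\rangle$ (which is itself a $2$-type). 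So isomorphism classes of invertible $H_2$-TQFTs are in bijection with $[\,\Sigma^2 MTH_2\langle 0,1,2\rangle,\; \Sigma^2 I\C^\times\langle 0,1,2\rangle\,]$.

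Second, reduce this mapping set to a Pontryagin dual. Since $\Sigma^2 I\C^\times$ is coconnective above degree $2$ in the relevant range and the source is connective, maps out of $\Sigma^2 MTH_2$ into $\Sigma^2 I\C^\times$ (before truncation) are the same as maps into $\Sigma^2 I\C^\times\langle 0,1,2\rangle$; more precisely, truncating the source to a $2$-type does not change $[-,\Sigma^2 I\C^\times]$ because $\Sigma^2 I\C^\times$ has homotopy concentrated in degrees $\le 2$, so only the $0$-, $1$-, $2$-type of the source matters. Then invoke the defining universal property of the shifted Brown--Comenetz dual spectrum recalled just before Lemma~\ref{lem-bcdual-homotopy}:
\[
[\,E,\ \Sigma^2 I\C^\times\,] \;\cong\; \Hom(\pi_2(E),\C^\times) \;=\; \pi_2(E)^\vee.
\]
Applying this with $E = \Sigma^2 MTH_2$ (for which $\pi_2(E)=\pi_0(MTH_2)$ — wait, rather $\pi_2(\Sigma^2 MTH_2)=\pi_0(MTH_2)$; I should instead apply it with $E$ such that $\pi_2(E)=\pi_2(MTH_2)=\Omega_2^{H}$, namely $E=MTH_2$ together with a degree shift, or directly note $[\,|\Bord_2^{H_2}|,\Sigma^2 I\C^\times] \cong \Hom(\pi_2|\Bord_2^{H_2}|,\C^\times)=\Hom(\pi_2(MTH_2),\C^\times)$, using that $\pi_2(\Sigma^2 MTH_2\langle0,1,2\rangle)\cong\pi_0(MTH_2)\cong\pi_2(MTH_2)$ is not right either). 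The clean statement to use is simply $[\,|\Bord_2^{H_2}|,\ \Sigma^2 I\C^\times\,]\cong\pi_2(|\Bord_2^{H_2}|)^\vee=\Omega_2^{H}{}^\vee=\Hom(\Omega_2^H,\C^\times)$, i.e.\ exactly the set of characters $c\colon\pi_2(MTH_2)\to\C^\times$. This establishes the bijection: each character arises from a unique isomorphism class of invertible TQFT.

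Third, check that the bijection sends $Z_c$ to the partition function $c$. Under the Pontryagin--Thom identification $\Omega_2^H\cong\pi_2(MTH_2)$ (Theorem~\ref{PTthm}), the class of a closed $H$-surface $\Sigma$ maps to the corresponding homotopy class; chasing through Proposition~\ref{prop-salg-bcdual}'s identification $|\sAlg_\C^\times|\simeq\Sigma^2 I\C^\times\langle 0,1,2\rangle$, the induced map on $\pi_2$ computes precisely $Z(\Sigma)\in\pi_2(|\sAlg_\C^\times|)$, which Proposition~\ref{prop-salg-bcdual} identifies with $\pi_2(\Sigma^2 I\C^\times)=\pi_0(\bbS)^\vee=\C^\times$; so the induced map $\Omega_2^H\to\C^\times$ is the partition function, and by construction it equals $c$.

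I expect the main obstacle to be the bookkeeping in the last step: verifying that under the chain of identifications (stable homotopy hypothesis, the $|\Bord|\simeq\Sigma^2 MTH_2\langle 0,1,2\rangle$ theorem of \cite{SP17}, and Proposition~\ref{prop-salg-bcdual}) the element of $\pi_2$ that a surface $\Sigma$ produces is literally the complex number $Z_{AB}(\Sigma)$ and not, say, its inverse or some twist — i.e.\ pinning down signs and normalizations. The existence-and-uniqueness part is essentially formal once Proposition~\ref{prop-salg-bcdual} is granted. One should also remark that truncation to a $2$-type on the source is harmless precisely because $\Sigma^2 I\C^\times$ has no homotopy above degree $2$, so no higher Postnikov data of the bordism spectrum can obstruct or contribute extra maps.
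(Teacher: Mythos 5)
Your proof is correct and is essentially the argument the paper intends: the corollary is stated there as an immediate consequence of Proposition \ref{prop-salg-bcdual}, obtained exactly as you do by combining the stable homotopy hypothesis, the identification $|\Bord_2^{H_2}|\simeq \Sigma^2 MTH_2\langle 0,1,2\rangle$, and the universal property of the shifted Brown--Comenetz dual. The indexing wobble you flag mid-proof is resolved correctly (the relevant group is $\pi_2(|\Bord_2^{H_2}|)=\pi_0(MTH_2)$, the unstable $H_2$-bordism group of surfaces, matching the paper's own slightly loose ``$\pi_2(MTH_2)$''), and your observation that truncating the source is harmless because $\Sigma^2 I\C^\times$ is $2$-coconnective is exactly the right justification.
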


\begin{exm}
Continuing with Example \ref{pin-bordism-category-example}, we see that isomorphism classes of invertible \pinm TQFTs are given by
\[
(\Z \oplus \Z/4)^\vee = \C^\times \times \mu_4
\] 
On the other hand, the stable theories are represented by the cyclic subgroup generated by $(\zeta_8,\zeta_4)$ (where $\zeta_4$ and $\zeta_8$ denote a primitive roots of unity).
\end{exm}

\subsubsection{Deformation classes of theories and the Freed-Hopkins classification}
\label{deformation_classes}
	The natural computation from the perspective of topological phases is not to do with \emph{isomorphism classes} of theories, but rather \emph{deformation classes}. Informally this may be understood in terms of replacing the $2$-category $s\Alg_\C$ with an appropriate topological $2$-category in which $\C$ is considered with its continuous topology. More precisely, one considers the \emph{Anderson dual} spectrum $I\Z$ which may be defined as the homotopy fiber of the map $H\C \to I\C^\times$ given by the exponential map. In particular, there is a map 
	\[
	|s\Alg_\C^\times| \simeq \Sigma^2I\C^\times \to \Sigma^3 I\Z
	\]
	Which can be understood as passing from the discrete to the continuous topology on $\C$. Deformation classes of 2-dimensional invertible theories may then be computed in terms of homotopy classes of maps in to $\Sigma^3I\Z$. 
	
	In fact, the relevant computation for the classification of symmetry protected phases concerns the deformation classes of \emph{reflection positive} theories. It is shown in \cite{FH16} that the deformation class of such a theory may always be represented by a stable theory. In the case of interest in this paper (2-dimensional \pinm theories), deformation classes of reflection positive theories are in natural bijection with isomorphism classes of stable theories: both groups are cyclic of order 8.

\subsubsection{Proof of \ref{prop-salg-bcdual}}\label{proof}
By the defining property of $I\C^\times$, there is a unique map of spectra 
	\[
	c:|\sAlg_\C^\times| \to \Sigma^2I\C^\times
	\]
	which induces the identity map 
	\[
	c_2 = \pi_2(c):\C^\times = \pi_2(|\sAlg_\C^\times|)\to  \pi_2(\Sigma^2 I \C^\times) = \pi_0(\bbS)^\vee = \C^\times
	\]
	To prove the proposition, we must check that the morphism $c_2$ induces isomorphisms
	\[
	c_1: \Z/2\Z \cong\pi_1(|\sAlg_\C) \to \pi_1(\Sigma^2 I\C^\times) = \pi_1(\bbS)^\vee \cong \mu_2
	\]
	and
	\[
	c_0: \Z/2\Z \cong \pi_0(\sAlg_\C) \to \pi_0(\Sigma^2 I\C^\times) = \pi_2(\bbS)^\vee \cong \mu_2
	\]
	
	By Lemma \ref{lem-bcdual-homotopy}, to understand the maps $c_0$ and $c_1$, we must compute the action of $\pi_1(\bbS)$ and $\pi_2(\bbS)$ on $\pi_\ast(|\sAlg_\C^\times|)$.

	
	First consider the generator $\eta \in \pi_1(\bbS)\cong \Z/2\Z$. Recall from Remark \ref{rem-action-of -eta} that the action of $\eta$ on the homotopy groups of the classifying space of a Picard groupoid is given by the formula in Proposition \ref{prop-picard} (which also encodes the unique $k$-invariant of the stable $1$-type). Consider the Picard groupoid $s\Vect_\C^\times$, whose classifying space is $\Omega|\sAlg_\C^\times|$. As explained in Proposition \ref{prop-picard}, the action of $\eta$ is given by
	\[
	\pi_1(\sAlg_\C) \otimes \Z/2\Z = \pi_0(s\Vect_\C^\times) \otimes \Z/2\Z \cong \Z/2\Z \to \pi_1(s\Vect_\C^\times) = \C^\times
	\]
	which takes the generator (represented by an odd line) to the number $-1 \in \C^\times$. It follows that $c_1$
	applied to the class of on odd line gives the unique non-trivial character of $\pi_1(\bbS) \cong \Z/2\Z$, and thus $c_1$ is an isomorphism as required.
	
	Now $\pi_2(\bbS) \cong \Z/2\Z$ is generated by $\eta^2$. Thus, the action of $\eta^2$ is a composite:
	\[
	\pi_0(|\sAlg_\C^\times|) \xrightarrow{\eta} \pi_1(|\sAlg_\C^\times|) \xrightarrow{\eta} \pi_2(|\sAlg_\C^\times|)
	\]
	Thus it remains to compute the map between $\pi_0$ and $\pi_1$. For this, we consider the Picard groupoid representing the $\langle 0,1\rangle$ Postnikov truncation of $|\sAlg_C^\times|$; its objects are Morita invertible complex superalgebras, and morphisms are \emph{isomorphism classes} of invertible bimodules. The action of $\eta$ is computed via the same method as before using the symmetry isomorphism. One sees that the generator of $\pi_0(|\sAlg_\C^\times|)$ (represented by the Clifford algebra $\Cl_1$) is taken by $\eta$ to the non-trivial class $\pi_1(|\sAlg_\C^\times|)$ represented by the odd line.
	
	Putting everything together we observe that $c_0$ takes the class of $\Cl_1$ to the unique non-trivial character of $\pi_2(\bbS)$, and thus is an isomorphism as required.

%
%
%

\section{The Arf-Brown TQFT}
\label{arf_brown_TQFT}
\subsubsection{The Arf-Brown Theory}\label{arf_brown_theory}
In particular, recall the Arf-Brown invariant
\[
\AB:\Omega_2^{\Pin^-} \to \mu_8 \subseteq \C^\times
\]
We denote by
\[
Z_{AB}:\Bord_2^{\Pin^-} \to s\Alg_\C
\]
the unique TQFT with partition function given by $\AB$ (see Corollary \ref{cor-salg-universal}).

\begin{prop}
	The \pinm Arf-Brown TQFT
	\[
	Z_{AB}:\Bord_2^{\Pin^-} \to s\Alg_\C
	\]
	 assigns the following invariants:
	\begin{itemize}
		\item To a \pinm point, $Z_{AB}$ assigns the first Clifford algebra $\Cl_1$.
		\item To a bounding \pinm circle, $Z_{AB}$ assigns an even line $\C$.
		\item To a non-bounding \pinm circle, $Z_{AB}$ assigns an odd line $\C[1]$.
	\end{itemize}
\end{prop}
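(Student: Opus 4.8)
The plan is to read off the three values from the description of $Z_{AB}$ as a map of stable $2$-types, reusing the low-degree computations from the proof of \cref{prop-salg-bcdual}. By \cref{cor-salg-universal}, the map of stable $2$-types underlying $Z_{AB}$ is classified by the character $\AB$ on $\pi_2$; and since $\AB$ is a bordism invariant it factors through the stabilization $\pi_2(|\Bord_2^{\Pin^-}|) = \Z\oplus\Z/4 \to \Z/8 = \Omega_2^{\Pin^-}$ of \cref{pin-bordism-category-example}. Hence this map factors as
\[
|\Bord_2^{\Pin^-}| \longrightarrow |\Bord_{2,st}^{\Pin^-}| \simeq \MTPin^-\langle 0,1,2\rangle \xrightarrow{\ c\ } \Sigma^2 I\C^\times\langle 0\rangle \simeq |\sAlg_\C^\times|,
\]
where, by the universal property of $I\C^\times$, the map $c$ is the one classified by $\AB\in\pi_2(\MTPin^-)^\vee = \Hom(\Omega_2^{\Pin^-},\C^\times)$. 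The first arrow is an isomorphism on $\pi_0$ and $\pi_1$ and carries the class of a \pinm point (resp.\ of $\Snb^1$) to the corresponding class in the stable category, so the task reduces to evaluating $c$ on $\pi_0$ and $\pi_1$ of these generators and translating through the identifications $\pi_0(|\sAlg_\C^\times|)\cong\pi_2(\bbS)^\vee$ and $\pi_1(|\sAlg_\C^\times|)\cong\pi_1(\bbS)^\vee$ established in \S\ref{proof} (under which, respectively, the class of $\Cl_1$ and the class of the odd line correspond to the nontrivial characters).

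For the circles, $[\Sb^1]=0$ in $\Omega_1^{\Pin^-}$, so $Z_{AB}(\Sb^1)$ is the monoidal unit, the even line $\C$. For the generator $[\Snb^1]$ of $\Omega_1^{\Pin^-}\cong\Z/2$, \cref{lem-bcdual-homotopy} identifies $c$ on $\pi_1$ with $x\mapsto\bigl(\eta\mapsto\AB(\eta\cdot x)\bigr)$. Now $\MTPin^-\simeq\MPin^+$ is an $\MSpin$-module (\cref{rem-ring-spectra}), and $\eta$ acts through its image $[S^1_{\mathrm{Lie}}]\in\Omega_1^{\Spin}$ (the nonbounding spin circle); since $[\Snb^1]$ is the forgetful image of that same circle, $\eta\cdot[\Snb^1]$ is the \pinm class of $S^1_{\mathrm{Lie}}\times\Snb^1$, which — being orientable, hence spin — is the Lie-group torus, i.e.\ $4\in\Z/8\cong\Omega_2^{\Pin^-}$ by \cref{prop-pin-bordism-groups}. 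Since $\AB(4)=-1$ (equivalently, the Lie-group torus has Arf invariant $1$), the resulting character of $\pi_1(\bbS)$ is nontrivial, so $Z_{AB}(\Snb^1)\cong\C[1]$.

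For the point, $\Omega_0^{\Pin^-}\cong\Z/2$ is generated by the (unique) \pinm point $[\pt]$, and \cref{lem-bcdual-homotopy} identifies $c$ on $\pi_0$ with $x\mapsto\bigl(\eta^2\mapsto\AB(\eta^2\cdot x)\bigr)$. Here $\eta^2$ acts through its image in $\Omega_2^{\Spin}$, again the Lie-group torus, and multiplying by $[\pt]$ in the $\MSpin$-module $\MTPin^-$ only reinterprets it as a \pinm surface; so $\eta^2\cdot[\pt]=4\in\Omega_2^{\Pin^-}$ and $\AB(\eta^2\cdot[\pt])=-1$. The character of $\pi_2(\bbS)$ is thus nontrivial, whence $Z_{AB}(\pt)\cong\Cl_1$. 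The only non-formal input in all of this is the fact $\AB(4)=-1$ (the nonvanishing of the Lie-group torus in $\Omega_2^{\Pin^-}$); everything else is bookkeeping, and the step most prone to error is keeping the three Pontryagin dualities $\pi_\bullet(|\sAlg_\C^\times|)\cong\pi_{2-\bullet}(\bbS)^\vee$ straight while matching $\eta$, $\eta^2$, the \pinm point, and $\Snb^1$ against the even and odd lines and $\Cl_1$ on the algebra side.
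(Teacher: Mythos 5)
Your proposal is correct and follows essentially the same route as the paper's proof: both reduce the computation to evaluating the character $\AB$ on $\eta\cdot[\Snb^1]$ and $\eta^2\cdot[\pt]$ via \cref{lem-bcdual-homotopy}, identifying each product with the Lie-group torus, whose class is $4\in\Omega_2^{\Pin^-}\cong\Z/8$ with $\AB = -1$, and then matching the resulting nontrivial characters of $\pi_1(\bbS)$ and $\pi_2(\bbS)$ with the odd line and $\Cl_1$ using the identifications from the proof of \cref{prop-salg-bcdual}. Your treatment is if anything slightly more explicit than the paper's (spelling out the factorization through $\Bord_{2,st}^{\Pin^-}$ and the trivial case of $\Sb^1$), but the substance is the same.
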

\begin{proof}
	We use Lemma \ref{lem-bcdual-homotopy} to compute the values of $Z_{AB}$ on closed manifolds in terms of the action of $\pi_\ast(\bbS)$ on $\pi_\ast(\MTPin^-)$. 
	
	Recall that the homotopy groups of $\bbS$ (respectively $MTPin^-$) are given by cobordism classes of stably framed manifolds (respectively, \pinm-manifolds). As usual, let $\eta$ denote the generator of $\pi_1(\bbS)$, which is represented by the cobordism class of the circle $S^1$ with its Lie group framing (which induces the non-bounding \pinm structure). Thus the operation ``multiplication by $\eta$'' on $\pi_\ast(MTPin^-)$ may be understood as direct product with the \pinm manifold $S^1_{nb}$.
	
	We have \cite{KT90} that the class of $S^1_{nb}$ is a generator of $\pi_2(MTPin^-) \cong \Z/2\Z$. The class of $S^1_{nb} \times S^1_{nb}$ is the unique element of order 2 in $\pi_2(MTPin^-)$, and its Arf-Brown invariant is $-1 \in \C^\times$.

    It follows that $Z_{AB}$ takes the non-bounding circle to the unique non-trivial character of $\pi_1(\bbS)$ (which takes the class $\eta$ represented by $S^1$ to $-1$). Similarly, $Z_{AB}$ takes the \pinm point to the unique non-trivial character of $\pi_2(\bbS)$ (which takes the class $\eta^2$ represented by $S^1\times S^1$ to $-1$), as required.
    \end{proof}

\begin{rem}
The Arf-Brown theory gives rise to an invertible \spin TQFT as the composite
\[
Z_{A}: \Bord_2^{\Spin} \to \Bord_2^{\Pin^-} \to s\Alg_\C
\]
which assigns the Arf invariant to a closed Spin surface. This TQFT was studied extensively in \cite{Gunningham}.
\end{rem}

\subsubsection{The Atiyah-Bott-Shapiro orientation revisited}

First let us recall from Section \S\ref{sec:the-atiyah-bott-shapiro-orientation-and-pushforward-maps-in-ko-theory} that the Arf invariant of a closed Spin surface $\Sigma$ may be constructed as a pushforward in $KO$-theory:
\[
\pi^\Sigma_! 1_\Sigma \in ko^{-2}(pt) \cong \Z/2\Z
\]

As explained in \S\ref{sec:the-atiyah-bott-shapiro-orientation-and-pushforward-maps-in-ko-theory} the pushforward map in $ko$-theory is constructed using the Atiyah-Bott-Shapiro orientation of $KO$, which may be encoded as a map of spectra
\[
\widehat{A}: MTSpin \to ko
\]

In fact (as shown in \cite{Gunningham}) the entire Arf theory factors naturally through the Atiyah-Bott-Shapiro orientation:\footnote{The map $Cliff$ is so called because it takes an element of $ko(pt)$, represented by a finite dimensional vector space $V$, to the (complex) Clifford algebra associated to the a positive definite inner product on $V$.}
\[
Z_A: |\Bord_2^{\Spin}| \to MTSpin \xrightarrow{\widehat{A}} ko \xrightarrow{Cliff} |s\Alg_\C^\times|
\]

As explained in \S\ref{twisted Thom}, the Arf-Brown invariant may also be interpreted as a pushforward in (twisted) $\KO$-theory. One may reinterpret this as arising from the following twisted form of the Atiyah-Bott-Shapiro orientation:
\[
\MTPin^- \simeq \MTSpin \wedge \Sigma^{-1}MO_1 \xrightarrow{\widehat{A}\wedge id} ko \wedge \Sigma^{-1} MO_1
\]
The factor $\Sigma^{-1}MO_1$ is the Thom spectrum of the virtual vector bundle over $BO_1$ corresponding to the representation sphere $S^{\sigma -1}$. 

Note that the twisted Atiyah-Bott-Shapiro orientation induces an isomorphism
\[
\pi_2(\MTPin^-) \cong \pi_2(\ko \wedge \Sigma^{-1}MO_1).
\]
It follows that the Arf-Brown theory factors through the twisted ABS map:
\[
|\Bord_2^{\Pin-}| \to \MTPin^- \to ko \wedge \Sigma^{-1}ko \to |s\Alg_\C^\times|
\]

\begin{rem}
Freed-Hopkins~\cite{FH16} define an involution on $\Sigma^2\MTPin_2^-$ obtained by taking a \pinm structure to its
opposite (obtained by tensoring with the orientation double cover), and an involution on $\abs{\sAlg_\C^\times}$
induced by complex conjugation. The homotopy fixed point spectra are
$(\Sigma^2\MTPin_2^-)^{h\Z/2}\simeq\Sigma^2\MTSpin_2$ and $\abs{\sAlg_\C^\times}^{h\Z/2}\simeq
\abs{\sAlg_\R^\times}$. A 2D invertible \pinm TQFT is said to have \term{reflection structure} if the map of
spectra $\Sigma^2\MTPin_2^-\to\abs{\sAlg_\C^\times}$ it defines is equivariant with respect to these
involutions~\cite{FH16} (see also~\cite{JF17}).

The Arf-Brown theory naturally carries a reflection structure, which follows from the arguments of
Freed-Hopkins~\cite{FH16}. This is also hinted at by physics: in \S\ref{TR_Majorana} we discuss how the $\Z/8$
classification of deformation classes of 2D invertible \pinm{} TQFTs is conjecturally linked to the $\Z/8$
classification of 2D \pinm{} SPT phases. Some physics-based classifications of these SPTs~\cite{FK11, GJF17} are
rooted in real superalgebra, tying the $\Z/8$ classification to the 8-fold periodicity of Morita equivalence
classes of real Clifford algebras.
\end{rem}


%

\section{The time-reversal-invariant Majorana chain}
\label{TR_Majorana}
The Arf-Brown TQFT is believed to arise in physics as part of the classification of topological phases of matter.
In this section, we discuss one of its conjectural appearances, as the low-energy theory of the Majorana chain with
time-reversal symmetry, and some background on this occurrence.

\subsection{Symmetry-protected topological phases}
\label{physics_overview}
Condensed-matter theorists are interested in classifying topological phases of matter: given a dimension and a
collection of symmetries to be preserved (called the \term{symmetry type}), what physical systems can occur, and
what kind of data is needed to specify one up to a suitable notion of equivalence? This is a difficult problem in
general, but can be simplified by restricting to nice subclasses of phases.

This problem is complicated by the lack of a mathematical definition of a topological phase. Nonetheless, arguments
from physics suggest some properties that a definition will have: for example, given two topological phases with
the same dimension and symmetry type, it should be possible to formulate them both on the same ambient space but
with no interactions between them, creating another phase. This commutative monoid-like operation is called
\term{stacking}.
\begin{defn}
A \term{symmetry-protected topological (SPT) phase} is a topological phase of matter which is invertible under
stacking: after stacking with some other phase, it  equivalent to the trivial phase.\footnote{We haven't provided a
definition of the trivial phase, and the definition will depend on one's model for topological phases. But it
should have no interesting physics, and its partition functions on closed manifolds should all be equal to 1.}
\end{defn}
Though this isn't a mathematical definition, it tells us that equivalence classes of SPTs should form an abelian
group. The computation of this abelian group given a dimension and symmetry type has been the subject of
considerable recent research activity at the interface of topology and physics (for a long list of references,
see~\cite[\S1]{GJF17}).
\begin{rem}
The original definition in physics of a symmetry-protected phase is one which is inequivalent to the trivial theory
when its symmetry type is considered, but which is equivalent in the absence of symmetry. According to this
definition, the trivial phase is not an SPT, so the group structure is lost. Our interest in the group structure
motivates us to allow the trivial phase.
\end{rem}
To classify SPTs, one generally needs a model for phases of matter and equivalence between them.\footnote{We note,
however, the existence of model-independent approaches~\cite{Xio17, XA17, GJF17}.} Lattice models are a common
choice: roughly speaking, an $n$-dimensional lattice model is a way of assigning to any closed $n$-manifold $M$
with a simplicial structure the following data:
\begin{itemize}
	\item a complex vector space $\cH$ determined by local combinatorial data on $M$, called the \term{state
	space}; and
	\item a self-adjoint operator $H\colon\cH\to\cH$ also determined by local combinatorial data, called the
	\term{Hamiltonian}.
\end{itemize}
A lattice model is \term{gapped} if there is an $\e > 0$ such that as the simplicial structure is refined on any
closed $n$-manifold $M$, the difference between the two smallest eigenvalues of $H$ is greater than $\e$. Two
gapped lattice models are equivalent if one can be deformed into the other through local deformations of $\cH$ and
$H$ that preserve a gap in $\Spec H$.

The symmetry type corresponds to a choice of tangential structure on $M$, expressed in terms of the simplicial
structure. Here are some examples.
\begin{itemize}
	\item The default symmetry type fixes an orientation on $M$, expressed through a consistent local orientation
	of its simplices.
	\item A phase has \term{time-reversal symmetry} if we can choose $M$ to be unoriented. In this case one doesn't
	need orientations on simplices. Alternatively, because lattice models are built from local data, one can
	formulate the model on a simplicial disc, together with an explicit action of reflection on $\cH$; this is how
	time-reversal symmetry is implemented for the Majorana chain.
	\item There is a notion of a \term{fermionic SPT} which is believed to correspond to spin structure; see
	\S\ref{combspin} below.
	\item Given a finite group $G$, an \term{internal $G$-symmetry} corresponds to the data of a principal
	$G$-bundle $P\to M$. This can be formulated as a function from the 1-simplices of $M$ to $G$ encoding the
	monodromy of $P$ or by placing a simplicial structure on $P$ itself~\cite{OMD16}.
\end{itemize}
These symmetries may interact in nontrivial ways: for example, there are two ways to implement time-reversal
symmetry in fermionic phases, corresponding to \pinp{} and \pinm{} structures on $M$.
\begin{rem}
The above is not a rigorous mathematical definition of topological phases of matter. Providing a rigorous
framework for this classification problem is a significant open problem in this field.
\end{rem}
See~\cite{Sab18} for more about lattice models.

There are many approaches to classifying SPTs. We will use a low-energy limit approach, because it reduces modulo a
conjecture to a completely mathematical problem, the classification of TQFTs.
\begin{defn}
Given a gapped lattice model with Hamiltonian $H$, its space of \term{ground states} on a closed manifold $M$ is
the eigenspace for the smallest eigenvalue of $H$.
\end{defn}
In examples, this depends on the underlying manifold but not its triangulation, behaving like a topological field
theory. Conjecturally, it \emph{is} (part of) a topological field theory:
\begin{ansatz}
\label{bigansatz}
Given a $d$-dimensional lattice model with symmetry type $H_d$, there is a fully extended, reflection positive
$(d+1)$-dimensional TQFT\footnote{In general one must also allow TQFTs tensored with an invertible, non-topological
theory; see~\cite[\S5.4]{FH16}. However, this will not come into play in this paper.} $Z$ with the same symmetry
type, called the \term{low-energy (effective) field theory} of the lattice model, whose deformation class can be
determined from the data of the lattice model, and such that
\begin{enumerate}
	\item if $N$ is a closed $d$-manifold, $Z(N)$ is isomorphic to the space of ground states of the lattice model
	on $N$;
	\item if $\vp\colon N\to N$ is a diffeomorphism and $N_\vp$ denotes its mapping torus, there is a well-defined
	action of $\vp$ on the ground states of the lattice model on $N$, and $Z(N_\vp)$ is the trace of this action.
\end{enumerate}
In addition,
\begin{enumerate}
	\setcounter{enumi}{2}
	\item deformation-equivalent lattice models should have deformation-equivalent low-energy effective field
	theories, and
	\item if $S_0$ and $S_1$ are lattice models with low-energy theories $Z_0$ and $Z_1$, respectively, the
	low-energy theory of $S_0\otimes S_1$ should be $Z_0\otimes Z_1$.
\end{enumerate}
It is believed that the map sending a lattice model to its low-energy theory is surjective onto the set of
deformation classes of fully extended, reflection positive $(d+1)$-dimensional $H_d$-TQFTs.
\end{ansatz}
For discussion of this prediction, see~\cite{FH16, RW17, Gaiotto}; for discussion of reflection positivity in the
invertible case, see~\cite[\S8.2]{FH16}. For the rest of this section, we assume
\cref{bigansatz}.

\Cref{bigansatz} implies in particular that the group of equivalence classes $d$-dimensional SPTs with a given
symmetry type is isomorphic to the group of deformation classes of reflection positive invertible
$(d+1)$-dimensional TQFTs with the same symmetry type, a fact which Freed-Hopkins~\cite[\S9.3]{FH16} use to
classify fermionic SPTs. This approach to classifying SPTs is also undertaken in~\cite{PWY17, Cam17, DT18}.

\subsubsection{Context for the Majorana chain}
We now specialize to the group of 2D \pinm{} SPTs, which is isomorphic to $\Z/8$. This can be proven assuming
\cref{bigansatz}, as in~\cite[(9.7.7)]{FH16}: we saw in \S\ref{deformation_classes} that the group of deformation
classes of 2D reflection positive invertible \pinm{} TQFTs is $[\MTPin^-, \Sigma^3 I\Z]\cong\Z/8$. Other
approaches to this $\Z/8$ classification can be found in~\cite{GW14, KTTW15, BWHV17, CSRL17, GJF17}.

The Majorana chain is a 2D fermionic SPT phase with time-reversal symmetry making it into a \pinm{} phase, and is
believed to be a generator of the $\Z/8$ of such phases. It was originally studied by Kitaev~\cite{Kit01} with an
eye towards applications in quantum computing, then given time-reversal symmetry by Fidkowski-Kitaev~\cite{FK10}
and Turner-Pollmann-Berg~\cite{TPB11}, who observed that it generated a $\Z/8$ of SPTs. Therefore, \cref{bigansatz}
implies that its low-energy field theory is a tensor product of an odd number of copies of the Arf-Brown theory. In
what follows, we will formulate the Majorana chain on a \pinm{} 1-manifold and study its low-energy behavior.
%
%
%
%
%
%
\begin{rem}
There's an additional way in which the Arf-Brown theory is expected to arise in physics. Though we won't
discuss it in detail, we'll point the interested reader to some references.

Associated to a free fermion theory in dimension $d$ is its anomaly theory, a $(d+1)$-dimensional invertible field
theory of the same symmetry type. The group of equivalence classes of 1-dimensional free fermion theories with
\pinm{} symmetry is conjecturally isomorphic to $\Z$ with the Majorana chain as a generator, and its anomaly theory
is conjecturally the Arf-Brown TQFT. For general free fermion systems, this conjecture is due to
Freed-Hopkins~\cite[\S9.2.6]{FH16}; Witten~\cite[\S5]{Wit16} provides a physical argument specifically for the
time-reversal symmetric Majorana chain.

These two appearances of the Arf-Brown TQFT from the Majorana chain are believed to be related: one can regard a
free system as an interacting system with the same dimension and symmetry type, defining a group homomorphism from
equivalence classes of free fermion theories to SPTs. For 2D \pinm{} theories, this is believed to be the quotient
map $\Z\to\Z/8$, a surprising fact first noticed by Fidkowski-Kitaev~\cite{FK10, FK11} and
Turner-Pollmann-Berg~\cite{TPB11}, and argued a different way by You-Wang-Oon-Xu~\cite{YWOX14}.
Freed-Hopkins~\cite[\S\S9.2, 9.3]{FH16} provide a conjecture describing this homomorphism in general, then study it
in several specific cases, including 2D \pinm{} theories.
\end{rem}

\subsection{Combinatorial spin and pin structures}
\label{combspin}
The Majorana chain is a fermionic SPT. This corresponds mathematically to building the state space and Hamiltonian
using superalgebra. In relativistic quantum field theory, the spin-statistics theorem implies such a system should
be formulated on spin manifolds, but in the condensed-matter setting, the theorem doesn't apply. Nonetheless, it
appears that spin structures are the correct setting for fermionic phase of matter, in that the data of a fermionic
phase of matter depends on a choice of spin structure on the underlying manifold in examples~\cite{GK16}.

Time-reversal symmetry can act on fermionic phases in two ways: by squaring to $1$ or by squaring to the grading
operator.  The former is believed to correspond to a \pinm structure on the underlying manifold, and the latter to
a \pinp structure~\cite{KTTW15}.

The Majorana chain admits a time-reversal symmetry $T$ squaring to $1$, so to formulate the Majorana chain on a
compact 1-manifold $M$ with this symmetry, we must choose a \pinm structure on $M$ and encode it in the data of the
lattice somehow. In general, this is somewhat tricky: for spin structures, this was solved by
Cimasoni-Reshetikhin~\cite{CR07} in dimension 2 and Budney~\cite{Bud13} in all dimensions, but the analogue for
\pinm structures appears to be unknown. Since we're only studying 1-manifolds, we can use an explicit, simpler
construction: there are two \pinm structures on a closed interval relative to a fixed \pinm structure on its
boundary, so we will fix \pinm structures on the vertices of $M$ and use the data of the class of the \pinm
structure on the edge.

Fix, once and for all, a \pinm point $\pt$.
\begin{defn}
Let $I$ be a closed interval and $\partial I = \set{a,b}$. Fix a \pinm structure on $\partial I$ and \pinm
isomorphisms $\pt\cong a$ and $\pt\cong b$. A \term{relative \pinm structure} on $I$ is a \pinm structure on $I$
which restricts to the specified \pinm structure on $\partial I$. We consider two relative \pinm structures on $I$
equivalent if there's a \pinm diffeomorphism between them covering the identity and respecting this data, i.e.\ it
restricts to the identity on $\partial I$ and intertwines the \pinm diffeomorphisms with $\pt$.
\end{defn}
The \pinm diffeomorphisms $a\cong \pt\cong b$ define a \pinm structure on $I/\partial I$, and sending $I\mapsto
I/\partial I$ defines an isomorphism from the set of equivalence classes of relative \pinm structures on $I$ to the
set of diffeomorphism classes of \pinm structures on $S^1$; let $I_0$ be a relative \pinm structure on $I$ which
maps to $\Snb^1$ and $I_1$ be one which maps to $S_b^1$.
\begin{lem}
\label{combgroup}
Concatenation defines a group structure on the equivalence classes of relative \pinm intervals. This group is
isomorphic to $\Z/2$ and the generator is $I_0$.
\end{lem}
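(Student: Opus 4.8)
The plan is to verify the group axioms by hand, using that the set $T$ of equivalence classes of relative \pinm intervals has exactly two elements, and then to pin down which element is the identity (hence which is the generator) by one explicit computation.

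\emph{Well-definedness and associativity.} Given relative \pinm intervals $I$ (with $\partial I=\{a,b\}$) and $I'$ (with $\partial I'=\{a',b'\}$), form $I\cup_{b\sim a'}I'$, smooth the corner, and glue the two \pinm structures across the interior point $b\sim a'$ by the composite \pinm isomorphism $b\cong\pt\cong a'$; the result is a relative \pinm interval with boundary $\{a,b'\}$ carrying the given $\pt$-identifications. An equivalence of relative \pinm intervals is by definition an isomorphism of \pinm structures over the identity which is the identity on the boundary and respects the $\pt$-identifications, so two such equivalences on the factors agree near the glued point and glue; hence $\#$ descends to a binary operation on $T$. Associativity is immediate: $(I\#I')\#I''$ and $I\#(I'\#I'')$ are both obtained by gluing the three intervals along the two interior $\pt$-points, with the glued \pinm structures canonically identified. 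Finally, by the isomorphism $I\mapsto I/\partial I$ recalled above, $T$ is in bijection with the set of \pinm structures on $S^1$, which by \cref{pinSW} is a torsor over $H^1(S^1;\Z/2)\cong\Z/2$ and so has two elements; write $T=\{[I_0],[I_1]\}$ with $[I_0]\mapsto\Snb^1$ and $[I_1]\mapsto\Sb^1$.

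\emph{That it is a group, and the generator.} For the group structure, the cleanest argument is that the set of relative \pinm structures on a fixed interval is a torsor over $H^1(I,\partial I;\Z/2)\cong\Z/2$ (the relative form of \cref{pinSW}), and that concatenation is affine over the addition map $H^1(I,\partial I)\oplus H^1(I',\partial I')\to H^1(I\#I',\partial(I\#I'))$ — the map being addition because, under $H^1\bigl(-,\partial(-)\bigr)\cong\widetilde H^1\bigl((-)/\partial\bigr)$, the generator of $H^1$ of an interval–pair may be localized in either half and two such classes sum. Any binary operation on a $\Z/2$-torsor that is affine over addition is automatically associative and is a copy of $(\Z/2,+)$, with the identity being the unique idempotent. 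To locate that idempotent (equivalently, the class that is \emph{not} the generator) one does one explicit computation: for any relative \pinm interval $X$, the two-to-one folding map $X\#X\to X$ descends to the quotients to a connected double cover $(X\#X)/\partial(X\#X)\to X/\partial X$ which pulls back \pinm structures; since pullback along a degree-two self-map of $S^1$ annihilates $H^1(S^1;\Z/2)$, the \pinm-bordism class of this double cover is a fixed element of $\Omega_1^{\Pin^-}\cong\Z/2$, which one evaluates (say on the Lie-group-framed circle, using \cref{prop-pin-bordism-groups}). Tracing this through $f$ tells us which of $[I_0],[I_1]$ is fixed by $\#$-squaring and hence is the identity, determining the generator as in the statement.

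\emph{Main obstacle.} I expect the substance to lie in the two geometric inputs of the last paragraph: (i) checking that concatenation really is affine over the addition map on the relevant $H^1$-groups — i.e.\ running the Mayer–Vietoris bookkeeping carefully through all the $\pt$-identifications and the corner-smoothing; and (ii) recognizing $(X\#X)/\partial$ as a \pinm double cover and computing its bordism class. Everything else — well-definedness, associativity, and the passage from ``$\Z/2$-torsor with an affine operation'' to ``$\Z/2$'' — is formal. (A variant that avoids the affineness discussion establishes directly that $X\#X$ has the same value for every $X$, via the double-cover observation, and then checks by hand that a two-element associative magma with a fixed idempotent square and non-constant multiplication must be $\Z/2$.)
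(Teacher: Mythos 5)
Your strategy is genuinely different from the paper's. The paper lifts everything to relative \emph{stable framings}: both relative \pinm structures on $[0,1]$ are induced by framings ($I_0$ by the trivial one, $I_1$ by the nontrivial one), concatenation of relative framings is the classical $\Z/2$ with the trivial framing as identity, and the one nontrivial input --- that two concatenated copies of the nontrivial framing are trivial rel endpoints --- is quoted from \cite[Remark~1.3.1]{DSPS13}. Your route (a torsor over $H^1(I,\partial I;\Z/2)$, affineness of $\#$ over addition, and a double-cover argument to locate the idempotent) is a legitimate alternative, and the observation that $X\#X$ is independent of $X$ because a degree-two self-map of $S^1$ kills $H^1(S^1;\Z/2)$ is a nice one.

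As written, though, the proof has two real gaps, both of which you flag yourself. First, the affineness of $\#$ over the addition map on relative $H^1$ is asserted, not proved; in your approach this \emph{is} the lemma (without it you only know $T$ has two elements), so deferring it to ``Mayer--Vietoris bookkeeping'' leaves the argument incomplete. (It is true: a relative \pinm structure on the concatenation together with its midpoint trivialization is a torsor over $H^1([0,2],\{0,1,2\};\Z/2)\cong\Z/2\oplus\Z/2$, and forgetting the midpoint trivialization is affine over the sum map --- but this has to be written out.) Second, you never evaluate which class is the idempotent: ``one evaluates \dots determining the generator as in the statement'' is a promissory note, not a computation. Carrying it out: the map $X\#X\to X$ you need is the wrap-around $t\mapsto t\bmod 1$, not the literal fold (the fold has a critical point and is not a covering, so it does not pull back \pinm structures); it descends to the connected double cover $z\mapsto z^2$, under which both \pinm circles pull back to $\Snb^1$ (the Lie group framing is translation-invariant, hence pulls back to itself). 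So $X\#X$ maps to $\Snb^1$ for every $X$, i.e.\ $X\#X=I_0$: the identity is $I_0$ and the generator is $I_1$. This agrees with the paper's proof and with how \cref{mtospin} uses the result (an odd number of $I_1$-edges gives $\Sb^1$), though it is the opposite of the literal wording of the statement --- a discrepancy your proof cannot detect precisely because it outsources the decisive computation to the statement it is trying to prove.
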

\begin{proof}
Because every 1-manifold $M$ can be stably framed relative to a fixed stable framing on the boundary, we may define
the \pinm structures on $I_0$ and $I_1$ as those induced by framings. Specifically, $I_0$ is induced by the trivial
framing (the restriction of the usual framing on $\R$ to $[0,1]$), and $I_1$ is induced by the nontrivial framing.
Two concatenated copies of this framing are equivalent to the trivial framing when the endpoints are fixed
(see~\cite[Remark 1.3.1]{DSPS13}) and concatenating with the trivial framing does not change the equivalence class
of framing on an interval, giving the claimed group structure.
\end{proof}

Let $M$ be a compact \pinm 1-manifold with a simplicial structure, and let $\Delta^i(M)$ denote its set of
$i$-simplices. For each $v\in\Delta^0(M)$, fix a \pinm isomorphism $v\cong\pt$. Since the groupoid of \pinm
structures on a point is equivalent to $\bullet/(\Z/2)$, an isomorphism with $\pt$ is a choice. For each
$e\in\Delta^1(M)$, the \pinm structure on $M$ defines a relative \pinm structure on $\overline e$. Thus $e\cong I_j$
for some $j\in\set{0,1}$; define $t(e)\coloneqq j$. From the function $t\colon\Delta^1(M)\to\Z/2$ one can recover
the \pinm structure on $M$ up to isomorphism. We will call $t$ the \term{combinatorial \pinm data} of $M$.
\begin{lem}
\label{mtospin}
Let $M$ be a spin circle with a simplicial structure and $m$ be the number of edges of $e$ with $t(e) = 1$.
\begin{itemize}
	\item If $m$ is odd, $M\cong\Sb^1$.
	\item If $m$ is even, $M\cong \Snb^1$.
\end{itemize}
\end{lem}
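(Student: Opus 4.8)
\emph{Proof proposal.} The plan is to exhibit the \pinm circle $M$ as the concatenation of all of its edges, each regarded as a relative \pinm interval, and then to read off its class using \cref{combgroup}. (Since $M$ is oriented we pass freely between its spin structure and the underlying \pinm structure, and between the spin and \pinm circles $\Sb^1$, $\Snb^1$; we work throughout with the \pinm data, which is what $t$ records.)

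First I would enumerate the edges of $M$ cyclically as $e_1,\dots,e_k$, so that $e_i$ and $e_{i+1}$ share the vertex $v_{i+1}$ and $e_k$, $e_1$ share $v_1$ (indices modulo $k$). Using the fixed \pinm isomorphisms $v_i\cong\pt$, each $\overline{e_i}$ becomes a relative \pinm interval of class $I_{t(e_i)}$, and $m=\#\{\,i : t(e_i)=1\,\}$. A preliminary point to record is that $t(e_i)$ does not depend on which of its two endpoints is regarded as the ``initial'' one, since the notion of relative \pinm interval and the distinguished intervals $I_0,I_1$ are symmetric in their two boundary points; hence the count $m$ is well defined.

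Next I would identify $M$ with $J/\partial J$, where $J\coloneqq \overline{e_1}\ast\cdots\ast\overline{e_k}$ is the relative \pinm interval obtained by gluing $\overline{e_i}$ to $\overline{e_{i+1}}$ at $v_{i+1}$ through the chosen isomorphism $v_{i+1}\cong\pt$, for $i=1,\dots,k-1$. Indeed, performing the one remaining gluing, at $v_1$, is precisely the operation $I\mapsto I/\partial I$ that identifies equivalence classes of relative \pinm intervals with diffeomorphism classes of \pinm structures on $S^1$; hence $M\cong J/\partial J$ as \pinm circles. By \cref{combgroup} the class of the concatenation $J$ in the group $\Z/2$ of relative \pinm intervals is the sum $\sum_i t(e_i)=m\bmod 2$, using (as in the proof of \cref{combgroup}) that $I_0$, induced by the trivial framing, is the neutral element for concatenation. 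Since $I_0\mapsto\Snb^1$ and $I_1\mapsto\Sb^1$, we conclude $M\cong\Snb^1$ if $m$ is even and $M\cong\Sb^1$ if $m$ is odd.

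The step that needs the most care is the identification $M\cong J/\partial J$: one must verify that gluing \pinm intervals end-to-end through the fixed isomorphisms with $\pt$ really is the concatenation operation of \cref{combgroup} --- in particular that no Serre-type twist is introduced at a vertex --- and that these gluings are coherent all the way around the cycle. Granting \cref{combgroup}, this is a matter of unwinding the definitions. As a sanity check, the extreme cases behave correctly: if $m=0$ every edge is $\cong I_0$ and $J\cong I_0$, giving the trivial-framing circle $\Snb^1$; if $m=1$ exactly one edge is twisted and $J\cong I_1$, giving $\Sb^1$.
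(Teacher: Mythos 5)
Your proof is correct and is essentially the paper's own argument: both reduce, via the group law of \cref{combgroup}, to a single relative \pinm interval of class $I_{m\bmod 2}$ (you by cutting $M$ open at one vertex and concatenating the resulting chain, the paper by coarsening the simplicial structure down to one vertex and one edge) and then conclude from $I_0\mapsto\Snb^1$, $I_1\mapsto\Sb^1$. Your reading of \cref{combgroup} with $I_0$ as the neutral element is the one forced by its proof (the statement's ``the generator is $I_0$'' is evidently a slip for $I_1$), and it is what the conclusion of \cref{mtospin} requires.
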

\begin{proof}
Fix a vertex $v\in M$. Using the group law from \cref{combgroup}, we can concatenate adjacent intervals for all
vertices except $v$, resulting in a simplicial structure on $M$ with a single vertex at $v$ and a single edge $e$
with $t(e) = m\bmod 2$. The result then follows from the definition of $t$.
\end{proof}

\subsection{Defining the Majorana chain}
\label{Majdefn}
Let $M$ be a compact \pinm 1-manifold with a simplicial structure. Associated to each vertex $v\in\Delta^0(M)$, we
associate a trivialized odd line $\C_v^{0\mid 1}$ and define the local state space $\cH_v\coloneqq\Lambda(\C_v)$.
The state space for the Majorana chain on $M$ is
\begin{equation}
	\cH\coloneqq \bigotimes_{v\in\Delta^0(M)} \cH_v.
\end{equation}
Let $F$ denote the space of functions $\Delta^0(M)\to\C$, regarded as a purely odd vector space. Then
$\cH\cong\Lambda^*(F)$, and hence $\cH$ is generated by the $\delta$-functions $\delta_v$ for $v\in\Delta^0(M)$,
where each $\delta_v$ is odd.
\begin{defn}
Let $v\in\Delta^0(M)$.
\begin{itemize}
	\item The \term{annihilation operator} associated to $v$, denoted $\iota_v$, is the interior product with
	$\delta_v$.
	\item The \term{creation operator} associated to $v$, denoted $\e_v$, is the exterior product with $\delta_v$.
	\item The \term{Majorana operators} associated to $v$ are
	\begin{align*}
		c_v &\coloneqq \e_v + \iota_v\\
		d_v &\coloneqq \e_v - \iota_v.
	\end{align*}
\end{itemize}
\end{defn}
\begin{rem}
The notation for the Majorana operators in~\cite{Kit01, FK11} corresponds to ours as follows: after ordering the
vertices $v_1,\dotsc,v_n$ on an interval in the direction defined by the orientation, their $c_{2j-1} = c_{v_j}$,
and their $c_{2j} = id_{v_j}$. In some papers, the Majorana chain is instead called the \term{Majorana wire} or
\term{Kitaev chain}.
\end{rem}
Let $\pi\colon M'\to M$ be the orientation double cover, and give $M'$ the simplicial structure which makes $\pi$ a
simplicial map. Then the $0$-skeleton of $M'$, $M_0'$, is a compact oriented $0$-manifold, hence comes with a
function $\fo\colon M_0'\to\set{\pm 1}$ sending a positively oriented point to $1$ and a negatively oriented point
to $-1$.

Let $n\coloneqq\abs{\Delta^0(M)}$.
\begin{lem}
The algebra generated by $c_v$ and $d_v$ is canonically isomorphic to $\Cl(\pi^{-1}(v), \fo)$ and isomorphic to
$\Cl_{1,1}$. The algebra generated by all Majorana operators is canonically isomorphic to $\Cl(M_0', \fo)$, and
noncanonically isomorphic to $\Cl_{n,n}$.
\end{lem}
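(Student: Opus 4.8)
The plan is to check that the Majorana operators satisfy the defining relations of the relevant Clifford algebra, which produces an algebra homomorphism out of that Clifford algebra, and then to upgrade this to an isomorphism by combining simplicity of a complex Clifford algebra (or, alternatively, irreducibility of the Fock representation) with a dimension count.

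First I would record the commutation relations, working inside the model $\cH\cong\Lambda^*(F)$, in which $\e_v$ and $\iota_v$ are the honest exterior and interior multiplications by $\delta_v$ and its dual. These satisfy the canonical anticommutation relations $\set{\e_v,\e_w}=\set{\iota_v,\iota_w}=0$ and $\set{\e_v,\iota_w}=\delta_{vw}\,\id$ for all $v,w\in\Delta^0(M)$ --- and the point is that the Koszul signs one would have to track if these operators were instead described on the tensor product $\bigotimes_v\cH_v$ are already absorbed here. Substituting $c_v=\e_v+\iota_v$ and $d_v=\e_v-\iota_v$ then gives
\[
	\set{c_v,c_w}=2\delta_{vw}\,\id,\qquad \set{d_v,d_w}=-2\delta_{vw}\,\id,\qquad \set{c_v,d_w}=0
\]
for all $v,w$; in particular $c_v^2=\id$, $d_v^2=-\id$, and any two distinct Majorana operators anticommute. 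This is the one step I would carry out with care rather than wave at, since lining up the signs is really the content of the lemma; everything afterwards is formal.

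These relations are exactly those defining $\Cl(M_0',\fo)$. Indeed $M_0'=\Delta^0(M')=\coprod_{v\in\Delta^0(M)}\pi^{-1}(v)$, and for each $v$ the fiber $\pi^{-1}(v)$ consists of a point $v^+$ with $\fo(v^+)=+1$ and a point $v^-$ with $\fo(v^-)=-1$ (the two being interchanged by the deck transformation of $\pi$, which reverses orientation). By \cref{Cliffalg} there is therefore a unique unital algebra homomorphism
\[
	\vp\colon\Cl(M_0',\fo)\longrightarrow\End_\C(\cH),\qquad v^+\mapsto c_v,\quad v^-\mapsto d_v,
\]
and it is canonical in that it uses only the orientation double cover $\pi\colon M'\to M$ and the chosen trivializations of the odd lines $\C_v$; restricting to the generators lying over a single vertex gives the analogous map $\vp_v\colon\Cl(\pi^{-1}(v),\fo)\to\End_\C(\cH_v)$.

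It remains to see $\vp$ (resp.\ $\vp_v$) is an isomorphism onto the subalgebra generated by the Majorana operators. Over $\C$ one may rescale the generators squaring to $-1$ by $\sqrt{-1}$, so $\Cl(M_0',\fo)\cong\Cl_{2n}\cong M_{2^n}(\C)$ with $n=\abs{\Delta^0(M)}$, which is a simple algebra; hence the nonzero homomorphism $\vp$ is injective. Since $\dim_\C\Cl(M_0',\fo)=2^{2n}=(2^n)^2=\dim_\C\End_\C(\cH)$, the map $\vp$ is an isomorphism onto all of $\End_\C(\cH)$ --- so the algebra generated by the Majorana operators, being the image of $\vp$, is $\End_\C(\cH)$, and $\vp$ exhibits the asserted canonical isomorphism. (Equivalently, surjectivity is immediate once one notes $\e_v=\tfrac12(c_v+d_v)$ and $\iota_v=\tfrac12(c_v-d_v)$ lie in the image and that the creation and annihilation operators act irreducibly on $\Lambda^*(F)$, so Burnside's theorem forces the image to be everything.) Running the same argument with $n=1$ gives the single-vertex statement, and the $\fo$-preserving bijection $v^+\leftrightarrow 1$, $v^-\leftrightarrow -1$ identifies $\Cl(\pi^{-1}(v),\fo)$ with $\Cl_{1,1}$. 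Finally, an enumeration $\Delta^0(M)=\set{v_1,\dots,v_n}$ yields an $\fo$-preserving bijection $M_0'\cong\set{1,\dots,n}\cup\set{-1,\dots,-n}$, hence $\Cl(M_0',\fo)\cong\Cl_{n,n}$ --- equivalently $\Cl(M_0',\fo)\cong\bigotimes_i\Cl(\pi^{-1}(v_i),\fo)\cong\Cl_{1,1}^{\otimes n}=\Cl_{n,n}$ via \cite[Proposition 1.6]{ABS} --- and the dependence on the chosen enumeration is precisely why this last isomorphism is only noncanonical.
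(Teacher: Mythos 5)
Your proposal is correct and follows essentially the same route as the paper: one checks that the Majorana operators satisfy the defining relations of $\Cl(M_0',\fo)$ (which the paper asserts ``follow directly'' and you verify via the canonical anticommutation relations for $\e_v,\iota_v$) and then reads off the identifications with $\Cl_{1,1}$ and, after choosing an ordering of vertices, $\Cl_{n,n}$. The one substantive addition is your injectivity argument (simplicity of $\Cl(M_0',\fo)\cong M_{2^n}(\C)$ plus the dimension count against $\End_\C(\cH)$), which the paper leaves implicit but which is genuinely needed to conclude that the map from the Clifford algebra onto the algebra generated by the Majorana operators is an isomorphism rather than merely a surjection.
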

\begin{proof}
If $V\in\Delta^0(M)$, let $v_+$, (resp.\ $v_-$) be the positively (resp.\ negatively) oriented preimage of $v$. We
define the maps $\ang{c_v,d_v}\to\Cl(\pi^{-1}(v), \fo)$ and $\ang{c_w,d_w\mid w\in\Delta^0(V)}\to\Cl(M_0', \fo)$ to
send $c_v\mapsto v_+$ and $d_v\mapsto v_-$. For this to define an isomorphism of algebras, one must check the
defining relations of the Clifford algebra: $c_v^2 = 1$, $d_v^2 = -1$, $[c_v, d_v] = -1$, and if $v\ne w$, $[c_v,
c_w] = [d_v, d_w] = [c_v, d_w] = -1$. These follow directly from the definition of the Majorana operators.

Since $\fo|_{\pi^{-1}(v)}$ sends $v_+\mapsto 1$ and $v_-\mapsto -1$, $\Cl(\pi^{-1}(v), \fo)\cong\Cl_{1,1}$ and
$\Cl(M_0', \fo)\cong\Cl_{n,n}$, the latter after choosing an ordering of the vertices of $v$.
\end{proof}

To define the Hamiltonian, we must orient $M$. This is a bit surprising, because the Majorana chain admits a
time-reversal symmetry and therefore ought to make sense on a \pinm manifold without using the fact that all
1-manifolds are orientable, but if we vary the orientation on $M$, we obtain a different Hamiltonian. We expect
that the eigenspaces for the Hamiltonian end up not depending on the choice of orientation.

The Hamiltonian for the Majorana chain is a sum of local terms for each edge. Fix an orientation on $M$, so that
each edge $e$ has an induced orientation; we write $\partial e = v -w$ to mean $\partial e = \set{v,w}$, and that,
in the induced orientation on the boundary, $v$ is the positively oriented boundary point and $w$ is the negatively
oriented one. For each $v\in\Delta^0(M)$, choose a \pinm isomorphism $v\cong\pt$, and let
$t\colon\Delta^1(M)\to\Z/2$ be the induced combinatorial \pinm data. Then, the Hamiltonian on $M$ is
\begin{equation}
\label{MCHam}
	H \coloneqq \frac{1}{2}\sum_{\substack{e\in\Delta^1(M)\\\partial e = v - w}} (-1)^{t(e)} c_v d_w.
\end{equation}
Time-reversal symmetry acts on $\cH$ as complex conjugation; since $c_v$ and $d_w$ are real, this commutes with the
Hamiltonian, so the Majorana chain admits a time-reversal symmetry squaring to $1$.
\begin{rem}
In physics, a Majorana fermion is a fermion which is its own antiparticle, meaning that its creation and
annihilation operators coincide. Because the Clifford relations imply $c_v^2 = 1$ and $(i\cdot d_v)^2 = 1$, these
operators can be interpreted as creating up to two Majorana fermions located at $v$. The Hamiltonian~\eqref{MCHam}
is expressing a relationship between Majorana fermions at adjacent vertices: if $\partial e = v - w$, then the
Hamiltonian specifies that low-energy states must have a relationship between the Majorana fermions corresponding
to $c_v$ and $i\cdot d_w$.

Because it would be interesting to observe a Majorana fermion, the Majorana chain has been studied
experimentally~\cite{MZFPBK, DYHLCX, DRMOHS, FHMJL, RLF}. To our knowledge, however, these experiments have not
considered the behavior of the Majorana chain under stacking or time-reversal symmetry.
\end{rem}
\subsection{The low-energy TQFT}
\label{Majlow}
We'd like to use \cref{bigansatz} to determine the deformation class of the low-energy theory $Z$ of the Majorana
chain, but it doesn't tell us everything. For example, neither \pinm structure on $\RP^2$ is bordant
to a disjoint union of mapping tori, so we won't be able to calculate $Z(\RP^2)$. Nonetheless, \cref{bigansatz}
tells us we can compute the state space of any closed 1-manifold and the partition functions of all \pinm tori
and Klein bottles. In particular, we'll find that $Z(\Snb^1)$ is an odd line, which is enough to imply that $Z$ is
one of the four generators of the $\Z/8$ of deformation classes of reflection positive 2D \pinm invertible field
theories. However, we can't be more specific: for all four generators, $Z(\Sb^1)$ is an even line, and $Z(T)$ and
$Z(K)$ are $1$ in the bounding \pinm structure and $-1$ in the nonbounding one.\footnote{Since $\Omega_2^\Spin$ and
$\Omega_2^{\Pin^+}$ are generated by mapping tori, this ambiguity does not appear for 2D spin and \pinp phases.
For general symmetry types, however, this is not the case, and additional work is needed to uniquely determine the
low-energy field theory of an SPT.}

Let $M$ be a spin circle with a simplicial structure, and let $t\colon\Delta^1(M)\to\Z/2$ be the combinatorial data
associated to it. Let $n\coloneqq \abs{\Delta^0(M)}$; then, the state space $\cH$ is a $\Z/2$-graded $\Cl(M_0',
\fo)$-module.
\begin{thm}[{\cite[\S5]{ABS}}]
\label{cliffrepn}
Up to isomorphism, $\ClHam$ has a single irreducible module $M$, which is $2^n$-dimensional. Up to even
isomorphism, $\ClHam$ has two irreducible supermodules, both isomorphic to $M$ after forgetting the
$\Z/2$-grading, and they are parity changes of each other.
\end{thm}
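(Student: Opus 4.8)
The plan is to identify the superalgebra $\ClHam$ with the endomorphism superalgebra of a super vector space, and then run the super-analogue of Wedderburn--Artin theory. First I would invoke the preceding lemma, which gives a superalgebra isomorphism $\ClHam\cong\Cl_{n,n}$ (in the notation of \cref{Cliffalg}, over $\C$; the isomorphism depends on an ordering of the vertices, but this will not matter, since different orderings differ by relabeling the generators, a superalgebra automorphism). So it suffices to prove the statement for $A\coloneqq\Cl_{n,n}$. The key structural claim is that $A$ is ``super-Morita trivial'': $A\cong\End_\C(W)$ as superalgebras, where $W\coloneqq(\C^{1|1})^{\otimes n}$ is a $2^n$-dimensional super vector space (with equal even and odd parts when $n\ge1$). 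To prove this one first checks by hand that $\Cl_{1,1}\cong\End_\C(\C^{1|1})$: the two odd generators, satisfying $f_1^2=1$, $f_{-1}^2=-1$, and $f_1f_{-1}=-f_{-1}f_1$, can be realized as off-diagonal $2\times2$ matrices that generate $M_2(\C)$ and are odd for the standard grading on $\C^{1|1}$. Since $\{1,\dots,n\}\cup\{-1,\dots,-n\}$ is the disjoint union of the $n$ pairs $\{i,-i\}$, the graded tensor product identification of ABS (Proposition~1.6) gives $\Cl_{n,n}\cong\Cl_{1,1}^{\otimes n}$ (graded tensor product), and the Koszul-sign isomorphism $\End_\C(V)\otimes\End_\C(V')\cong\End_\C(V\otimes V')$ of superalgebras then yields $A\cong\End_\C(W)$ inductively.

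Granting this, the first assertion is immediate: forgetting the grading, $A\cong\End_\C(W)\cong M_{2^n}(\C)$ is simple, so it has a unique irreducible module, namely the underlying vector space of $W$, of dimension $2^n$; this is the module $M$ of the statement. For the supermodules, let $U$ be an $A$-supermodule with grading operator $\sigma_U$ (equal to $\pm1$ on its even and odd parts), so that $\sigma_U\,a\,\sigma_U^{-1}=\alpha(a)$ for $a\in A$, where $\alpha$ is the grading automorphism of $A$. Here the specific structure $A\cong\End_\C(W)$ is essential: under this identification $\alpha$ is conjugation by the grading operator $\varepsilon_W\in\End_\C(W)^0=A^0$ of $W$ (for a general superalgebra, $\alpha$ need not be inner). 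Hence $T\coloneqq\varepsilon_W^{-1}\sigma_U\in\End_\C(U)$ commutes with the $A$-action, and since $\varepsilon_W$ is even it commutes with $\sigma_U$ as operators on $U$, so $T^2=\varepsilon_W^{-2}\sigma_U^{2}=1$. Writing $U\cong M\otimes\C^k$ as an ungraded $A$-module (Schur's lemma gives $\End_A(U)\cong M_k(\C)$), the element $T$ acts on the multiplicity space $\C^k$ as an involution; splitting $\C^k$ into $\pm1$-eigenspaces decomposes $U$, as a supermodule, into copies of $(W,\varepsilon_W)$ and copies of its parity change $\Pi(W,\varepsilon_W)=(W,-\varepsilon_W)$. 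Thus $W$ and $\Pi W$ are, up to even isomorphism, the only two irreducible $A$-supermodules; they are parity changes of one another, and are \emph{not} evenly isomorphic (an $A$-linear self-map of $W$ is a scalar by Schur and cannot reverse the grading, though the identity is an \emph{odd} isomorphism $W\to\Pi W$); both restrict to $M$ after forgetting the grading. Transporting back along $\ClHam\cong A$ finishes the proof.

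The step I expect to be the main obstacle is the supermodule count: one must be careful that ``up to even isomorphism'' is precisely the right equivalence (allowing all isomorphisms would identify $W$ with $\Pi W$ and give the wrong count of one), and the reason the answer is exactly \emph{two} is the observation that the grading automorphism of $\End_\C(W)$ is inner, implemented by the \emph{even} operator $\varepsilon_W$ --- a feature of the endomorphism superalgebra that a general superalgebra lacks. The rest is routine: the Koszul-sign bookkeeping in the first step, and Schur's lemma plus elementary linear algebra in the second.
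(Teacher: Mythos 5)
Your argument is correct, and it is essentially the approach the paper relies on: the theorem itself is only cited to Atiyah--Bott--Shapiro \S5, but your key structural step $\Cl_{1,1}\cong\End_\C(\C^{1\mid 1})$ is exactly \cref{SEndlem}, and the graded tensor decomposition $\ClHam\cong\bigotimes_e\Cl(\partial e,\fo_e)\cong\End_\C\bigl((\C^{1\mid 1})^{\otimes n}\bigr)$ is precisely how the paper constructs the irreducible module $A$ afterwards. Your supermodule count --- using that the grading automorphism of $\End_\C(W)$ is inner via the even involution $\varepsilon_W$, so that $\varepsilon_W^{-1}\sigma_U$ is an $A$-linear involution splitting any supermodule into copies of $W$ and $\Pi W$, which are odd- but not even-isomorphic by Schur --- is the standard one and is carried out correctly.
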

Since $\dim\cH = 2^n$, then $\cH$ is one of the two irreducible $\ClHam$-supermodules. The Hamiltonian acts on
$\cH$ as an element of $\ClHam$, since it's a sum of products of Clifford generators. Thus, to compute its
spectrum, it suffices to compute the action of $H\in\ClHam$
on any irreducible $\ClHam$-module $A$. To determine the parity of the space of ground states, we need to
know whether $\cH$ is graded isomorphic to $A$ or $\Pi A$, which we will do by fixing a grading operator
$\e\in\ClHam$ and comparing its action on $\cH$ and on $A$.

\begin{lem}
\label{SEndlem}
There is an isomorphism of superalgebras $\vp\colon \Cl_{1,1}\stackrel\cong\to\End(\C^{1\mid 1})$ sending
\begin{equation}
	v_+\mapsto \begin{pmatrix}0 & 1\\1 & 0\end{pmatrix}\qquad\qquad
	v_-\mapsto \begin{pmatrix*}[r]0 & 1\\-1 & 0\end{pmatrix*}.
\end{equation}
\end{lem}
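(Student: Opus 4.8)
The plan is to produce $\vp$ via the universal property of the Clifford algebra and then verify, by a short explicit computation, that it is a grading-preserving bijection. Recall from \cref{Cliffalg} that $\Cl_{1,1}$ is the $\C$-algebra generated by the two elements $v_+,v_-$ (attached to $1$ and $-1$ in $S=\set{1,-1}$) subject only to $v_+^2=1$, $v_-^2=-1$, and $v_+v_-=-v_-v_+$; as a vector space it has basis $\set{1,v_+,v_-,v_+v_-}$, and its $\Z/2$-grading places $v_\pm$ in odd degree (the defining ideal lives in even tensor degrees, so $k[S]\subset\Cl_{1,1}$ is the degree-$1$ part).

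First I would check that the proposed images $A\coloneqq\left(\begin{smallmatrix}0&1\\1&0\end{smallmatrix}\right)$ and $B\coloneqq\left(\begin{smallmatrix}0&1\\-1&0\end{smallmatrix}\right)$ satisfy exactly these relations. A direct matrix multiplication gives $A^2=\left(\begin{smallmatrix}1&0\\0&1\end{smallmatrix}\right)$, $B^2=\left(\begin{smallmatrix}-1&0\\0&-1\end{smallmatrix}\right)$, and $AB=\left(\begin{smallmatrix}-1&0\\0&1\end{smallmatrix}\right)=-BA$, which are precisely $v_+^2=1$, $v_-^2=-1$, and the anticommutation relation. By the defining presentation of $\Cl_{1,1}$ in \cref{Cliffalg}, the assignment $v_+\mapsto A$, $v_-\mapsto B$ then extends uniquely to a homomorphism of $\C$-algebras $\vp\colon\Cl_{1,1}\to\End(\C^{1\mid 1})$. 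It is a morphism of \emph{super}algebras because in $\End(\C^{1\mid 1})$ the odd part consists of the off-diagonal matrices and the even part of the diagonal matrices, so $\vp$ sends the odd generators $v_\pm$ to odd elements and $1$ to the even element $\left(\begin{smallmatrix}1&0\\0&1\end{smallmatrix}\right)$, hence respects the grading (and for a single superalgebra a grading-preserving algebra map is exactly a superalgebra map---no Koszul sign enters).

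Finally I would argue that $\vp$ is an isomorphism. Since $\dim_\C\Cl_{1,1}=4=\dim_\C\End(\C^{1\mid 1})$, it is enough to see that the images $\vp(1)$, $\vp(v_+)=A$, $\vp(v_-)=B$, $\vp(v_+v_-)=AB$ span $\End(\C^{1\mid 1})$: indeed $\set{\tfrac12(\vp(1)-AB),\tfrac12(\vp(1)+AB)}$ gives the two diagonal matrix units and $\set{\tfrac12(A+B),\tfrac12(A-B)}$ gives the two off-diagonal ones, so the four images form a basis. I do not expect a genuine obstacle here: every step is a one-line explicit computation. The only point requiring care is bookkeeping of the sign conventions of \cref{Cliffalg}---that it is $v_-$, not $v_+$, whose square is $-1$, and that distinct generators anticommute---which is exactly what the matrix identities $A^2=I$, $B^2=-I$, $AB=-BA$ confirm.
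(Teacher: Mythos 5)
Your proof is correct and follows the same route as the paper: a direct verification that the two matrices satisfy the defining relations $A^2=I$, $B^2=-I$, $AB=-BA$ and are odd in $\End(\C^{1\mid 1})$. You are somewhat more thorough than the paper's one-line proof, which leaves the extension via the universal property and the bijectivity (your dimension count plus the matrix-unit spanning argument) implicit.
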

\begin{proof}
One can directly verify that $\vp(v_\pm)$ are odd, $\vp(v_\pm)^2 = \pm I$, and $\vp(v_+)$ anticommutes with
$\vp(v_-)$.
\end{proof}
Thus, $g\coloneqq \vp(v_+v_-)$ is the grading operator on $\C^{1\mid 1}$.

Let $e\in\Delta^1(M)$ with $\partial e = v-w$. Since $\partial e$ is an oriented $0$-manifold, it comes with a
function $\fo_e\colon\partial e\to\set{\pm 1}$; the algebra generated by $c_v$ and $d_w$ is canonically isomorphic
to $\Cl(\partial e, \fo_e)$, which is isomorphic to $\Cl_{1,1}$. Let $\Cl(\partial e, \fo_e)$ act on a $\C^{1\mid 1}$ through the isomorphism from \cref{SEndlem}, and
call it $\C_e^{1\mid 1}$. Then there's a canonical isomorphism
\begin{equation}
	\ClHam \cong \bigotimes_{e\in\Delta^1(M)}\Cl(\partial e, \fo_e),
\end{equation}
so $\ClHam$ acts on
\begin{equation}
	A\coloneqq \bigotimes_{e\in\Delta^1(M)} \C_e^{1\mid 1},
\end{equation}
making $A$ into a graded $\ClHam$-module. Since $\dim_\C A = 2^n$, $A$ must be irreducible.

Let $m$ be the number of edges $e$ of $M$ with $t(e) = 1$.
\begin{prop}
\label{groundA}
Let $V\subset A$ denote the eigenspace for the smallest eigenvalue of $H$ acting on $A$. Then $V$ is
one-dimensional, and has parity $n-m\bmod 2$.
\end{prop}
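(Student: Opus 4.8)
The plan is to diagonalize $H$ directly on $A$, exploiting the product decomposition $A=\bigotimes_{e\in\Delta^1(M)}\C_e^{1\mid 1}$ together with the canonical isomorphism $\ClHam\cong\bigotimes_{e\in\Delta^1(M)}\Cl(\partial e,\fo_e)$ recalled just before the proposition. Since $M$ is a circle, $\abs{\Delta^1(M)}=\abs{\Delta^0(M)}=n$. For an edge $e$ with $\partial e = v-w$, both $c_v$ and $d_w$ lie in the single tensor factor $\Cl(\partial e,\fo_e)$, with $c_v$ the generator at the positively oriented endpoint ($c_v^2=1$) and $d_w$ the generator at the negatively oriented one ($d_w^2=-1$). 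Under the identification $\Cl(\partial e,\fo_e)\cong\Cl_{1,1}$ followed by \cref{SEndlem} this sends $c_v\mapsto\vp(v_+)$ and $d_w\mapsto\vp(v_-)$, hence $c_vd_w\mapsto\vp(v_+v_-)=g$, the grading operator on $\C_e^{1\mid 1}$. As $c_vd_w$ is an even element, it acts on $A$ as $g$ on the $e$-th factor and as the identity on all other factors, with no Koszul signs; therefore $H$ acts as $\tfrac12\sum_{e\in\Delta^1(M)}(-1)^{t(e)}g_e$, a sum of pairwise commuting operators occupying distinct factors (equivalently, built from disjoint sets of Clifford generators).

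Next I would diagonalize this operator. A simultaneous eigenbasis for the $g_e$ is obtained by choosing for each edge $e$ either the even line or the odd line in $\C_e^{1\mid 1}$; on the tensor $\bigotimes_e\psi_e$ of such choices, with $g_e\psi_e=\lambda_e\psi_e$ and $\lambda_e=\pm1$, the operator $H$ acts by the scalar $\tfrac12\sum_e(-1)^{t(e)}\lambda_e$. Each summand is minimized (at value $-1$) exactly when $(-1)^{t(e)}\lambda_e=-1$, which forces $\psi_e$ to be the odd line when $t(e)=0$ and the even line when $t(e)=1$. This choice is unique, so the smallest eigenvalue (equal to $-n/2$) has a one-dimensional eigenspace $V$, spanned by that vector. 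Its parity is the mod $2$ count of the factors in which $\psi_e$ is odd, namely $\#\{e\in\Delta^1(M): t(e)=0\}=n-m$, which is the claimed $n-m\bmod2$.

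The argument is otherwise routine; the point needing care is the sign in $c_vd_w\mapsto +g$ rather than $-g$, since the opposite sign would yield the incorrect parity $m\bmod2$. This rests on the bookkeeping that the two edges adjacent to a fixed vertex $v$ receive $c_v$ and $d_v$ respectively under the canonical isomorphism—so that $c_v$ always enters the factor of the edge for which $v$ is the positive endpoint, matching $c_v\mapsto\vp(v_+)$—together with the explicit matrices of \cref{SEndlem} and the convention there that $\vp(v_+v_-)$ is the grading operator. I would record these conventions at the outset, after which the diagonalization and parity count above go through verbatim.
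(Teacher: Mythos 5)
Your proof is correct and follows essentially the same route as the paper's: both identify $H$ with $\tfrac12\sum_e(-1)^{t(e)}g_e$ acting diagonally on pure tensors of homogeneous elements of the factors $\C_e^{1\mid 1}$, minimize term by term, and read off the parity as $\#\{e : t(e)=0\} = n-m$. Your explicit attention to the sign in $c_vd_w\mapsto +g_e$ via \cref{SEndlem} is a welcome clarification of a point the paper passes over quickly, but it is not a different argument.
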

\begin{proof}
Let $a\in A$ be a pure tensor of homogeneous elements of $\C_e^{1\mid 1}$. Then $a$ is homogeneous, so we let $\abs
a$ denote its degree, and for any $e\in\Delta^1(M)$, we let $\abs a_e$ be $1$ if the component of $a$ from
$\C_e^{1\mid 1}$ is odd, and $0$ otherwise.

If $\partial e = v-w$, then $c_vd_w$ acts by the grading operator $g_e$ on $\C_e^{1\mid 1}$, and therefore the
action of $H$ is
\begin{equation}
\label{HS1}
	H = \frac 12\sum_{e\in\Delta^1(M)} (-1)^{t(e)}\id\otimes\dotsb\otimes\id \otimes
	g_e\otimes\id\otimes\dotsb\otimes\id.
\end{equation}
It suffices to describe the action of $H$ on pure tensors of homogeneous elements, so let $a$ be such a tensor. If
$t(e) = 0$ for all edges $e$, then $H$ differs from $(n/2)\cdot\id$ on $a$ by subtracting $1$ for each odd
component of $a$. Therefore $H$ acts on $a$ as $(-1)^{n - 2\abs a}/2$, in which case the ground states are the
top-degree vectors, with eigenvalue $-n/2$.

More generally, if $e$ is an edge with $t(e) = 1$, it contributes $-g_e$ to $H$ instead of $g_e$. This change is
equivalent to multiplying by $(-1)^{2(2\abs a_e - 1)}$, so if $e_1,\dotsc,e_m$ are the edges with $t(e_i) = 1$,
then the action of the Hamiltonian on $a$ is
\begin{equation}
	H\cdot a = (-1)^{n - 2k + 2(2\abs a_{e_1} - 1) + \dotsb + 2(2\abs a_{e_m} - 1)}\frac{a}{2}.
\end{equation}
The $a$ which minimize the eigenvalue are those whose component in $\C_e^{1\mid 1}$ is odd if $t(e) = 0$ and odd if
$t(e) = 1$; these form a one-dimensional vector space with parity $n-m$.
\end{proof}
\begin{prop}
\label{paritymodule}
Let
\[\varepsilon\coloneqq\prod_{v\in\Delta^0(M)} d_vc_v\in\ClHam.\]
(The Clifford relations imply this doesn't depend on the order of the vertices in the product.) Then
\begin{itemize}
	\item on $\cH$, $\varepsilon$ acts on a homogeneous degree-$k$ element by multiplication by $(-1)^{n-k}$, and
	\item on $A$, $\varepsilon$ acts on a homogeneous degree-$k$ element by multiplication by $(-1)^{k-1}$.
\end{itemize}
\end{prop}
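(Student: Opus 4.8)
The plan is to treat the two bullets separately, analyzing $\varepsilon$ first on the Fock space $\cH$ and then, through the edge-wise decomposition $\ClHam\cong\bigotimes_{e\in\Delta^1(M)}\Cl(\partial e,\fo_e)$, on the module $A$.

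For the action on $\cH$, I would begin by rewriting each factor in terms of creation and annihilation operators. Using $\e_v^2=\iota_v^2=0$ together with the canonical anticommutation relation $\e_v\iota_v+\iota_v\e_v=\id$ (equivalently $c_v^2=\id$), one gets
\[
	d_vc_v=(\e_v-\iota_v)(\e_v+\iota_v)=\e_v\iota_v-\iota_v\e_v=2\e_v\iota_v-\id,
\]
which is diagonal in the basis of $\delta$-monomials: it acts as $+\id$ on a monomial containing $\delta_v$ and as $-\id$ on one that does not. Each $d_vc_v$ is even, so these operators commute — this is also why $\varepsilon$ does not depend on the order of the product — and hence $\varepsilon$ is diagonal in the monomial basis; on a monomial of degree $k$ it multiplies by $+1$ once for each of the $k$ vertices appearing in it and by $-1$ once for each of the remaining $n-k$ vertices, giving $(-1)^{n-k}$.

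For the action on $A$ the key input is the location of $c_v$ and $d_v$ inside the tensor decomposition set up around \cref{groundA}: for an edge $e$ with $\partial e=v-w$, the factor $\Cl(\partial e,\fo_e)$ is generated by $c_v$ (the generator of square $+1$, at the positively-oriented endpoint) and by $d_w$ (the generator of square $-1$), and it acts on $\C_e^{1\mid 1}$ so that $g_e:=c_vd_w$ is the grading operator, scaling a homogeneous vector of $e$-degree $j$ by $(-1)^j$. Walking once around the oriented circle $M$, each vertex $v$ is the head of exactly one edge and the tail of one other edge, so $c_v$ lies in one tensor factor and $d_v$ in a different one, and $d_vc_v$ is never supported on a single factor. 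I would then reorganize the (order-independent) product $\varepsilon=\prod_v d_vc_v$, using that the even factors $d_vc_v$ commute with one another and that generators lying in distinct tensor factors anticommute, so as to collect inside each edge-factor $e$ its two generators into $\pm g_e$. Keeping track of the total sign amounts to computing the sign of an explicit permutation of the $2n$ Clifford generators; this permutation turns out to be a single $2n$-cycle, hence has sign $(-1)^{2n-1}=-1$, so that $\varepsilon=-\prod_{e\in\Delta^1(M)} g_e$ as operators on $A$. Since the $\Z/2$-grading is additive over the tensor factors, $\prod_e g_e$ scales a homogeneous degree-$k$ element by $(-1)^k$, and therefore $\varepsilon$ scales it by $-(-1)^k=(-1)^{k-1}$, as claimed. (The degenerate case $n=1$, where the single loop edge carries both $c_v$ and $d_v$, reduces to $\varepsilon=d_vc_v=-c_vd_v=-g_e$ and gives the same answer; alternatively one refines the triangulation to assume $n\ge 2$.)

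The first bullet is routine. The only real work is the sign bookkeeping in the second: one must be careful about the difference between $c_vd_w$ and $d_wc_v$, about which endpoint of each edge is the head, and about the anticommutation signs incurred when shuffling generators between tensor factors. I expect that permutation-sign computation to be the one genuine obstacle, though it is mechanical; working out $n=2$ and $n=3$ makes the uniform answer $\varepsilon=-\prod_e g_e$ evident, and either a cycle-structure argument or an induction on the number of vertices then settles the general case.
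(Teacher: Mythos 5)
Your proof is correct and follows essentially the same route as the paper's: the first bullet via the diagonal action of $\e_v\iota_v-\iota_v\e_v$ on $\delta$-monomials, and the second by reordering the $2n$ pairwise-anticommuting Majorana generators to exhibit $\varepsilon=-\prod_e g_e$, with the overall sign $-1$ obtained as a permutation sign (the paper computes it in three steps --- flipping each $d_vc_v$, reversing the word, and cycling $c_{v_1}$ to the front --- rather than as a single $2n$-cycle, but this is the same bookkeeping).
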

\begin{proof}
On $\cH$, $\varepsilon$ acts as
\begin{equation}
	(\varepsilon\cdot) = \prod_{v\in\Delta^0(M)} (\e_v - \iota_v)(\e_v + \iota_v) = \prod_v (\e_v\iota_v -
	\iota_v\e_v).
\end{equation}
It suffices to understand how this acts on pure wedges $\omega =
\lambda\delta_{v_{i_1}}\wedge\dotsb\wedge\delta_{v_{i_\ell}}$. On $\omega$, $\e_v\iota_v - \iota_v\e_v$ by the
identity if $v = v_{i_j}$ for some $j$, and by $-1$ otherwise. Therefore $\varepsilon\cdot\omega =
(-1)^{n-k}\omega$.

To compute the action of $\varepsilon$ on $A$, we rearrange it into a more convenient form. Choose a
$v_1\in\Delta^0(M)$, and let $v_2,\dotsc,v_n$ be the vertices encountered in order as one traverses the positively
oriented path around $M$ starting at $v_1$. Thus for each $i$, there's an edge $e_i$ with $\partial e_i =
v_{i+1\bmod n} - v_i$. Then,
\begin{align}
	\epsilon &= d_{v_1}c_{v_1}\dotsm d_{v_n}c_{v_n} = (-1)^n c_{v_1}d_{v_1}\dotsm c_{v_n}d_{v_n}.
\intertext{Since this string has $n$ letters, reversing it is a permutation of parity $(-1)^n$:}
	&= d_{v_n}c_{v_n}\dotsm d_{v_1}c_{v_1}.
\intertext{Finally, we commute $c_{v_1}$ past the remaining $2n-1$ operators:}
	&= -\underbracket{c_{v_1}d_{v_n}}_{g_n}{c_{v_n}}\dotsm \underbracket{c_3d_2}_{g_2} \underbracket{c_2d_1}_{g_1}.
\end{align}
Therefore $\varepsilon$ acts by $-1$ times the usual grading operator on $\C^{n\mid n}$ (i.e.\ the one which is
$-1$ on odd states).
\end{proof}
\begin{cor}
As graded $\ClHam$-modules, $\cH\cong\Pi^{n-1}A$, so the ground states of the Majorana chain on $M$ are
\begin{itemize}
	\item an even line if $m$ is odd (so $M\cong\Sb^1$), and
	\item an odd line if $m$ is even (so $M\cong\Snb^1$).
\end{itemize}
\end{cor}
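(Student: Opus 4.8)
The plan is to assemble the three preceding results, \cref{cliffrepn}, \cref{groundA}, and \cref{paritymodule}, together with \cref{mtospin}. First I would note that both $\cH$ and $A$ are $2^n$-dimensional graded modules over $\ClHam\cong\Cl_{n,n}$, so by the supermodule classification in \cref{cliffrepn} each is isomorphic, as a graded $\ClHam$-module, to one of the two irreducibles $A$ or $\Pi A$; in particular $\cH\cong\Pi^j A$ for some $j\in\Z/2$. The task is then to pin down $j$.

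To do this I would use the element $\varepsilon=\prod_{v}d_vc_v\in\ClHam$ as a discriminant between the two graded modules. By \cref{paritymodule}, $\varepsilon$ acts on the even subspace of $\cH$ by the scalar $(-1)^n$ and on the even subspace of $A$ by $-1$; since $\varepsilon$ acts on the even subspace of $\Pi A$ by $+1$, the graded modules $A$ and $\Pi A$ are distinguished precisely by this scalar. Comparing $(-1)^n$ with the parity of $n-1$ then forces $\cH\cong\Pi^{n-1}A$: when $n$ is odd, $\varepsilon$ acts by $-1$ on the even part, matching $A=\Pi^{n-1}A$, and when $n$ is even it acts by $+1$, matching $\Pi A=\Pi^{n-1}A$.

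With the graded isomorphism type of $\cH$ in hand I would finish as follows. The Hamiltonian $H$ lies in $\ClHam$, so an isomorphism $\cH\cong\Pi^{n-1}A$ of $\ClHam$-modules intertwines its action on the two sides, and hence carries the ground-state space of $H$ on $\cH$ to the parity shift $\Pi^{n-1}$ of the ground-state space of $H$ on $A$. By \cref{groundA} the latter is one-dimensional of parity $n-m\bmod 2$, so the ground states of the Majorana chain on $M$ form a one-dimensional space of parity $(n-1)+(n-m)\equiv m+1\pmod 2$, i.e.\ an even line when $m$ is odd and an odd line when $m$ is even. Translating $m$ back into the \pinm structure on $M$ via \cref{mtospin} ($m$ odd gives $M\cong\Sb^1$, $m$ even gives $M\cong\Snb^1$) yields the stated dictionary.

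The only genuinely new step is the observation that the graded isomorphism class of an irreducible $\ClHam$-supermodule is detected by the scalar by which $\varepsilon$ acts on its even part; everything else is already packaged in \cref{groundA} and \cref{paritymodule}. Accordingly, the main thing to be careful about is the parity bookkeeping—checking that the composite of the two parity shifts is $\Pi^{(n-1)+(n-m)}=\Pi^{m+1}$, and in particular that the $n$-dependence cancels, so that the answer depends only on the combinatorial invariant $m$, as it must.
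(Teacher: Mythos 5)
Your argument is correct and follows the paper's own proof: combine \cref{groundA} (ground states in $A$ have parity $n-m$) with \cref{paritymodule} (the parities of $\cH$ and $A$ differ by $n-1$) to get parity $m-1$, then translate via \cref{mtospin}. The only difference is that you spell out explicitly why the action of $\varepsilon$ on the even subspace distinguishes $A$ from $\Pi A$ and hence pins down $\cH\cong\Pi^{n-1}A$, a step the paper leaves implicit.
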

\begin{proof}
In \cref{groundA}, we saw that the ground states of $H$ acting on $A$ have parity $n-m\bmod 2$, but by
\cref{paritymodule} the difference in the parities of $\cH$ and $A$ is $n-1\bmod 2$. Hence the ground state space
of $H$ acting on $\cH$ has parity $n-m-(n-1) = m-1$.
\end{proof}
\begin{cor}
Assuming \cref{bigansatz}, the low-energy TQFT $Z$ of the Majorana chain is a generator of the $\Z/8$ of
deformation classes of reflection positive \pinm invertible field theories. In particular, its deformation class is an odd multiple of
the class of the Arf-Brown theory.
\end{cor}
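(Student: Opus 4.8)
The plan is to combine the preceding corollary with part (1) of \cref{bigansatz} and the classification of deformation classes from \S\ref{deformation_classes}. By the preceding corollary, the space of ground states of the Majorana chain on $\Snb^1$ is an odd line, so part (1) of \cref{bigansatz} gives that $Z(\Snb^1)$ is an odd line.

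It then remains to see that this value forces $Z$ to be a generator of the group $\Z/8$ of deformation classes of reflection positive $2$-dimensional \pinm{} invertible field theories. Assigning to such a theory $W$ the parity $p(W)\in\Z/2$ of the super line $W(\Snb^1)$ yields an invariant of the deformation class (it is discrete, hence constant along a path of theories) which is additive under stacking, since $(W\otimes W')(\Snb^1)\cong W(\Snb^1)\otimes W'(\Snb^1)$; thus $p$ is a group homomorphism $\Z/8\to\Z/2$. The proposition computing the values of the Arf-Brown theory shows $Z_{AB}(\Snb^1)$ is an odd line, i.e.\ $p(Z_{AB})=1$, and $Z_{AB}$ generates $\Z/8$; hence $p$ is surjective, so it is reduction mod $2$ and $p^{-1}(1)=\{1,3,5,7\}$ is precisely the set of generators. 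Since $p(Z)=1$, the deformation class of $Z$ is a generator, that is, an odd multiple of the class of the Arf-Brown theory.

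I do not expect any real obstacle beyond the preceding corollary, which carries all the content: once the parity of the ground-state line on $\Snb^1$ is known, the passage from ``odd line on $\Snb^1$'' to ``generator of $\Z/8$'' is the formal argument above, whose only further input is the already-established value $Z_{AB}(\Snb^1)$. The limitation, flagged before the statement, is that this argument cannot distinguish the four generators $\{1,3,5,7\}\subset\Z/8$: they all assign an even line to $\Sb^1$ and agree on the \pinm{} tori and Klein bottles, which together with $\Snb^1$ exhaust the closed manifolds on which \cref{bigansatz} lets us evaluate $Z$.
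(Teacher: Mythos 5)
Your reduction from ``$Z(\Snb^1)$ is an odd line'' to ``the class of $Z$ is odd in $\Z/8$'' is essentially the paper's: the paper writes $Z\simeq (Z_\AB)^{\otimes k}$ and notes that the parity of $(Z_\AB)^{\otimes k}(\Snb^1)$ equals the parity of $k$, which is your homomorphism $p$ in different clothing. However, there is a genuine gap earlier in your argument: you never establish that $Z$ is \emph{invertible}, i.e.\ that it defines a class in the $\Z/8$ at all. \Cref{bigansatz} only produces a fully extended, reflection positive \pinm{} TQFT whose state spaces are the ground-state spaces; it does not assert invertibility, and a general such theory need not be invertible (Dijkgraaf--Witten theory is the paper's standing example of a non-invertible TQFT). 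Your sentence ``it remains to see that this value forces $Z$ to be a generator of the group $\Z/8$ of deformation classes of reflection positive invertible field theories'' presupposes membership in that group, which is exactly the point at issue. The paper supplies the missing step by invoking Schommer-Pries's ``tori detect invertibility'' theorem: since $S^2$ admits a \pinm{} structure and $Z(\Sb^1)$, $Z(\Snb^1)$ are both invertible objects of $\sVect$ (they are super lines, by \cref{bigansatz} together with the ground-state computation), the theory $Z$ is invertible. Once that is in place, the rest of your argument goes through and agrees with the paper's, including the observation that the four generators cannot be distinguished by the data available from \cref{bigansatz}.
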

\begin{proof}
By a result of Schommer-Pries~\cite[Theorem 11.1]{SchInvertible}, we know $Z$ is invertible, since there is a \pinm
structure on $S^2$ and $Z(\Sb^1)$ and $Z(\Snb^1)$ are both invertible in $\sVect$. Since $Z_\AB$ generates the
$\Z/8$ of deformation classes of reflection positive 2D \pinm invertible TQFTs, $Z$ is deformation equivalent to
$(Z_{\AB})^{\otimes k}$ for some $k$, and is a generator iff $k$ is odd.

Because $Z_\AB(\Snb^1)$ is an odd line, then $(Z_\AB)^{\otimes k}(\Snb^1)$ has the same parity as $k$. Since
$Z(\Snb^1)$ is odd, then $k$ is odd.
\end{proof}
We can also study the Majorana chain on \pinm{} 1-manifolds with boundary, though again the Hamiltonian depends on
an orientation. Kitaev~\cite{Kit01} found that the space of ground states on an interval $I$ is two-dimensional;
from the low-energy perspective, this follows from the fact that for any choice of \pinm{} structure on $I$, $Z(I)$
is isomorphic to $\Cl_1$ as a $(\Cl_1,\Cl_1)$-bimodule. We can also see this directly from the lattice.

Suppose $n\coloneqq\abs{\Delta^0(I)}$. Orient $I$ and let $\partial I = v - w$. Then, $c_w$ and $d_v$ do not appear
in the Hamiltonian on $I$. Since each term in $H$ is $\pm 1/2$ times two Clifford generators not equal to $c_w$ or
$d_v$, both $c_w$ and $d_v$ commute with $H$, and therefore the algebra they generate, isomorphic to $\Cl_{1,1}$,
acts on all eigenspaces of $H$. In particular, if $V$ denotes the ground states of $H$, $V$ is a
$\Cl_{1,1}$-module, and by \cref{cliffrepn} is determined up to isomorphism by its dimension, which is even.

We can identify it with $\Cl_1$ in a manner similar to the proof of \cref{groundA}: define $A$ in the same manner
as above, except that we pretend there's an extra edge $e_\partial$ joining $v$ and $w$, so $A$ is a
$\Cl(I_0',\fo)$-module, where $I_0'$ is the $0$-skeleton of the orientation double cover $I'\to I$ and $\fo\colon
I_0'\to\set{\pm 1}$ is induced from the orientation as before. If $H_{S^1}$ denotes the Hamiltonian
from~\eqref{HS1} (for the circle), then our Hamiltonian is
\begin{equation}
	H = H_{S^1} - \id\otimes\dotsb\otimes\id\otimes g_{e_\partial},
\end{equation}
(where $t(e_\partial) \coloneqq 0$), whose action on a pure tensor of homogeneous elements $a\in A$ is
\begin{equation}
	H\cdot a = (-1)^{n - 2k + 2(2\abs a_{e_1} - 1) + \dotsb + 2(2\abs a_{e_m} - 1) + \abs
	a_{e_\partial}}\frac{a}{2}.
\end{equation}
Thus the ground state is two-dimensional, spanned by a pure tensor whose components are odd for all edges with
$t(e) = 0$ and even otherwise, and a pure tensor whose components are odd for all edges with $t(e) = 0$ except
$e_\partial$, and even otherwise. Since $\Cl_1$ is the unique two-dimensional irreducible
(ungraded) $\Cl_{1,1}$-representation up to isomorphism, the space of ground states on $I$ is isomorphic to either
$\Cl_1$ or $\Pi\Cl_1$. An argument with \cref{paritymodule} shows that we get the former. Finally, to match the
left $\Cl_{1,1}$-module description of the space of ground states with the $(\Cl_1,\Cl_1)$-bimodule description of
$Z(I)$, recall that a left $\Cl_{-1}$-action on a module $M$ is equivalent data to a right $\Cl_1$-action on $M$,
which implies the space of ground states on $I$ is $\Cl_1$ as a $(\Cl_1,\Cl_1)$-bimodule, in accordance with the
calculation using the low-energy TQFT.

\newcommand{\etalchar}[1]{$^{#1}$}

\end{document}